\documentclass[superscriptaddress,amsmath,amssymb,aps,pra,letterpaper,tightenlines,notitlepages,12pt]{revtex4-1}

\pdfoutput=1

\usepackage{amsmath}
\usepackage{amsfonts}
\usepackage{amssymb}
\usepackage{mathtools}
\usepackage{graphicx}
\usepackage{hyperref}
\usepackage{xfrac}
\usepackage{xcolor}

\setcounter{MaxMatrixCols}{30}

\providecommand{\U}[1]{\protect\rule{.1in}{.1in}}

\hypersetup{colorlinks=true,citecolor=blue,linkcolor=blue,filecolor=blue,urlcolor=blue,breaklinks=true}
\newtheorem{theorem}{Theorem}

\newtheorem{corollary}[theorem]{Corollary}

\newtheorem{definition}[theorem]{Definition}

\newtheorem{lemma}[theorem]{Lemma}

\newtheorem{remark}[theorem]{Remark}

\newenvironment{proof}[1][Proof]{\noindent\textbf{#1.} }{\ \rule{0.5em}{0.5em}}

\newcommand{\red}[1]{{\color{black}{#1}}}

\newcommand{\ket}[1]{\lvert #1 \rangle}
\newcommand{\bra}[1]{\langle #1 \rvert}

\allowdisplaybreaks

\begin{document}

\title{
Sample-based Hamiltonian and Lindbladian simulation: Non-asymptotic analysis of sample complexity}

\author{Byeongseon Go}
\affiliation{Department of Physics and Astronomy, Seoul National University, Seoul 08826, Republic of Korea}
\author{Hyukjoon Kwon}
\affiliation{School of Computational Sciences, Korea Institute for Advanced Study, Seoul 02455, Republic of Korea}
\author{Siheon Park}
\affiliation{Department of Physics and Astronomy, Seoul National University, Seoul 08826, Republic of Korea}
\author{Dhrumil Patel}
\affiliation{Department of Computer Science, Cornell University, Ithaca, New York 14850, USA}
\author{Mark M. Wilde}
\affiliation{School of Electrical and Computer Engineering, Cornell University, Ithaca, New York 14850, USA}

\begin{abstract}
Density matrix exponentiation (DME) is a quantum algorithm that processes multiple copies of a program state $\sigma$ to realize the Hamiltonian evolution $e^{-i \sigma t}$.
Wave matrix Lindbladization (WML) similarly processes multiple copies of a program state $\psi_L$ in order to realize a Lindbladian evolution. Both algorithms are prototypical sample-based quantum algorithms and can be used for various quantum information processing tasks, including quantum principal component analysis, Hamiltonian simulation, and Lindbladian simulation. In this work, we present detailed sample complexity analyses for DME and sample-based Hamiltonian simulation, as well as for WML and sample-based Lindbladian simulation. In particular, we prove that the sample complexity of DME is no larger than $4t^2/\varepsilon$, where~$t$ is the desired evolution time and~$\varepsilon$ is the desired imprecision level, as quantified by the normalized diamond distance. We also establish a fundamental lower bound on the sample complexity of sample-based Hamiltonian simulation, which matches our DME sample complexity bound up to a constant multiplicative factor, thereby proving that DME is optimal for sample-based Hamiltonian simulation. Additionally, we prove that the sample complexity of WML is no larger than $3t^2d^2/\varepsilon$, where $d$ is the dimension of the space on which the Lindblad operator acts nontrivially, and we prove a lower bound of $10^{-4} t^2/\varepsilon$ on the sample complexity of sample-based Lindbladian simulation. These results prove that WML is optimal for sample-based Lindbladian simulation whenever the Lindblad operator acts nontrivially on a constant-sized system.  Finally, we point out that the DME sample complexity analysis in [Kimmel \textit{et al}., npj Quantum Information 3, 13 (2017)] and the WML sample complexity analysis in [Patel and Wilde, Open Systems \& Information Dynamics 30, 2350010 (2023)] appear to be incomplete, highlighting the need for the results presented here.
\end{abstract}

\maketitle

\tableofcontents

\section{Introduction}

\subsection{Density matrix exponentiation}

Density matrix exponentiation (DME)
is a quantum algorithm that exponentiates a quantum state when multiple copies of it are available~\cite{lloyd2014quantum}. 
More specifically, given an unknown input quantum state $\rho$ and a sufficient number of copies of a ``program'' quantum state~$\sigma$, DME approximately implements the unitary evolution $e^{-i\sigma t}$ on $\rho$ for a given evolution time $t$. DME is thus a particular method for achieving the task of sample-based Hamiltonian simulation~\cite{Kimmel2017}, in which one realizes the unitary transformation $e^{-i\sigma t}$ approximately by consuming copies of a program state $\sigma$.

This exponentiation of quantum states offers a powerful tool for various quantum information processing tasks.
As initially proposed in Ref.~\cite{lloyd2014quantum}, it can be employed for quantum principal component analysis~\cite{li2021resonant, xin2021experimental, tang2021quantum, huang2022quantum}. 
Since the process of DME is independent of the particular states $\rho$ and $\sigma$, and the number of copies of the program state $\sigma$ for DME to achieve a desired precision level does not explicitly depend on its dimension, this gives an exponential advantage for quantum principal component analysis. 
Also, DME offers a sample-based strategy for Hamiltonian simulation~\cite{Kimmel2017}.
Specifically, if the Hamiltonian to be simulated can be encoded in a program state and multiple copies of it are available, one can simulate the dynamics induced by the corresponding Hamiltonian by using copies of the program state. 
Furthermore, DME can be employed in other tasks, such as block-encoding a density matrix when given access to its samples (as noted in~\cite{GLM+22} and investigated in~\cite{GP22, WZ23, WZ24}), quantum machine learning~\cite{lloyd2014quantum, rebentrost2014quantum},
and revealing entanglement spectra~\cite{pichler2016measurement}.
Along with these applications, there has been experimental progress in implementing DME~\cite{kjaergaard2022demonstration}.

The essence of DME is that it becomes more accurate as the number of program states (i.e., sample number) increases.
Here, a crucial question is to determine how many copies of the program state are required to realize DME up to the desired imprecision level, 
which is also referred to as the sample complexity of DME~\cite{Kimmel2017} (see also~\cite{arunachalam2018optimal, canonne2022topics} for the notion of sample complexity more generally).
Refs.~\cite{lloyd2014quantum, Kimmel2017} claimed that in the asymptotic limit, the sample complexity of DME is given by $O(t^2/\varepsilon)$, in order to achieve a desired evolution time $t \geq 0$ and imprecision level $\varepsilon \in [0,1]$.
Ref.~\cite{Kimmel2017} also showed that this scaling behavior of the sample complexity is optimal, such that the sample complexity of an arbitrary protocol for sample-based Hamiltonian simulation cannot scale as $o(t^2/\varepsilon)$. 
However, as argued in Appendix~\ref{appendix: incomplete} of our paper, the previous proof from~\cite[Supplementary Information Section~A]{Kimmel2017}, for the upper bound of DME sample complexity, is incomplete. That is, the proof given in~\cite[Supplementary Information Section~A]{Kimmel2017} does not give a strict bound for an arbitrary evolution time. 

In this paper, we rigorously establish the sample complexity of DME, in terms of the desired imprecision level $\varepsilon$ and evolution time $t$. 
More precisely, we show that the sample complexity of DME is no larger than $4t^2/\varepsilon$ (Theorem~\ref{Thm: error bound}), where $t$ is the desired evolution time and $\varepsilon$ is the desired imprecision level, as quantified by the normalized diamond distance. A sample complexity analysis for DME was previously presented in~\cite{wei2024simulating}, in the context of a more general task called Hermitian-preserving map exponentiation, which aims to simulate the exponential of the output of a Hermitian-preserving map. As discussed in more detail in our paper, our results offer a slight improvement over the previous bounds given in~\cite{wei2024simulating}.
We also investigate a fundamental lower bound on the sample complexity of an arbitrary protocol for sample-based Hamiltonian simulation by employing the notion of zero-error query complexity and appealing to prior results of~\cite{acin2001statistical} (see Theorem~\ref{Thm: optimality} and Lemma~\ref{lem: lower bound of sample complexity}).
This fundamental lower bound shows that the sample complexity of DME is optimal even in the non-asymptotic regime, up to a multiplicative constant factor.

\subsection{Wave matrix Lindbladization}
A natural analogue to DME, for the task of sample-based Lindbladian simulation, is known as Wave Matrix Lindbladization (WML)~\cite{patel2023wave1, patel2023wave2}. In WML, one is given access to an input quantum state $\rho$ and multiple copies of a program state $|\psi_L\rangle$ that encodes a Lindblad operator $L$ as follows: 
\begin{equation}
    \label{eq:WML-program-state}
    \ket{\psi_L} \coloneqq  (L \otimes I) \ket{\Gamma},
\end{equation}
where $|\Gamma\rangle \coloneqq \sum_{j} |j\rangle \otimes |j\rangle$ denotes the (unnormalized) maximally entangled vector. The goal is to  implement the Lindbladian evolution $e^{ \mathcal{L} t}$ approximately for a desired evolution time $t \geq 0$, where the Lindbladian $\mathcal{L}$ is defined as follows:
\begin{equation}
\label{eq:lindblad_dynamics}
\mathcal{L}(\rho) \coloneqq L \rho L^\dagger - \frac{1}{2} \left\{L^\dagger L, \rho \right\}.
\end{equation}
While more general Lindbladians include both Hamiltonian terms and multiple dissipative terms, our focus in this paper is on the single-Lindblad-operator setting, which serves as the foundation for understanding more general algorithms presented in~\cite{patel2023wave2}.

Similar to DME, the sample complexity of WML is defined as the number of copies of the program state $|\psi_L\rangle$ required to realize the target quantum channel $e^{ \mathcal{L} t}$ up to a desired imprecision level $\varepsilon \in [0, 1]$. Ref.~\cite{patel2023wave1} established that, in the asymptotic limit and under the assumption that $L$ is a local operator acting on a constant number of qubits, the sample complexity of WML scales as $O(t^2/\varepsilon)$. Although this locality assumption is valid for most physically relevant systems, a comprehensive analysis for arbitrary (not necessarily local) Lindbladians is crucial for a more complete theoretical understanding. Moreover, Ref.~\cite{patel2023wave1} does not address fundamental lower bounds on the sample complexity of WML, leaving open the question of its optimality.

In this paper, we rigorously establish the sample complexity of WML for general single-operator Lindbladians. We show that the sample complexity is bounded from above by $3t^2d^2/\varepsilon$, where $d$ is the dimension of the system on which the Lindblad operator acts nontrivially. Furthermore, we derive a fundamental lower bound on the sample complexity of any protocol for sample-based Lindbladian simulation, using techniques analogous to those developed for the DME setting. These results imply that the WML protocol is sample optimal up to constant factors in the case when the Lindblad acts nontrivially on a constant-sized system.

Using the lower bounds for sample-based Hamiltonian and sample-based local Lindbladian simulation, we also derive a lower bound for general Lindbladians that include both types of terms. This shows that the sampling-based WML algorithm from Ref.~\cite{patel2023wave2} for simulating general Lindbladians is sample-optimal up to constant factors for the case when the Lindblad operators act locally.

\subsection{Paper organization}
Our paper is organized as follows. 
In Section~\ref{sec:background}, we establish notation, review sample-based Hamiltonian simulation and DME, and also review sample-based Lindbladian simulation and WML. In Section~\ref{sec:sample-complex-def}, we provide a formal definition of the sample complexity of sample-based Hamiltonian and Lindbladian simulation.
In Section~\ref{sec:upper-bound-DME}, we establish an upper bound on the sample complexity of DME, and thus on the sample complexity of sample-based Hamiltonian simulation. In Section~\ref{sec:lower-bnd-samp-complex-SBHS}, we establish a lower bound on the sample complexity of sample-based Hamiltonian simulation.
In Section~\ref{sec:upper-bound-WML}, we establish an upper bound on the sample complexity of WML, and thus on the sample complexity of sample-based Lindbladian simulation. In Section~\ref{sec:lower-bound-SBLS}, we establish a lower bound on the sample complexity of sample-based Lindbladian simulation.
Table~\ref{tab:summary-of-results} summarizes some of the main results of our paper.
Finally, in Section~\ref{sec:conclusion}, we provide some concluding remarks and directions for future research.

\renewcommand{\arraystretch}{1.5}

\begin{table}
    \centering
    \begin{tabular}{|c|c|c|}
    \hline\hline
         &  Hamiltonian simulation & Lindbladian simulation\\
         \hline\hline
         Target dynamics & $\rho \to e^{-i\sigma t} \rho e^{i \sigma t}$ & $\rho \to e^{\mathcal{L}t}(\rho)$\\
         \hline
         Algorithm & DME &  WML\\
         \hline
         Sample complexity upper bound & $4\left(\frac{t^2}{\varepsilon}\right)  $ (Corollary~\ref{cor:samp-comp-upper-bnd}) & $3d^2 \left(\frac{t^2}{\varepsilon} \right)$ (Corollary~\ref{cor:samp-comp-upper-bnd-WML})\\
         \hline
         Sample complexity lower bound & $\frac{32}{1000} \left(\frac{t^2}{\varepsilon} \right)$ (Theorem~\ref{Thm: optimality}) & $10^{-4} \left(\frac{t^2}{\varepsilon} \right)$ (Theorem~\ref{thm:lower_bound_Lindbladian}) \\
         \hline\hline
    \end{tabular}
    \caption{This table highlights some of the main results of our paper for sample-based Hamiltonian and Lindbladian simulation. DME stands for density matrix exponentiation \cite{lloyd2014quantum} and WML for wave matrix Lindbladization \cite{patel2023wave1}. The table entries list our sample complexity results in terms of the simulation time $t\geq 0$ and the desired error $\varepsilon \in (0,1]$, the latter defined with respect to normalized diamond distance. The variable $d$ is the dimension of the system on which the Lindblad operator acts non-trivially. The table indicates that, up to constant factors, DME is optimal for sample-based Hamiltonian simulation, and WML is optimal for sample-based Lindbladian simulation if $d$ is constant.}
    \label{tab:summary-of-results}
\end{table}

\section{Background}

\label{sec:background}

\subsection{Notation}

This section introduces some notation that will be used throughout the rest of our paper. Let $\mathcal{H}_{S}$ denote the Hilbert space corresponding to a quantum system $S$. 
Let $\mathcal{D}(\mathcal{H}_{S})$ denote the set of quantum states (density operators) acting on the Hilbert space $\mathcal{H}_{S}$. 
We sometimes use the notation $\mathcal{H}^{d}$ to denote a Hilbert space in terms of its dimension $d$ and the notation $\mathcal{D}(\mathcal{H}^{d})$ to denote the set of $d$-dimensional quantum states. 

Let $\operatorname{Tr}[X]$ denote the trace of a matrix $X$, and let $X^{\dag}$ denote the Hermitian conjugate of the matrix $X$. 
For $p \in [1,\infty)$, the Schatten $p$-norm of the matrix $X$ is defined as
\begin{align}\label{eq: schatten-p-norm}
    \left\|X\right\|_{p} \coloneq \left( \operatorname{Tr}\!\left[\left(XX^{\dag}\right)^{\frac{p}{2}} \right]\right)^{\frac{1}{p}}.
\end{align}
Throughout this work, we use $\left\| \cdot \right\|_{1}$, which is referred to as the trace norm, and $\left\| \cdot\right \|_{\infty}$, which is referred to as the operator norm.

To quantify the distance (or closeness) between two quantum states $\rho,\sigma \in \mathcal{D}(\mathcal{H}_{S})$, we use the normalized trace distance, which is equal to the normalized trace norm (i.e., $p = 1$ in~\eqref{eq: schatten-p-norm}) of their difference, such that 
\begin{align}
    \frac{1}{2}\left\| \rho - \sigma \right\|_{1} . 
\end{align}
This quantity is equal to the maximum difference in the probabilities that an arbitrary measurement operator can assign to these two states~\cite[Eq.~(9.22)]{nielsen2010quantum}. 
Note that the multiplicative factor $\frac{1}{2}$ above guarantees that $\frac{1}{2}\left\| \rho - \sigma \right\|_{1} \in [0,1]$ for any given quantum states $\rho$ and $\sigma$. 
Throughout this work, the normalization factor $\frac{1}{2}$ will sometimes be omitted when using the trace distance.
We also use the fidelity of quantum states, which quantifies the closeness between two quantum states.
Specifically, the fidelity $F(\rho, \sigma)$ between two quantum states $\rho,\sigma$ is defined as~\cite{uhlmann1976transition}
\begin{align}\label{eq: fidelity definition}
    F(\rho, \sigma) \coloneqq \left\| \sqrt{\rho} \sqrt{\sigma} \right\|_1^2 = \operatorname{Tr}\!\left[\sqrt{\sqrt{\rho}\sigma\sqrt{\rho}}\right]^2.
\end{align}

To quantify the distance between two quantum channels (completely positive and trace-preserving maps), we use the normalized diamond distance, which is defined for two quantum channels $\mathcal{N}$ and $\mathcal{M}$ as~\cite{kitaev1997quantum}
\begin{align}
    \frac{1}{2}\left\| \mathcal{N} - \mathcal{M} \right\|_{\diamond}  \coloneqq \frac{1}{2}\sup_{\rho \in \mathcal{D}\left(\mathcal{H}_{R}\otimes \mathcal{H}_{S}\right)}\left\|(\mathcal{I}_{R}\otimes\mathcal{N})(\rho) - (\mathcal{I}_{R}\otimes\mathcal{M})(\rho) \right\|_{1} ,
\end{align}
where $R$ denotes a reference system with an arbitrarily large Hilbert space dimension and $\mathcal{I}_{R}$ denotes the identity channel acting on the reference Hilbert space $\mathcal{H}_{R}$. 
An important point is that, by definition, the dimension of $R$ can be arbitrarily large, but one can place a bound on the dimension of $R$ equal to the dimension of $S$~\cite[Theorem~9.1.1]{wilde2017QuantumInformationTheory}.
Also, note that the normalized diamond distance between two quantum channels involves an optimization of the normalized trace distance between two arbitrary quantum states over $\mathcal{D}(\mathcal{H}_{R}\otimes \mathcal{H}_{S})$, and accordingly, it is guaranteed that $\frac{1}{2}\left\| \mathcal{N} - \mathcal{M} \right\|_{\diamond} \in [0,1]$, due to the normalization factor~$\frac{1}{2}$.

For a quantum state $\rho \in \mathcal{D}(\mathcal{H}_{S_1} \otimes \mathcal{H}_{S_2})$ of systems $S_1$ and $S_2$, we denote the partial trace of $\rho$ over the Hilbert space $\mathcal{H}_{S_2}$ by $\operatorname{Tr}_{S_2}[\rho]$.
Let $\mathbb{I}_{S}\coloneqq \sum_{i}|i\rangle\!\langle i|_{S}$  denote the identity operator acting on system $\mathcal{H}_S$, where $\left\{\ket{i}\right\}_i$ is an orthonormal basis.
We further define the swap operator $\operatorname{SWAP}$ between two systems $\mathcal{H}_{S_1}$ and $\mathcal{H}_{S_2}$ in the following way: 
\begin{align}
    \operatorname{SWAP}_{S_1S_2}\coloneqq\sum_{i,j}|i\rangle\!\langle j|_{S_1} \otimes|j\rangle\!\langle i|_{S_2}.
    \label{eq:def-swap}
\end{align}

\subsection{Review of sample-based Hamiltonian simulation and density matrix exponentiation}

In this section, we review the task of sample-based Hamiltonian simulation and the DME algorithm for achieving this task~\cite{lloyd2014quantum,Kimmel2017}.
The aim of sample-based Hamiltonian simulation is to perform the following task:
on input one copy of an unknown quantum state $\rho$ and $n$ copies of a program  state $\sigma$, implement the unitary operation $e^{-i\sigma t}$ on $\rho$ for evolution time $t \geq 0$ to within imprecision level $\varepsilon \in [0,1]$~\cite{lloyd2014quantum}.
In short, the task is to perform a fixed quantum channel $\mathcal{P}^{(n)}$ (independent  of the program state $\sigma$) such that the following inequality holds:
\begin{align}
    \frac{1}{2} \left \| \mathcal{P}^{(n)}\circ  \mathcal{A}_{\sigma^{\otimes n}}  - \mathcal{U}_{\sigma, t} \right \|_{\diamond} \leq \varepsilon,
\end{align}
where $\mathcal{A}_{\sigma^{\otimes n}}$ denotes the channel that appends the state $\sigma^{\otimes n}$ to the input (i.e., $\mathcal{A}_{\sigma^{\otimes n}}(\rho) \coloneqq \rho \otimes \sigma^{\otimes n}$) and $\mathcal{U}_{\sigma, t}(\cdot)$ denotes  the target unitary channel to be approximated by a sample-based Hamiltonian simulation algorithm, which is the ideal evolution according to the quantum state $\sigma$ for the evolution time $t\geq0$. 
More formally,
\begin{equation}
\label{eq: ideal evolution by t}
\mathcal{U}_{\sigma, t}(\rho)\coloneqq e^{-i\sigma t}\rho e^{i\sigma t}.
\end{equation}
We denote the number of copies of the input program state $\sigma$ by $n$, where $n\in\mathbb{N}$ and satisfies $n>t$.

One can divide the evolution time $t$ by the number of copies $n$, and we denote the resulting fraction as a unit time step $\Delta$; i.e.,  $\Delta\coloneqq\frac{t}{n}$.
By means of the Hadamard lemma (see, e.g.,~\cite[Lemma~19]{Kimmel2017}), the ideal evolution of the state $\rho$ by the Hamiltonian $\sigma$ for a time step $\Delta$ can be expressed as a series in $\Delta$: 
\begin{align}
\mathcal{U}_{\sigma, \Delta}(\rho) & =e^{-i\sigma\Delta}\rho e^{i\sigma\Delta} \label{eq:DME-ideal-one-step-1}\\
& = \rho -i\Delta[\sigma,\rho] -\frac{1}{2!}\Delta^2[\sigma, [\sigma, \rho]] + \cdots 
\\
& = \sum_{j=0}^{\infty} \frac{(-i\Delta)^j}{j!} [\sigma,\rho]_{j},
\label{eq:DME-ideal-one-step}
\end{align}
where we define the nested commutator similarly to~\cite{Kimmel2017} as 
\begin{equation}
[X,Y]_{k}  \coloneqq \underbrace{[X,\dotsb[X,[X}_{k \text { times }}, Y]] \dotsb], \quad \text{while}
\quad [X,Y]_{0}  \coloneqq Y.
\label{eq:def-nested-comm}
\end{equation}
Clearly, $n$ repetitions of the ideal evolution $\mathcal{U}_{\sigma, \Delta}$ lead to the  desired ideal evolution $\mathcal{U}_{\sigma, t}$:
\begin{equation}
\mathcal{U}_{\sigma, t}(\rho)=\mathcal{U}_{\sigma, \Delta}^{n}(\rho).
\label{eq:n-times-ideal}
\end{equation}

The DME algorithm approximates the ideal time evolution $\mathcal{U}_{\sigma, \Delta}(\rho)$ for each time step $\Delta$ in~\eqref{eq:DME-ideal-one-step} by utilizing a single copy of the program state $\sigma$. More specifically, given a quantum state~$\rho$ in the system $S_1$ and a single copy of the program state $\sigma$ in the ancillary system~$S_2$, DME applies the swap Hamiltonian between systems $S_1$ and $S_2$ for the unit time step~$\Delta$, and then discards the ancillary system $S_2$.
In short, for each step, DME realizes the following quantum channel:
\begin{equation}
\widetilde{\mathcal{U}}_{\sigma, \Delta}(\rho_{S_1})\coloneqq\operatorname{Tr}
_{S_2}\!\left[e^{-i\Delta \operatorname{SWAP}_{S_1S_2}}(\rho_{S_1}\otimes\sigma_{S_2})e^{i\Delta \operatorname{SWAP}_{S_1S_2}}\right],
\label{eq:DME-actual-one-step}
\end{equation}
where $\operatorname{SWAP}_{S_1S_2}$ is defined in~\eqref{eq:def-swap}.
For a sufficiently small time step $\Delta$, the quantum channel generated by DME in~\eqref{eq:DME-actual-one-step} is close to the ideal evolution in~\eqref{eq:DME-ideal-one-step}. 
More specifically, as argued in Ref.~\cite{lloyd2014quantum}, the following holds:
\begin{align}\label{eq:DME-actual-one-step-expansion}
    \operatorname{Tr}
_{S_2}\!\left[e^{-i\Delta \operatorname{SWAP}_{S_1S_2}}(\rho_{S_1}\otimes\sigma_{S_2})e^{i\Delta \operatorname{SWAP}_{S_1S_2}}\right]
    = \rho -i\Delta[\sigma, \rho] + O(\Delta^2),
\end{align}
where we neglected the system label $S_1$ on the right-hand side of~\eqref{eq:DME-actual-one-step-expansion} for the sake of brevity. 
Since the right-hand side of~\eqref{eq:DME-actual-one-step-expansion} is equal to the right-hand side of~\eqref{eq:DME-ideal-one-step} up to the first order in $\Delta$, the error of DME for each time step $\Delta$ is asymptotically bounded by the second order in $\Delta$, i.e.,
\begin{equation}\label{eq: error induced for each time step}
\mathcal{U}_{\sigma, \Delta}(\rho) - \widetilde{\mathcal{U}}_{\sigma, \Delta}(\rho) = O(\Delta^2) .
\end{equation}
Hence, for each step over $n$ steps, DME approximates the ideal evolution using a single copy of $\sigma$, up to the imprecision level $O(\Delta^2)$.  
By repeating this process $n$ times (i.e., implementing $\widetilde{\mathcal{U}}_{\sigma, \Delta}^{n}(\rho)$) and thus consuming $n$ program states, DME finally approximates the ideal evolution $\mathcal{U}_{\sigma, t}(\rho)$.

It was argued in~\cite{lloyd2014quantum}, by appealing to the Trotter--Suzuki theory of Hamiltonian simulation, that the total error scales as $O(n \Delta^2) = O(t^2 / n)$, so that $n = O(t^2/\varepsilon)$ copies of the program state are needed to have a total simulation error no larger than $\varepsilon$. The same conclusion was reached in~\cite{Kimmel2017} by means of a different argument. In fact, Ref.~\cite{Kimmel2017} claimed that for every input program state $\sigma$, the diamond distance between $\mathcal{U}_{\sigma, t}$ and $\widetilde{\mathcal{U}}_{\sigma, \Delta}^{n}$ is asymptotically bounded from above by $O(t^2/n)$. 
This indicates that $O(t^2/\varepsilon)$  samples are required to achieve the imprecision level $\varepsilon$.
However, as argued in Appendix~\ref{appendix: incomplete}, the proof of this previous sample complexity bound is incomplete, in the sense that it does not give a strict imprecision bound for an arbitrary evolution time $t$.

\subsection{Review of sample-based Lindbladian simulation and wave matrix Lindbladization}

We now review the task of sample-based Lindbladian simulation and the WML algorithm~\cite{patel2023wave1, patel2023wave2}. The ideal evolution of a state $\rho$ for time $t\geq0$ according to the Lindbladian in~\eqref{eq:lindblad_dynamics} is as follows:
\begin{equation}
    \label{eq: ideal lindblad evolution by t}
    e^{ \mathcal{L} t}(\rho) = \sum_{k=0}^{\infty}\mathcal{L}^{k}(\rho)\frac{t^{k}}{k!}.
\end{equation}
Similar to sample-based Hamiltonian simulation, the task of Lindbladian simulation is to perform a fixed quantum channel ${\cal P}^{(n)}$ on one copy of an arbitrary input quantum state $\rho$ and $n$ copies of a program state $\psi_L$ that implements the Lindbladian evolution $e^{\mathcal{L} t}$, within imprecision level $\varepsilon$:
\begin{align}
    \frac{1}{2} \left \| \mathcal{P}^{(n)} \circ {\cal A}_{\psi_L^{\otimes n}}  - e^{{\cal L} t} \right \|_{\diamond} \leq \varepsilon.
\end{align}

By dividing $t$ into $n$ intervals, each of duration $\Delta\coloneqq\frac{t}{n}$, we have that $e^{ \mathcal{L} t}=\left(  e^{\mathcal{L}\Delta}\right)  ^{n}$. Simulation of one time step according to the wave matrix Lindbladization (WML) algorithm is
equivalent to the following quantum channel:
\begin{equation}
\widetilde{e^{\mathcal{L}\Delta}}
(\rho)\coloneqq\operatorname{Tr}_{23}[e^{\mathcal{M}\Delta}  (\rho\otimes\psi_{L})],\label{eq:WML-actual-one-step}%
\end{equation}
where $\psi_{L} = \ket{\psi_L} \!\bra{\psi_L}$ is the program state defined to be a bipartite state in a $d\times
d$-dimensional Hilbert space:
\begin{align}
|\psi_{L}\rangle & \coloneqq \left(  L\otimes I\right)  |\Gamma\rangle,\\
|\Gamma\rangle & \coloneqq \sum_{i=1}^{d}|i\rangle|i\rangle,
\end{align}
and ${\cal M}$ is a fixed Lindbladian independent of the program state, given as
\begin{align}
\mathcal{M}(\omega)  & \coloneqq M\omega M^{\dag}-\frac{1}{2}\left\{  M^{\dag}%
M,\omega\right\}  , \label{eq:WML-M-op} \\
M  & \coloneqq \frac{1}{\sqrt{d}} \left(  I_{1}\otimes|\Gamma\rangle\!\langle\Gamma|_{23}\right)  \left(
\mathrm{SWAP}_{12}\otimes I_{3}\right).
\end{align}
Without loss of generality, we assume that $\left\| L\right\|_2=1$, so that  $\ket{\psi_L}$ is normalized. 

In Refs.~\cite {patel2023wave1, patel2023wave2}, it was argued that the difference between the ideal and approximate dynamics for a short time step $\Delta$ is bounded as
\begin{equation}
    e^{ \mathcal{L} \Delta}(\rho) - \widetilde{e^{\mathcal{L}\Delta}}(\rho) = O(\Delta^2),
\end{equation}
implying that the WML algorithm simulates the Lindbladian dynamics up to error $\Delta^2$. By repeating this process $n$ times, the total error scales as $O(n\Delta^2) = O(t^2/n)$, so that $n = O(t^2/\varepsilon)$ copies of the program state are needed to have a total simulation error no larger than $\varepsilon$, which is the same as that of Hamiltonian simulation. However, since the proof in \cite{patel2023wave1} followed the same approach as in \cite{Kimmel2017}, it suffers from similar issues as discussed in Appendix~\ref{appendix: incomplete}. In this paper, we clarify this issue and identify a precise upper bound on the sample complexity of WML. \\

\section{Definition of sample complexity of sample-based Hamiltonian and Lindbladian simulation}

\label{sec:sample-complex-def}

The main goal of our paper is to examine the sample complexity of DME \red{and WML, which lead to sample-based Hamiltonian and Lindbladian simulations, respectively.}
More formally, we investigate the precise number of program states, $n$, required to obtain a desired imprecision level $\varepsilon$ in the diamond distance for an evolution time $t$.
As such, one of main goals for DME \red{and WML} is to find the minimum value of $n$ such that the following error bound holds for arbitrarily given $t \geq 0$, $\varepsilon \in [0,1]$, and $\sigma\in \mathcal{D}(\mathcal{H})$:
\begin{equation}
\label{eq:channel_error_definition}
\frac{1}{2} \left\Vert \mathcal{E}_{\sigma, t}-\widetilde{\mathcal{E}}_{\sigma, \Delta}^{n}\right\Vert
_{\diamond}\leq \varepsilon.
\end{equation}
Here, the ideal channel $\mathcal{E}_{\sigma, t}$ corresponds to ${\cal U}_{\sigma, t}$ in~\eqref{eq: ideal evolution by t} for DME and $e^{t {\cal L}}$ with $\sigma = \ket{\psi_L} \bra{\psi_L}$ in~\eqref{eq: ideal lindblad evolution by t} for WML, respectively. Similarly, the approximated channel $\widetilde{\mathcal{E}}_{\sigma, \Delta}$ for a small time interval $\Delta$ corresponds to $\widetilde {\cal U}_{\sigma, \Delta}$ in~\eqref{eq:DME-actual-one-step} for DME and $\widetilde{e^{\mathcal{L}\Delta}}$ in~\eqref{eq:WML-actual-one-step} for WML, respectively.

Let us now formally define the sample complexity, more generally, for sample-based Hamiltonian and Lindbladian simulation, which includes the quantum channel induced by the DME and WML algorithms.
Here, the sample complexity is the minimum number of program states needed to approximate the unitary operation up to a desired imprecision level $\varepsilon$, where now the minimization is over all possible quantum channels.  

\begin{definition}[Sample complexity of
Hamiltonian and Lindbladian simulation]
\label{def: sample complexity}
The sample complexity of sample-based Hamiltonian and Lindbladian simulation is denoted by $n_{d}^{\ast}(t,\varepsilon)$ and is defined as the minimum number of program states, the latter denoted by $\kappa$, required to realize a channel that is $\varepsilon$-distinguishable in normalized diamond distance from the ideal channel
$\mathcal{E}_{\kappa,t}$, as defined in
\eqref{eq:channel_error_definition},
for an arbitrary program
state $\kappa$ of dimension $d$.
Formally, the sample complexity $n_{d}^{\ast}(t,\varepsilon)$ is defined as
\begin{align}
n_{d}^{\ast}(t,\varepsilon) & \coloneqq \inf_{\mathcal{P}^{(n)}\in\rm{CPTP}
}\left\{  n\in\mathbb{N}:\frac{1}{2}\left\Vert \mathcal{P}^{(n)}
\circ\mathcal{A}_{\kappa^{\otimes n}}-\mathcal{E}_{\kappa,t}\right\Vert
_{\diamond}\leq\varepsilon\ \ \forall\kappa\in\mathcal{D}(\mathcal{H}^{d})\right\}
\label{eq:sample-complexity-ham-sim}\\
& =  \inf_{\mathcal{P}^{(n)}\in\rm{CPTP}
}\left\{  n\in\mathbb{N}:\sup_{\kappa\in\mathcal{D}(\mathcal{H}^{d})} \frac{1}{2}\left\Vert \mathcal{P}^{(n)}
\circ\mathcal{A}_{\kappa^{\otimes n}}-\mathcal{E}_{\kappa,t}\right\Vert
_{\diamond}\leq\varepsilon\right\},
\end{align}
where $\mathcal{D}(\mathcal{H}^{d})$ denotes the set of quantum states of dimension $d$ and the appending channel $\mathcal{A}_{\kappa^{\otimes n}}$ acting on an arbitrary input state~$\zeta$ is defined as
\begin{equation}
\mathcal{A}_{\kappa^{\otimes n}}(\zeta)\coloneqq \zeta\otimes\kappa^{\otimes
n},
\end{equation}
where $\kappa$ is a quantum state.
\end{definition}

It is worth emphasizing that the quantum channel $\mathcal{P}^{(n)}$ in the right-hand side of~\eqref{eq:sample-complexity-ham-sim} includes the DME and WML operations. Therefore, an upper bound on the sample complexity of DME and WML, with a program state being of dimension $d$, certainly provides an upper bound on $n_{d}^{\ast}$.

\section{Upper bound on the sample complexity of density matrix exponentiation}

\label{sec:upper-bound-DME}

In this section, we examine the sample complexity of DME (i.e., the number of program states needed to achieve a desired imprecision level $\varepsilon$ and evolution time $t$). 
Here, we identify an upper bound on the imprecision induced by the whole process of DME, in terms of  number of samples, $n$, and evolution time $t$. 
Our first result is stated in the following theorem:

\begin{theorem}\label{Thm: error bound}
Let $t\geq0$, let $n\in\mathbb{N}$ be such that $n>t$, and let $\Delta\coloneqq\frac{t}{n}$. 
For every quantum state $\sigma$, the error of density matrix exponentiation satisfies the following bound:
\begin{equation}
\frac{1}{2} \left\Vert \mathcal{U}_{\sigma, t}-\widetilde{\mathcal{U}}_{\sigma, \Delta}^{n}\right\Vert
_{\diamond}\leq\frac{4t^{2}}{n}, \label{eq:DME-err-bound}
\end{equation}
where $\mathcal{U}_{\sigma, t}$ is defined in~\eqref{eq: ideal evolution by t} and $\widetilde{\mathcal{U}}_{\sigma, \Delta}$ in~\eqref{eq:DME-actual-one-step}. 
\end{theorem}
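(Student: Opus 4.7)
The plan is to reduce the $n$-step diamond distance to a one-step estimate by telescoping, and then to bound one step by a second-order Taylor expansion in $\Delta$. The identity $A^{n} - B^{n} = \sum_{k=0}^{n-1} A^{k}(A-B)B^{n-1-k}$ combined with the fact that composition with a CPTP map is a contraction in the diamond norm gives
\begin{equation}
\tfrac{1}{2}\left\|\mathcal{U}_{\sigma,t} - \widetilde{\mathcal{U}}_{\sigma,\Delta}^{n}\right\|_{\diamond} \leq n\cdot \tfrac{1}{2}\left\|\mathcal{U}_{\sigma,\Delta} - \widetilde{\mathcal{U}}_{\sigma,\Delta}\right\|_{\diamond}.
\end{equation}
Since $n\Delta^{2} = t^{2}/n$, the claim reduces to a single-step bound $\tfrac{1}{2}\|\mathcal{U}_{\sigma,\Delta} - \widetilde{\mathcal{U}}_{\sigma,\Delta}\|_{\diamond} \leq 4\Delta^{2}$.

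The key ingredient for the one-step bound is an exact closed form for $\widetilde{\mathcal{U}}_{\sigma,\Delta}$. Using $\operatorname{SWAP}^{2} = \mathbb{I}$ one has $e^{-i\Delta\operatorname{SWAP}} = \cos(\Delta)\mathbb{I} - i\sin(\Delta)\operatorname{SWAP}$; substituting into~\eqref{eq:DME-actual-one-step} and evaluating the partial trace via $\operatorname{Tr}_{2}[\operatorname{SWAP}(A\otimes B)] = BA$ yields
\begin{equation}
\widetilde{\mathcal{U}}_{\sigma,\Delta}(\rho) = \cos^{2}(\Delta)\,\rho + \sin^{2}(\Delta)\,\sigma - i\sin(\Delta)\cos(\Delta)\,[\sigma,\rho].
\end{equation}
Both $\widetilde{\mathcal{U}}_{\sigma,\Delta}$ and $\mathcal{U}_{\sigma,\Delta}(\rho) = e^{-i\sigma\Delta}\rho e^{i\sigma\Delta}$, as well as their liftings $\mathcal{I}_{R}\otimes(\cdot)$ on an arbitrary reference $R$, reduce to the identity at $\Delta=0$ with identical first derivative $\rho \mapsto -i[\sigma,\rho]$ there. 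Therefore Taylor's theorem with integral remainder, applied to an arbitrary bipartite state $\rho_{RS}$, gives
\begin{equation}
\bigl(\mathcal{I}_{R} \otimes (\mathcal{U}_{\sigma,\Delta} - \widetilde{\mathcal{U}}_{\sigma,\Delta})\bigr)(\rho_{RS}) = \int_{0}^{\Delta}(\Delta - s)\, G(s)\, ds,
\end{equation}
where $G(s)$ is the second $s$-derivative of the lifted difference.

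Direct differentiation yields $\partial_{s}^{2} (\mathcal{I}_{R}\otimes\mathcal{U}_{\sigma,s})(\rho_{RS}) = -[\mathbb{I}_{R}\otimes\sigma_{S},[\mathbb{I}_{R}\otimes\sigma_{S},(\mathcal{I}_{R}\otimes\mathcal{U}_{\sigma,s})(\rho_{RS})]]$ and, using the closed form, $\partial_{s}^{2} (\mathcal{I}_{R}\otimes\widetilde{\mathcal{U}}_{\sigma,s})(\rho_{RS}) = 2\cos(2s)(\rho_{R}\otimes\sigma_{S} - \rho_{RS}) + 2i\sin(2s)[\mathbb{I}_{R}\otimes\sigma_{S},\rho_{RS}]$, where $\rho_{R} = \operatorname{Tr}_{S}[\rho_{RS}]$. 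Routine estimates using $\|\sigma\|_{\infty} \leq 1$, $\|[\mathbb{I}\otimes\sigma,X]\|_{1} \leq 2\|X\|_{1}$, and $\|\rho_{R}\otimes\sigma_{S} - \rho_{RS}\|_{1} \leq 2$ then give $\|G(s)\|_{1} \leq 4 + 4|\cos(2s)| + 4|\sin(2s)| \leq 4 + 4\sqrt{2}$ uniformly in $s$ and $\rho_{RS}$. Multiplying by $\int_{0}^{\Delta}(\Delta - s)\,ds = \Delta^{2}/2$ and taking the supremum over $\rho_{RS}$ yields $\tfrac{1}{2}\|\mathcal{U}_{\sigma,\Delta} - \widetilde{\mathcal{U}}_{\sigma,\Delta}\|_{\diamond} \leq (1+\sqrt{2})\Delta^{2} < 4\Delta^{2}$, as required.

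The main obstacle is pinning down the constant \emph{uniformly} over arbitrary reference systems and over all $\Delta \geq 0$, rather than only pointwise at a fixed input or asymptotically as $\Delta\to 0$. This is where the closed-form trigonometric expression for $\widetilde{\mathcal{U}}_{\sigma,\Delta}$ is essential: it renders the higher-order derivatives explicit and uniformly bounded, avoiding the infinite-series manipulations whose rigorization is precisely the gap pointed out in the analysis of~\cite{Kimmel2017}. Once the closed form is in hand, the Taylor integral remainder and the three trace-norm bounds above are routine.
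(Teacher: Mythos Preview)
Your proof is correct and follows the same overall architecture as the paper---telescoping via subadditivity of the diamond distance (the paper's Lemma~\ref{lem:subadd-DD}) to reduce to a single-step estimate, and then exploiting the exact trigonometric closed form for $\widetilde{\mathcal{U}}_{\sigma,\Delta}$---but your single-step argument is genuinely different from the paper's Lemma~\ref{lem:single-step-DME-bound}. The paper expands $\mathcal{U}_{\sigma,\Delta}$ as the full Hadamard series $\sum_{k\geq 0}\frac{(-i\Delta)^k}{k!}[\sigma,\rho]_k$, subtracts the closed form for $\widetilde{\mathcal{U}}_{\sigma,\Delta}$, and bounds the resulting three pieces term by term using $\Delta-\tfrac{1}{2}\sin 2\Delta\leq\tfrac{2}{3}\Delta^3$, $e^{2\Delta\|\sigma\|_\infty}-1-2\Delta\|\sigma\|_\infty\leq\tfrac{9}{2}\Delta^2\|\sigma\|_\infty^2$, and $\sin^2\Delta\leq\Delta^2$, arriving at $8\Delta^2$ for the unnormalized trace norm and hence $4\Delta^2$ normalized. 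You instead invoke Taylor's theorem with integral remainder after checking that the zeroth and first derivatives of the difference vanish, and then bound the second derivative uniformly in $s$. This buys you two things: you avoid summing an infinite series altogether (so the assumption $\Delta<1$, which the paper uses to control $e^{2\Delta\|\sigma\|_\infty}$, is not actually needed for your one-step bound), and you obtain the sharper constant $(1+\sqrt 2)\Delta^2\approx 2.41\,\Delta^2$ in place of $4\Delta^2$. The paper's series approach, on the other hand, makes the individual error contributions (the mismatch in the linear-in-$\Delta$ coefficient, the higher-order commutators, and the $\sin^2\Delta$ depolarizing piece) separately visible, which is informative even if it yields a looser constant.
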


Here we emphasize that an analysis similar to that presented in the proof of Theorem~\ref{Thm: error bound} was already established in~\cite{wei2024simulating} for a more general quantum algorithm beyond DME.
When specializing to the DME setup, their sample complexity bound yields essentially the same constant prefactor as given in~\eqref{eq:DME-err-bound}.
Nevertheless, we have included this theorem for completeness, not only to make our analysis more self-contained, but also to explicitly clarify the gap between the sample complexity upper and lower bounds. We also note that, in comparison to~\cite{wei2024simulating}, our proof employs the slightly relaxed constraint $n>t$, which was previously given as $0.8n>t$ in~\cite{wei2024simulating}.

As a consequence of Theorem~\ref{Thm: error bound}, we now obtain an upper bound on the sample complexity of DME as follows.

\begin{corollary}[Upper bound on the sample complexity of DME]
\label{cor:samp-comp-upper-bnd}
Theorem~\ref{Thm: error bound} implies  the following dimension-free upper bound on the sample complexity of sample-based Hamiltonian simulation:
\begin{equation}
    n_{d}^{\ast}(t,\varepsilon) \leq \frac{4t^2}{\varepsilon}.
\end{equation}
\end{corollary}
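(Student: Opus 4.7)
The plan is to apply Theorem~\ref{Thm: error bound} directly, using the iterated DME protocol as an explicit choice of $\mathcal{P}^{(n)}$ in the infimum defining $n_{d}^{\ast}(t,\varepsilon)$. Specifically, for each integer $n\in\mathbb{N}$, I would take $\mathcal{P}^{(n)}$ to be the fixed, program-state-independent channel that treats the $n$ appended registers as ancillas and applies the swap-Hamiltonian evolution $e^{-i\Delta\operatorname{SWAP}}$ for time $\Delta = t/n$ successively between the input system and each of the $n$ appended registers, tracing them out afterward. This channel is CPTP and, by construction, satisfies $\mathcal{P}^{(n)}\circ\mathcal{A}_{\kappa^{\otimes n}} = \widetilde{\mathcal{U}}_{\kappa,\Delta}^{n}$ for every program state $\kappa\in\mathcal{D}(\mathcal{H}^{d})$. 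It thus qualifies as a feasible point for the infimum in Definition~\ref{def: sample complexity}.

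Next, I would pick $n$ to be the smallest natural number satisfying $n\geq 4t^{2}/\varepsilon$. Provided that the hypothesis $n>t$ of Theorem~\ref{Thm: error bound} is met, that theorem immediately gives
\begin{equation}
    \sup_{\kappa\in\mathcal{D}(\mathcal{H}^{d})}\frac{1}{2}\bigl\|\mathcal{P}^{(n)}\circ\mathcal{A}_{\kappa^{\otimes n}} - \mathcal{U}_{\kappa,t}\bigr\|_{\diamond} \leq \frac{4t^{2}}{n} \leq \varepsilon,
\end{equation}
so such an $n$ witnesses the infimum and yields $n_{d}^{\ast}(t,\varepsilon)\leq 4t^{2}/\varepsilon$ (up to integer rounding, which is absorbed in the stated bound).

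The only technical point to check is the compatibility of the two constraints $n\geq 4t^{2}/\varepsilon$ and $n>t$. When $t\geq \varepsilon/4$, the former already forces $n\geq 4t^{2}/\varepsilon \geq t\cdot(4t/\varepsilon)>t$ since $\varepsilon\leq 1\leq 4t$ in the nontrivial regime $t\geq 1/4$; more generally, whenever $4t>\varepsilon$ one has $4t^{2}/\varepsilon>t$ and the two conditions are consistent. In the remaining small-time regime $t<\varepsilon/4$ one has $t<1/4$, so the single choice $n=1>t$ is admissible in Theorem~\ref{Thm: error bound} and produces error at most $4t^{2}\leq \varepsilon$, again giving $n_{d}^{\ast}(t,\varepsilon)\leq 1\leq 4t^{2}/\varepsilon$ trivially or by absorption into the bound. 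Since the corollary is essentially a repackaging of Theorem~\ref{Thm: error bound} with the minimization observation, I do not expect any serious obstacle: the main thing to be careful about is verifying that the DME protocol is a legitimate realization of some $\mathcal{P}^{(n)}$ in the class over which $n_{d}^{\ast}$ is minimized, and handling the boundary case $n>t$ cleanly.
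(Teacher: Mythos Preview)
Your proposal is correct and matches the paper's approach: the paper treats the corollary as an immediate consequence of Theorem~\ref{Thm: error bound} without writing out any details, and your argument simply makes explicit the observation that the iterated DME channel is a legitimate $\mathcal{P}^{(n)}$ in Definition~\ref{def: sample complexity}. One small slip: in the boundary regime $t<\varepsilon/4$ you write $1\leq 4t^{2}/\varepsilon$, which is actually false there (since $4t^{2}/\varepsilon<t<1$); the paper ignores this degenerate regime entirely, and the cleanest fix is to read the bound as $n_{d}^{\ast}(t,\varepsilon)\leq \lceil 4t^{2}/\varepsilon\rceil$ or simply restrict to $4t^{2}\geq\varepsilon$.
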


Hence, Corollary~\ref{cor:samp-comp-upper-bnd} resolves the incompleteness of the claim from Ref.~\cite{Kimmel2017} and validates the previously claimed sample complexity bound $O(t^2/\varepsilon)$ from Ref.~\cite{lloyd2014quantum}.

\medskip 

\begin{proof}[Proof of Theorem~\ref{Thm: error bound}]
Applying~\eqref{eq:n-times-ideal}, we find that
\begin{align}
\frac{1}{2} \left\Vert \mathcal{U}_{\sigma, t}-\widetilde{\mathcal{U}}_{\sigma, \Delta}^{n}\right\Vert
_{\diamond}  &  =\frac{1}{2} \left\Vert \mathcal{U}_{\sigma, \Delta}^{n}-\widetilde{\mathcal{U}
}_{\sigma, \Delta}^{n}\right\Vert _{\diamond}\\
&  \leq  n \cdot \frac{1}{2} \left\Vert \mathcal{U}_{\sigma, \Delta}-\widetilde{\mathcal{U}}_{\sigma,\Delta
}\right\Vert _{\diamond}\\
&  \leq n(4\Delta^{2}),  \label{eq: final bound}
\end{align}
where we have inductively applied the subadditivity of the diamond distance (see Lemma~\ref{lem:subadd-DD}) to obtain the
first inequality. 
The second inequality follows from Lemma~\ref{lem:single-step-DME-bound}, which provides a precise upper bound on the error induced by DME, for each time step of size $\Delta$. 
The right-hand side of the second inequality then implies~\eqref{eq:DME-err-bound} after substituting $\Delta=\frac{t}{n}$.
\end{proof}

\medskip

In the rest of the section, we provide a precise error bound induced when simulating a single step of DME for the unit time~$\Delta$. 
\begin{lemma}
\label{lem:single-step-DME-bound}
Let $\sigma$ be an arbitary quantum state, and suppose that $\Delta \in [0,1)$. 
For the quantum channels $\mathcal{U}_{\sigma, \Delta}$ and $\widetilde{\mathcal{U}}_{\sigma, \Delta}$ defined in~\eqref{eq:DME-ideal-one-step-1} and~\eqref{eq:DME-actual-one-step},
respectively, the following inequality holds:
\begin{equation}
\frac{1}{2} \left\Vert \mathcal{U}_{\sigma, \Delta}-\widetilde{\mathcal{U}}_{\sigma, \Delta}\right\Vert
_{\diamond}\leq 4\Delta^{2}. \label{eq:single-step-DME-bnd}
\end{equation}

\end{lemma}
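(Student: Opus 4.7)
My plan is to derive an exact closed form for $\widetilde{\mathcal{U}}_{\sigma,\Delta}$ by exploiting the involution $\operatorname{SWAP}^2=\mathbb{I}$, and then to compare it term by term with the ideal evolution. Since $\operatorname{SWAP}^2=\mathbb{I}$, one has the exact identity $e^{-i\Delta\operatorname{SWAP}}=\cos(\Delta)\,\mathbb{I}-i\sin(\Delta)\operatorname{SWAP}$. Substituting this into~\eqref{eq:DME-actual-one-step} (tensored with a reference system $R$ to handle the diamond norm), expanding the conjugation, and using the standard partial-trace identities
\begin{equation*}
\operatorname{Tr}_{S_2}\!\left[\operatorname{SWAP}_{S_1S_2}(\tau_{RS_1}\otimes\sigma_{S_2})\right]=(\mathbb{I}_R\otimes\sigma)\tau_{RS_1},\quad \operatorname{Tr}_{S_2}\!\left[(\tau_{RS_1}\otimes\sigma_{S_2})\operatorname{SWAP}_{S_1S_2}\right]=\tau_{RS_1}(\mathbb{I}_R\otimes\sigma),
\end{equation*}
together with $\operatorname{Tr}_{S_2}[\operatorname{SWAP}(\tau_{RS_1}\otimes\sigma_{S_2})\operatorname{SWAP}]=\tau_R\otimes\sigma_{S_1}$, I will obtain the finite closed form
\begin{equation*}
(\mathcal{I}_R\otimes\widetilde{\mathcal{U}}_{\sigma,\Delta})(\tau_{RS_1})=\cos^{2}\!\Delta\cdot \tau_{RS_1}+\sin^{2}\!\Delta\cdot \tau_R\otimes\sigma_{S_1}-i\sin\Delta\cos\Delta\,[\mathbb{I}_R\otimes\sigma,\tau_{RS_1}].
\end{equation*}
This replaces the formal infinite commutator expansion used in \cite{Kimmel2017} by an exact three-term expression.

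I will then subtract this from the ideal $(\mathcal{I}_R\otimes\mathcal{U}_{\sigma,\Delta})(\tau_{RS_1})=e^{-i\Delta\sigma}\tau_{RS_1}e^{i\Delta\sigma}$ (where $\sigma$ is identified with $\mathbb{I}_R\otimes\sigma$) and split the difference into three pieces by adding and subtracting $\tau_{RS_1}$ and the first-order Taylor term $-i\Delta[\sigma,\tau]$: (A) $e^{-i\Delta\sigma}\tau e^{i\Delta\sigma}-\tau+i\Delta[\sigma,\tau]$, the ideal minus its degree-one Taylor polynomial in $\Delta$; (B) $i(\sin\Delta\cos\Delta-\Delta)[\sigma,\tau]$, the scalar mismatch between $\tfrac{1}{2}\sin(2\Delta)$ and $\Delta$; and (C) $-\sin^{2}\!\Delta\,(\tau_R\otimes\sigma-\tau)$, the mixed quadratic term unique to the swap-based implementation. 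The zeroth and first orders in $\Delta$ cancel by construction, which exposes the expected $O(\Delta^2)$ leading behavior.

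Each piece will then be bounded in trace norm using elementary tools. For (A), I will start from $g(\Delta):=e^{-i\Delta\sigma}\tau e^{i\Delta\sigma}-\tau=-i\int_0^\Delta[\sigma,e^{-is\sigma}\tau e^{is\sigma}]\,ds$, pull out the linear term $-i\Delta[\sigma,\tau]$, and iterate once to obtain $g(\Delta)+i\Delta[\sigma,\tau]=-i\int_0^\Delta[\sigma,g(s)]\,ds$. Combined with the pointwise bound $\|g(s)\|_1\leq 2s$ and the H\"older-type estimate $\|[\sigma,\cdot]\|_1\leq 2\|\sigma\|_\infty\|\cdot\|_1\leq 2$ (using $\|\sigma\|_\infty\leq 1$ and that the state has trace norm one), this yields $\|A\|_1\leq 2\Delta^2$. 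For (B), the elementary Taylor estimate $|\sin x-x|\leq |x|^3/6$ applied at $x=2\Delta$ gives $\|B\|_1\leq \tfrac{4\Delta^3}{3}$. For (C), $\sin^{2}\!\Delta\leq \Delta^2$ and $\|\tau_R\otimes\sigma-\tau\|_1\leq 2$ give $\|C\|_1\leq 2\Delta^2$.

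Summing the three contributions and using $\Delta\in[0,1)$ to dominate the cubic tail by the quadratic one yields
\begin{equation*}
\|(\mathcal{I}_R\otimes\mathcal{U}_{\sigma,\Delta})(\tau)-(\mathcal{I}_R\otimes\widetilde{\mathcal{U}}_{\sigma,\Delta})(\tau)\|_1\leq 4\Delta^2+\tfrac{4\Delta^3}{3}\leq \tfrac{16}{3}\Delta^2<8\Delta^2.
\end{equation*}
Since these bounds are independent of $\tau_{RS_1}$ and of $\dim\mathcal{H}_R$, taking the supremum over $\tau$ and dividing by two yields the claimed bound $\tfrac12\|\mathcal{U}_{\sigma,\Delta}-\widetilde{\mathcal{U}}_{\sigma,\Delta}\|_\diamond\leq 4\Delta^2$. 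The only real subtlety is piece (A): a single application of the fundamental theorem of calculus only reveals that $g(\Delta)=O(\Delta)$, so one must iterate the commutator integral once and carefully isolate the linear term before the second-order estimate becomes visible; everything else reduces to routine trigonometric and operator-norm bookkeeping.
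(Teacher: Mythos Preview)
Your proof is correct and structurally identical to the paper's: both derive the same exact three-term closed form for $\widetilde{\mathcal{U}}_{\sigma,\Delta}$ from the involution $\operatorname{SWAP}^2=\mathbb{I}$, split the difference into the same three pieces (the second-order remainder of the ideal evolution, the $\Delta-\tfrac12\sin 2\Delta$ scalar mismatch, and the $\sin^2\!\Delta$ mixing term), and bound pieces (B) and (C) in exactly the same way. The one genuine technical difference is your treatment of piece (A): the paper expands $e^{-i\sigma\Delta}\rho e^{i\sigma\Delta}$ as the full Hadamard series and bounds the tail $\sum_{k\geq 2}\frac{(2\Delta\|\sigma\|_\infty)^k}{k!}\leq e^{2\Delta}-1-2\Delta\leq\tfrac{9}{2}\Delta^2$, whereas you iterate the Duhamel integral once to obtain the sharper estimate $\|A\|_1\leq 2\Delta^2$. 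Your route is arguably cleaner (no infinite series to control) and yields the tighter intermediate constant $\tfrac{16}{3}\Delta^2$ versus the paper's $\tfrac{4}{3}\Delta^3+\tfrac{9}{2}\Delta^2+2\Delta^2$, though both collapse to the same final bound $4\Delta^2$ after halving.
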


\begin{proof}
Let $\rho \in \mathcal{D}(\mathcal{H}_{R}\otimes\mathcal{H}_{S_1})$ be an unknown bipartite quantum state over the joint system $RS_1$, where $R$ is an arbitrary reference system. 
Also, let $\sigma \in \mathcal{D}(\mathcal{H}_{S_2})$ be a program quantum state over the system $S_2$. 
Then, from~\eqref{eq:DME-actual-one-step}, we have
\begin{align}
&(\mathcal{I}_{R}\otimes\widetilde{\mathcal{U}}_{\sigma, \Delta})(\rho_{RS_1})  \nonumber \\
&  =\operatorname{Tr}_{S_2}\!\left[(\mathbb{I}_{R} \otimes e^{-i\Delta \operatorname{SWAP}_{S_1S_2}})(\rho_{RS_1}\otimes\sigma_{S_2})(\mathbb{I}_{R} \otimes e^{i\Delta \operatorname{SWAP}_{S_1S_2}})\right]\\
& =\operatorname{Tr}_{S_2}\!\left[
\begin{array}
[c]{c}\left(\mathbb{I}_{R} \otimes (\cos\Delta\cdot\mathbb{I}_{S_1S_2}-i\sin\Delta\cdot\operatorname{SWAP}_{S_1S_2})\right)(\rho_{RS_1}\otimes\sigma_{S_2}) \\ \times \left(\mathbb{I}_{R} \otimes (\cos\Delta\cdot\mathbb{I}_{S_1S_2}+i\sin\Delta\cdot\operatorname{SWAP}_{S_1S_2}\right)\end{array}\right] \\
&  = \left(  \cos^{2}\Delta\right)\operatorname{Tr}_{S_2}\!\left[ (\rho_{RS_1}\otimes\sigma_{S_2}) \right]  +\left(  \sin^{2}\Delta\right)  \operatorname{Tr}_{S_2}\!\left[ \operatorname{SWAP}_{S_1S_2} (\rho_{RS_1}\otimes\sigma_{S_2}) \operatorname{SWAP}_{S_1S_2} \right]  \nonumber \\
& \qquad-i\left(  \sin\Delta\cos\Delta\right)\operatorname{Tr}_{S_2}\!\left[\operatorname{SWAP}_{S_1S_2} (\rho_{RS_1}\otimes\sigma_{S_2}) - (\rho_{RS_1}\otimes\sigma_{S_2}) \operatorname{SWAP}_{S_1S_2} \right] \\
&  =\left(  \cos^{2}\Delta\right)  \rho_{RS_1} -i\left(  \sin\Delta\cos\Delta\right)
\left[ \left(\mathbb{I}_R \otimes \sigma_{S_1}\right),\rho_{RS_1}\right]  +\left(  \sin^{2}\Delta\right) \left(\operatorname{Tr}_{S_1}\!\left[\rho_{RS_1}\right] \otimes \sigma_{S_1}\right) \\
&  =\left(  1-\sin^{2}\Delta\right)  \rho_{RS_1} -i\frac{\sin2\Delta}{2}
\left[ \left(\mathbb{I}_R \otimes \sigma_{S_1}\right),\rho_{RS_1}\right]  +\left(  \sin^{2}\Delta\right) \left(\operatorname{Tr}_{S_1}\!\left[\rho_{RS_1}\right] \otimes \sigma_{S_1}\right) \\
&  =\rho_{RS_1}  -i\frac{\sin2\Delta}{2}
\left[ \left(\mathbb{I}_R \otimes \sigma_{S_1}\right),\rho_{RS_1}\right]  +\left(  \sin
^{2}\Delta\right)  \left(  \operatorname{Tr}_{S_1}\!\left[\rho_{RS_1}\right] \otimes \sigma_{S_1} - \rho_{RS_1}\right),
\end{align}
where we used the fact that $e^{-i\Delta \operatorname{SWAP}_{S_1S_2}} = \cos\Delta\cdot\mathbb{I}_{S_1S_2} -i\sin\Delta\cdot\operatorname{SWAP}_{S_1S_2}$, given that the $\operatorname{SWAP}$ operator is self-inverse (i.e., $\operatorname{SWAP}^2 = \mathbb{I}$).
Also, from~\eqref{eq:DME-ideal-one-step} and using the Hadamard Lemma in Ref.~\cite[Lemma~A1 of Supplementary Information]{Kimmel2017}, we have
\begin{align}
& \left(\mathcal{I}_{R}\otimes\mathcal{U}_{\sigma, \Delta}\right)(\rho_{RS_1}) \notag \\
& = \left(\mathbb{I}_{R}\otimes e^{-i\sigma\Delta}\right) \rho_{RS_1} \left( \mathbb{I}_{R}\otimes e^{i\sigma
\Delta}\right) \\
&  =\rho_{RS_1} -i \Delta\left[\left(\mathbb{I}_{R}\otimes  \sigma_{S_1}\right) ,\rho_{RS_1} 
 \right] +\sum_{k=2}^{\infty}\frac{\left(
-i\right)  ^{k}\Delta^{k}}{k!}\left[  (\mathbb{I}_R \otimes \sigma_{S_1}),\rho_{RS_1} \right]_{k} ,
\end{align}
where we previously defined the nested commutator $\left[  (\mathbb{I}_R \otimes \sigma_{S_1}),\rho_{RS_1} \right]_{k}$ in~\eqref{eq:def-nested-comm}.
Then, the trace distance between $ \left(\mathcal{I}_{R}\otimes\mathcal{U}_{\sigma, \Delta}\right)(\rho_{RS_1})$ and $(\mathcal{I}_{R}\otimes\widetilde{\mathcal{U}}_{\sigma, \Delta})(\rho_{RS_1})$ for an arbitrary quantum state $\rho_{RS_1}$ can be bounded from above as
\begin{align}
&  \left\Vert \left(\mathcal{I}_{R}\otimes\mathcal{U}_{\sigma, \Delta}\right)(\rho_{RS_1}) -(\mathcal{I}_{R}\otimes\widetilde{\mathcal{U}}_{\sigma, \Delta})(\rho_{RS_1})\right\Vert _{1}\nonumber\\
&  =\left\Vert   \begin{array}
[c]{c} -i\left(  \Delta-\frac{\sin2\Delta
}{2}\right) \left[\left(\mathbb{I}_{R}\otimes  \sigma_{S_1}\right) ,\rho_{RS_1} 
 \right]   + \sum_{k=2}^{\infty}\frac{\left(
-i\right)  ^{k}\Delta^{k}}{k!}\left[ ( \mathbb{I}_R \otimes \sigma_{S_1}),\rho_{RS_1} \right]_{k} \\ -  \left(  \sin
^{2}\Delta\right)  \left(  \operatorname{Tr}_{S_1}\!\left[\rho_{RS_1}\right] \otimes \sigma_{S_1} - \rho_{RS_1}\right) \end{array} \right\Vert _{1}\\
&  \leq   \left( \Delta-\frac{\sin2\Delta}{2}\right) \left\Vert\left[\left(\mathbb{I}_{R}\otimes  \sigma_{S_1}\right) ,\rho_{RS_1} \right]  \right\Vert _{1}  +\sum_{k=2}^{\infty}\frac{\Delta^{k}}
{k!}\left\Vert \left[  (\mathbb{I}_R \otimes \sigma_{S_1}),\rho_{RS_1} \right]_{k} \right\Vert _{1} \nonumber \\
& \qquad + \left(\sin^{2}\Delta\right) \left\Vert \operatorname{Tr}_{S_1}\!\left[\rho_{RS_1}\right] \otimes \sigma_{S_1} - \rho_{RS_1} \right\Vert _{1}  \\
&  \leq 2\left\Vert \sigma\right\Vert _{\infty}\left(  \Delta-\frac{\sin2\Delta}{2}\right)  +\sum_{k=2}^{\infty}
\frac{\left(  2\Delta\left\Vert \sigma\right\Vert _{\infty}\right)  ^{k}}
{k!}+2\sin^{2}\Delta  \label{eq: 35}\\
&  =2\left\Vert \sigma\right\Vert _{\infty}\left(  \Delta-\frac{\sin2\Delta
}{2}\right)  +e^{2\Delta\left\Vert \sigma\right\Vert _{\infty}}-1-2\Delta
\left\Vert \sigma\right\Vert _{\infty}+2\sin^{2}\Delta\\
&  \leq\frac{4}{3}\left\Vert \sigma\right\Vert _{\infty}\Delta^{3}+\frac{9}
{2}\Delta^{2}\left\Vert \sigma\right\Vert _{\infty}^{2}+2\Delta^{2} \label{eq: 37} \\
&  \leq8\Delta^{2}, \label{eq: 38}
\end{align}
where we used in~\eqref{eq: 35} the fact that
\begin{equation}
\left\Vert \left[\left(\mathbb{I}_{R}\otimes  \sigma_{S_1}\right) ,\rho_{RS_1} \right] \right\Vert _{1}
\leq
2\left\Vert\mathbb{I}_{R}\otimes  \sigma_{S_1} \right\Vert _{\infty}\left\Vert \rho_{RS_1} \right\Vert _{1} = 2\left\Vert
\sigma\right\Vert _{\infty},
\end{equation}
and then iterated this $k$ times to get that
\begin{align}
\left\Vert \left[(\mathbb{I}_{R}\otimes  \sigma_{S_1} ),\rho_{RS_1} \right]_{k} \right\Vert _{1} &\leq
2\left\Vert \sigma\right\Vert _{\infty}\left\Vert \left[(\mathbb{I}_{R}\otimes  \sigma_{S_1}) ,\rho_{RS_1} \right]_{k-1} \right\Vert _{1} \\
&\leq \cdots \\
&\leq \left(
2\left\Vert \sigma\right\Vert _{\infty}\right)  ^{k}\left\Vert \rho_{RS_1}\right\Vert
_{1}\\ 
& =\left(  2\left\Vert \sigma\right\Vert _{\infty}\right)  ^{k}.
\end{align}
We also used the fact that 
\begin{align}
    \left\Vert \operatorname{Tr}_{S_1}\!\left[\rho_{RS_1}\right] \otimes \sigma_{S_1} - \rho_{RS_1} \right\Vert _{1} \le 2 ,
\end{align}
given that both $\operatorname{Tr}_{S_1}\!\left[\rho_{RS_1}\right] \otimes \sigma_{S_1}$ and $ \rho_{RS_1}$ are quantum states. 
We also used the following inequalities to establish~\eqref{eq: 37}:
\begin{align}
\Delta-\frac{\sin2\Delta}{2}  &  \leq\frac{2}{3}\Delta^{3},\\
e^{2\Delta\left\Vert \sigma\right\Vert _{\infty}}-1-2\Delta\left\Vert
\sigma\right\Vert _{\infty}  &  \leq\frac{9}{2}\Delta^{2}\left\Vert
\sigma\right\Vert _{\infty}^{2},\\
\sin^{2}\Delta &  \leq\Delta^{2},
\end{align}
given that $\Delta < 1 $ and $\left\Vert \sigma \right\Vert_{\infty} \leq 1$.

We conclude that the normalized diamond distance between $ \mathcal{U}_{\sigma, \Delta}$ and $\widetilde{\mathcal{U}}_{\sigma, \Delta}$ is bounded from above by $4\Delta^2$ because the inequality in~\eqref{eq: 38} holds for an arbitrary input state $\rho_{RS_1}$. This completes the proof of~\eqref{eq:single-step-DME-bnd}.
\end{proof}

\section{Lower bound on the sample complexity of sample-based Hamiltonian simulation}

\label{sec:lower-bnd-samp-complex-SBHS}

In this section, we derive a lower bound on the sample complexity of sample-based Hamiltonian simulation, which is consistent with the optimality result from Ref.~\cite{Kimmel2017}. 
Using the definition of sample complexity given in Definition~\ref{def: sample complexity}, the main result of this section is as follows:

\begin{theorem}\label{Thm: optimality}
For all $t \geq 0$, $ \varepsilon \in (0,a)$ where $ a\coloneqq \min\left\{\frac{9t}{100\pi}, \frac{1}{10}\right\}$, and $d \in \{2,3, \ldots\}$, the following lower bound holds for the sample complexity of sample-based Hamiltonian simulation:
\begin{equation}\label{eq: lower bound of sample complexity}
n_{d}^{\ast}( t,\varepsilon) \geq \frac{32}{1000}\left(\frac{t^{2}}{\varepsilon}\right).
\end{equation}
\end{theorem}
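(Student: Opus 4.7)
The plan is to derive the lower bound by reduction from the task of zero-error unitary discrimination, invoking the optimality result of Ac\'{i}n~\cite{acin2001statistical}. I would first choose a pair of program states $\sigma_0, \sigma_1 \in \mathcal{D}(\mathcal{H}^d)$ (with $d \geq 2$) such that the induced unitaries $U_j = e^{-i\sigma_j t}$ admit perfect discrimination with exactly $N^{\ast}$ sequential uses, where Ac\'{i}n's threshold gives $N^{\ast}$ scaling inversely with the product $t \cdot \|\sigma_0 - \sigma_1\|_\infty$. A convenient choice is a pair of pure states supported on a two-dimensional subspace, with their separation tuned so that the spread of eigenvalues of $U_0^\dagger U_1$ lands right at the Ac\'{i}n threshold for $N^{\ast}$ queries. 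I would expect the bulk of this reduction to be packaged inside Lemma~\ref{lem: lower bound of sample complexity}, so that Theorem~\ref{Thm: optimality} follows by an explicit choice of parameters.

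Given any candidate sample-based Hamiltonian simulation protocol $\mathcal{P}^{(n)}$ attaining diamond-distance error $\varepsilon$, I would compose $N^{\ast}$ independent copies of $\mathcal{P}^{(n)}$ (each using its own $n$ fresh program-state samples) with the optimal Ac\'{i}n discrimination strategy for $U_j^{N^{\ast}}$. By the subadditivity of the diamond distance (Lemma~\ref{lem:subadd-DD}), the total approximation error of $U_j^{N^{\ast}}$ is at most $N^{\ast}\varepsilon$, so the composed procedure still discriminates the two scenarios with success probability at least $1 - N^{\ast}\varepsilon$. Simultaneously, this procedure is ultimately a two-outcome test on the tensor-product state $\sigma_j^{\otimes N^{\ast} n}$, whose distinguishing probability is controlled from above by the fidelity $F(\sigma_0, \sigma_1)^{N^{\ast} n}$ via the Fuchs--van de Graaf inequalities. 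Contrasting these two bounds yields a quantitative inequality relating $n$, $N^{\ast}$, $\varepsilon$, and $1 - F(\sigma_0, \sigma_1)$.

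Since $N^{\ast}$ is inversely proportional to $t$ times the spectral parameter controlling $\|\sigma_0 - \sigma_1\|$, while $1 - F(\sigma_0, \sigma_1)$ scales quadratically in the same quantity for nearby states, optimizing over the free parameter $\|\sigma_0 - \sigma_1\|$ produces a lower bound of the form $n \geq c \cdot t^2 / \varepsilon$ for a computable absolute constant $c$. The regime restriction $\varepsilon \in (0, a)$ with $a = \min\{9t/(100\pi), 1/10\}$ is precisely what ensures both that $N^{\ast} \geq 2$ (so that Ac\'{i}n's threshold is nontrivial) and that $N^{\ast}\varepsilon$ is bounded away from $1$ (so that the approximate discrimination retains a nontrivial success probability). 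The main obstacle I anticipate is the careful quantitative alignment of constants across Ac\'{i}n's threshold formula, the telescoping error from subadditivity, and the Fuchs--van de Graaf estimate, culminating in the explicit constant $\tfrac{32}{1000}$; I expect that this bookkeeping lives mostly inside the proof of Lemma~\ref{lem: lower bound of sample complexity}, so that Theorem~\ref{Thm: optimality} reduces to a direct substitution of scalings.
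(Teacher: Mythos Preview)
Your overall strategy matches the paper's exactly: invoke Lemma~\ref{lem: lower bound of sample complexity} (which packages the Ac\'{i}n perfect-discrimination threshold, the subadditivity telescoping, and the Fuchs--van de Graaf bound), then plug in an explicit pair of program states on a two-dimensional subspace and optimize the separation parameter. Your scaling heuristic ($N^{\ast}\sim 1/(t\cdot r)$ against $1-F\sim r^{2}$, hence $n\gtrsim t^{2}/\varepsilon$) is precisely the mechanism the paper exploits.

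There is, however, one concrete point where your proposed instantiation will not deliver the stated bound for general $t$: the choice of \emph{pure} states. Two pure qubit states at Bloch angle $\alpha$ do not commute, and plugging $r=r'=1$ into the eigenvalue-spread formula for $e^{i\sigma t}e^{-i\rho t}$ gives $\sin(\theta/2)=|\sin(t/2)|\sin(\alpha/2)$, so $N^{\ast}\sim 1/\big(|\sin(t/2)|\,\sin(\alpha/2)\big)$ rather than $1/(t\cdot r)$. Carrying the optimization through then yields a lower bound of order $\sin^{2}(t/2)/\varepsilon$, which agrees with $t^{2}/\varepsilon$ only for small $t$ and saturates at $O(1/\varepsilon)$ for large $t$. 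The paper avoids this by taking \emph{commuting mixed} states $\rho=\tfrac12(I+r\sigma_{z})$ and $\sigma=\tfrac12(I-r\sigma_{z})$ with tunable radius $r<1$: then $e^{i\sigma t}e^{-i\rho t}=e^{i(\sigma-\rho)t}$ has spread exactly $2rt$, giving $\theta=rt$ and $F(\rho,\sigma)=1-r^{2}$, and the choice $r=\pi/(2zt)$ for an integer $z$ produces the linear-in-$t$ dependence needed for all $t$. Swapping your pure states for this commuting mixed pair is the only repair required; the rest of your outline goes through unchanged.
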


The lower bound in Theorem~\ref{Thm: optimality} has the same scaling behavior as our sample complexity upper bound mentioned in Corollary~\ref{cor:samp-comp-upper-bnd} in terms of $t^2/\varepsilon$, although they have different multiplicative factors (4 versus $\frac{32}{1000}$). 
Hence, the lower bound in Theorem~\ref{Thm: optimality} matches the upper bound up to a multiplicative factor, thus establishing the sample complexity optimality of DME.

To show the lower bound on sample complexity, our strategy is to employ the notion of zero-error query complexity, which we define as the minimum number of copies required to \textit{perfectly} distinguish the ideal unitary operation $e^{i\kappa t}$ for different quantum states $\kappa$. 
Before providing the proof of Theorem~\ref{Thm: optimality}, we first introduce the notion of zero-error query complexity and investigate a general lower bound on the sample complexity of  sample-based Hamiltonian simulation, in terms of this quantity. 
We then recover the lower bound on sample complexity in terms of $t$ and $\varepsilon$ as stated in~\eqref{eq: lower bound of sample complexity}.

\subsection{Lower bound on the sample complexity of sample-based Hamiltonian simulation in terms of zero-error query complexity}

In the following, we develop a sample complexity lower bound on sample-based Hamiltonian simulation, by employing the notion  of zero-error query complexity.
We first define zero-error query complexity as the number of queries needed to  distinguish two ideal unitary channels perfectly. 
Specifically, let $\mathcal{U}_{\kappa,t}(\cdot) =  e^{-i\kappa t}(\cdot)e^{i\kappa t}$ be the ideal unitary evolution according to a quantum state $\kappa$, as we previously defined in~\eqref{eq: ideal evolution by t}.
Then, the zero-error query complexity $m^{\ast}$ can be defined as follows:

\begin{definition}[Zero-error query complexity]
Let $\rho, \sigma \in \mathcal{D}(\mathcal{H}^{d})$ be arbitrary $d$-dimensional quantum states such that $\rho \neq \sigma$. 
Then, we define the integer $m^{\ast}\equiv m^{\ast}(\rho,\sigma,t)\in\mathbb{N}$ as the minimum number of queries needed such that the two unitary channels $\mathcal{U}_{\rho,t}$ and $\mathcal{U}_{\sigma,t}$ for $t>0$ are perfectly distinguishable from each other. That is,
\begin{equation}\label{eq: m-ast-definition}
m^{\ast}(\rho,\sigma,t)\coloneqq \inf\left\{  m\in\mathbb{N}:\frac{1}
{2}\left\Vert \mathcal{U}_{\rho,t}^{\otimes m}-\mathcal{U}_{\sigma,t}^{\otimes
m}\right\Vert _{\diamond}=1\right\}  .
\end{equation}
\end{definition}

It is worth emphasizing that, for arbitrary $\mathcal{U}_{\rho,t}$ and $\mathcal{U}_{\sigma,t}$ such that $t>0$ and $\rho\neq \sigma$, the value $m^{\ast}$ is finite, as shown in Ref.~\cite{acin2001statistical}.

Hereafter, we derive a lower bound on the sample complexity of sample-based Hamiltonian simulation, using the zero-error query complexity $m^{\ast}$ that we have just defined.

\begin{lemma}\label{lem: lower bound of sample complexity}
Let $\rho, \sigma \in \mathcal{D}(\mathcal{H}^{d})$ be arbitrary quantum states such that $\rho \neq \sigma$.
Then, provided that $ m^{\ast}(\rho,\sigma,t)\varepsilon \leq \frac{1}{2}$, we have the following lower bound on the sample complexity $n_{d}^{\ast}(t,\varepsilon)$:
\begin{equation}\label{eq: lower bound of n^ast}
n_{d}^{\ast}(t,\varepsilon) \geq\frac{-\ln\!\left[ 4\, m^{\ast}(\rho,\sigma,t) \varepsilon (1 - m^{\ast}(\rho,\sigma,t) \varepsilon )\right]  }{m^{\ast}(\rho,\sigma,t) \left[ - \ln F(\rho,\sigma) \right]},
\end{equation} 
where $F(\rho, \sigma)$ is the quantum fidelity between $\rho$ and $\sigma$, defined in~\eqref{eq: fidelity definition}.
\end{lemma}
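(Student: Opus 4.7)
The plan is to reduce the sample complexity problem to state discrimination: use the protocol that achieves $n = n_d^{\ast}(t,\varepsilon)$ to construct a procedure that distinguishes $\rho^{\otimes N}$ from $\sigma^{\otimes N}$, where $N = n \cdot m^{\ast}(\rho,\sigma,t)$, and then invoke the Fuchs--van de Graaf inequality to translate this separation into a lower bound on $N$ in terms of the fidelity $F(\rho,\sigma)$. The key observation is that any sample-based Hamiltonian simulator can be run $m^{\ast}$ times in sequence, each time using a fresh block of $n$ program copies, in order to approximate $\mathcal{U}_{\kappa,t}^{\otimes m^{\ast}}$; and since the ideal $m^{\ast}$-fold channels $\mathcal{U}_{\rho,t}^{\otimes m^{\ast}}$ and $\mathcal{U}_{\sigma,t}^{\otimes m^{\ast}}$ are perfectly distinguishable by the very definition of $m^{\ast}$, the corresponding approximations must remain well separated.

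Concretely, I would first fix an optimal protocol $\mathcal{P}^{(n)}$ with $n = n_d^{\ast}(t,\varepsilon)$, satisfying the diamond-distance bound for every program state. Subadditivity of the normalized diamond distance (the same tool as in the proof of Theorem~\ref{Thm: error bound}), applied $m^{\ast}$ times, shows that the composed channel $(\mathcal{P}^{(n)} \circ \mathcal{A}_{\kappa^{\otimes n}})^{\otimes m^{\ast}}$ approximates $\mathcal{U}_{\kappa,t}^{\otimes m^{\ast}}$ to within $m^{\ast} \varepsilon$ in normalized diamond distance. Combining this with $\tfrac{1}{2}\|\mathcal{U}_{\rho,t}^{\otimes m^{\ast}} - \mathcal{U}_{\sigma,t}^{\otimes m^{\ast}}\|_{\diamond} = 1$ and the triangle inequality, the two approximated channels are separated by at least $1 - 2m^{\ast}\varepsilon$ in normalized diamond distance. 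Selecting an input $\xi$ that achieves this bound and invoking the data processing inequality for the trace norm on the CPTP map $\kappa^{\otimes N} \mapsto (\mathcal{P}^{(n)} \circ \mathcal{A}_{\kappa^{\otimes n}})^{\otimes m^{\ast}}(\xi)$ yields $\tfrac{1}{2}\|\rho^{\otimes N} - \sigma^{\otimes N}\|_1 \geq 1 - 2 m^{\ast}\varepsilon$, with $N = n m^{\ast}$.

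The remainder of the argument is algebraic. I would apply the Fuchs--van de Graaf inequality $\tfrac{1}{2}\|\tau_0 - \tau_1\|_1 \leq \sqrt{1 - F(\tau_0,\tau_1)}$ together with multiplicativity of fidelity under tensor products, $F(\rho^{\otimes N}, \sigma^{\otimes N}) = F(\rho,\sigma)^N$, to deduce $F(\rho,\sigma)^N \leq 4 m^{\ast}\varepsilon \bigl(1 - m^{\ast}\varepsilon\bigr)$. Taking logarithms, using $\ln F(\rho,\sigma) < 0$ since $\rho \neq \sigma$, and solving for $n = N/m^{\ast}$ produces the claimed inequality~\eqref{eq: lower bound of n^ast}. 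The hypothesis $m^{\ast}\varepsilon \leq \tfrac{1}{2}$ is exactly what makes $4 m^{\ast}\varepsilon (1 - m^{\ast}\varepsilon) \leq 1$, so that the logarithm in the numerator is nonpositive and the lower bound is meaningful.

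The main obstacle is careful bookkeeping at the reduction step: one must verify that the $m^{\ast}$-fold composition genuinely consumes $m^{\ast} n = N$ independent copies of a single program state (rather than coherent or correlated copies), so that the induced procedure is a valid discrimination strategy on the product state $\rho^{\otimes N}$ versus $\sigma^{\otimes N}$, and that the optimizing input $\xi$ can be fixed in advance without knowledge of whether the program state is $\rho$ or $\sigma$. Both conditions follow from the structure of the diamond-norm supremum and from the fact that the protocol $\mathcal{P}^{(n)}$ is a fixed CPTP map independent of the program state, but they deserve explicit mention to ensure the reduction is airtight.
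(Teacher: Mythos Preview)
Your proposal is correct and follows essentially the same approach as the paper's proof: reduce to state discrimination by running the optimal protocol $m^{\ast}$ times in parallel, use subadditivity and the triangle inequality together with the definition of $m^{\ast}$ to get $\tfrac{1}{2}\|\rho^{\otimes nm^{\ast}} - \sigma^{\otimes nm^{\ast}}\|_1 \geq 1 - 2m^{\ast}\varepsilon$ via data processing, and then apply Fuchs--van de Graaf and multiplicativity of fidelity to solve for $n$. The paper additionally introduces an intermediate relaxed quantity $n_d^{\ast}(\rho,\sigma,t,\varepsilon)$ before carrying out these steps, but this is purely organizational and does not change the argument.
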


\begin{proof}[Proof of Lemma~\ref{lem: lower bound of sample complexity}]
We first loosen the condition of $n_{d}^{\ast}(t,\varepsilon)$ in~\eqref{eq:sample-complexity-ham-sim}. 
Specifically, instead of considering all possible quantum states $\kappa \in \mathcal{D}(\mathcal{H}^{d})$ on the right-hand side of~\eqref{eq:sample-complexity-ham-sim}, we only consider two fixed quantum states $\rho, \sigma \in \mathcal{D}(\mathcal{H}^{d})$ such that $\rho \neq \sigma$.
For these states, let us define $n^{\ast}(\rho,\sigma,t,\varepsilon)$ to be equal to the minimum number of samples of these states needed to realize a channel that is $\varepsilon$-distinguishable in normalized diamond distance from the ideal unitary evolution:
\begin{equation}
n^{\ast}_d(\rho,\sigma,t,\varepsilon)\coloneqq\inf_{\mathcal{P}^{(n)}
\in\text{CPTP}}\left\{
\begin{array}
[c]{c}
n\in\mathbb{N}:\frac{1}{2}\left\Vert \mathcal{P}^{(n)}\circ\mathcal{A}
_{\rho^{\otimes n}}-\mathcal{U}_{\rho,t}\right\Vert _{\diamond}\leq
\varepsilon,\\
\qquad\quad\,\frac{1}{2}\left\Vert \mathcal{P}^{(n)}\circ\mathcal{A}_{\sigma^{\otimes n}
}-\mathcal{U}_{\sigma,t}\right\Vert _{\diamond}\leq\varepsilon
\end{array}
\right\}  . \label{eq:sample-complexity-pair-ham-sim}
\end{equation}
Then, for all quantum states $\rho, \sigma \in \mathcal{D}(\mathcal{H}^{d})$, we have that
\begin{equation}\label{eq: relaxation}
n_{d}^{\ast}(t,\varepsilon)\geq n_d^{\ast}(\rho,\sigma,t,\varepsilon),
\end{equation}
because the right-hand side of~\eqref{eq:sample-complexity-pair-ham-sim} is a relaxation of the right-hand side of~\eqref{eq:sample-complexity-ham-sim}. 
Accordingly, our revised goal here is to find a lower bound on $n_d^{\ast}(\rho,\sigma,t,\varepsilon)$ in terms of $m^{\ast}(\rho,\sigma,t)$, which will in turn serve as a lower bound on $n_{d}^{\ast}(t,\varepsilon)$ by the inequality in~\eqref{eq: relaxation}.

We now consider $m^{\ast}$ parallel calls of the quantum channels introduced in~\eqref{eq:sample-complexity-pair-ham-sim}, where $m^{\ast} \equiv m^{\ast}(\rho,\sigma,t)$ is the zero-error query complexity.
More specifically, for $n \equiv n_d^{\ast}(\rho,\sigma,t,\varepsilon)$, and for $\mathcal{P}^{(n)}$ being the channel achieving the minimum value on the right-hand side of~\eqref{eq:sample-complexity-pair-ham-sim}, by using the subadditivity of the diamond distance in Lemma~\ref{lem:subadd-DD}, we conclude that
\begin{equation}\label{eq: dia-dis-using-subadd}
\frac{1}{2}\left\Vert \mathcal{P}^{(n)\otimes m^{\ast}}
\circ\mathcal{A}_{\kappa^{\otimes nm^{\ast}}}-\mathcal{U}_{\kappa,t}^{\otimes
m^{\ast}}\right\Vert _{\diamond}\leq 
\frac{m^{\ast}}{2}\left\Vert \mathcal{P}^{(n)}\circ\mathcal{A}_{\kappa^{\otimes n}
}-\mathcal{U}_{\kappa,t}\right\Vert _{\diamond}
\leq m^{\ast}\varepsilon,
\end{equation}
for $\kappa\in\left\{  \rho,\sigma\right\}  $. 
Then the trace distance between $\rho^{\otimes nm^{\ast}}$ and $\sigma^{\otimes nm^{\ast}}$ is bounded from below as follows
\begin{align}
  \frac{1}{2}\left\Vert \rho^{\otimes nm^{\ast}}-\sigma^{\otimes nm^{\ast}
}\right\Vert _{1}
&  \geq\frac{1}{2}\left\Vert \mathcal{P}^{(n)\otimes m^{\ast}}\circ
\mathcal{A}_{\rho^{\otimes nm^{\ast}}}-\mathcal{P}^{(n)\otimes m^{\ast}}
\circ\mathcal{A}_{\sigma^{\otimes nm^{\ast}}}\right\Vert _{\diamond} \label{eq: 48} \\
&  \geq\frac{1}{2}\left\Vert \mathcal{U}_{\rho,t}^{\otimes m^{\ast}
}-\mathcal{U}_{\sigma,t}^{\otimes m^{\ast}}\right\Vert _{\diamond}-\frac{1}
{2}\left\Vert \mathcal{P}^{(n)\otimes m^{\ast}}\circ\mathcal{A}_{\rho^{\otimes
nm^{\ast}}}-\mathcal{U}_{\rho,t}^{\otimes m^{\ast}}\right\Vert _{\diamond
}\nonumber\\
&  \qquad -\frac{1}{2}\left\Vert \mathcal{P}^{(n)\otimes m^{\ast}}
\circ\mathcal{A}_{\sigma^{\otimes nm^{\ast}}}-\mathcal{U}_{\sigma,t}^{\otimes
m^{\ast}}\right\Vert _{\diamond} \label{eq: 49} \\
&  \geq1-2m^{\ast}\varepsilon,
\end{align}
where we used the data-processing inequality in~\eqref{eq: 48}. Specifically, we are using the fact that the trace distance does not increase under the actions of tensoring in an arbitrary state $\omega$ and applying the channel $\mathcal{P}^{(n)\otimes m^{\ast}}$, so that
\begin{equation}
    \left\Vert \rho^{\otimes nm^{\ast}}-\sigma^{\otimes nm^{\ast}
}\right\Vert _{1} \geq \left\Vert \mathcal{P}^{(n)\otimes m^{\ast}}(\rho^{\otimes nm^{\ast}} \otimes \omega) -\mathcal{P}^{(n)\otimes m^{\ast}}(\sigma^{\otimes nm^{\ast}
} \otimes \omega) \right\Vert _{1}.
\end{equation}
This inequality holds for every state $\omega$, so then we can take the supremum over all such states, apply the definition of diamond distance, and arrive at the claimed inequality in~\eqref{eq: 48}. 
We also used the triangle inequality in~\eqref{eq: 49} and the relations given in~\eqref{eq: m-ast-definition} and~\eqref{eq: dia-dis-using-subadd} at the end. 
Therefore, we now have 
\begin{equation}\label{eq: sample complexity bound}
\frac{1}{2}\left(  1-\frac{1}{2}\left\Vert \rho^{\otimes nm^{\ast}}
-\sigma^{\otimes nm^{\ast}}\right\Vert _{1}\right)  \leq m^{\ast}\varepsilon.
\end{equation}
Then, assuming that $2m^{\ast}\varepsilon \leq 1 $, we arrive at the desired sample complexity bound:
\begin{equation}\label{eq: final lower bound of n^ast}
n_d^{\ast}(\rho,\sigma,t,\varepsilon)\geq\frac{\ln\!\left[
4m^{\ast}\varepsilon (1 - m^{\ast}\varepsilon) \right]  }{m^{\ast}\ln F(\rho,\sigma)}.
\end{equation}

To see this clearly, let us argue for this bound here. Note that by Fuchs–van de Graaf inequality~\cite{fuchs1999cryptographic}, the trace distance $\frac{1}{2}\left\| \rho - \sigma \right \|_{1}$ between two quantum states can be bounded by using the quantum fidelity $F(\rho, \sigma)$  as follows:
\begin{align}\label{eq: relation between trace distance and fidelity}
    \frac{1}{2}\left\| \rho - \sigma \right \|_{1} \le \sqrt{1 - F(\rho, \sigma)}.
\end{align}
Then, given that $2m^{\ast}\varepsilon \leq 1 $, combining~\eqref{eq: sample complexity bound} and~\eqref{eq: relation between trace distance and fidelity} results in 
\begin{align}\label{eq: trace distance and fidelity}
    F(\rho^{\otimes nm^{\ast}}, \sigma^{\otimes nm^{\ast}}) \leq 4m^{\ast}\varepsilon (1 - m^{\ast}\varepsilon) .
\end{align}
Therefore, taking the logarithm on both sides and using $F(\rho^{\otimes nm^{\ast}}, \sigma^{\otimes nm^{\ast}}) = F(\rho, \sigma)^{nm^{\ast}}$ gives 
\begin{equation}
nm^{\ast}\ln F(\rho, \sigma) \leq \ln[ 4m^{\ast}\varepsilon (1 - m^{\ast}\varepsilon)],
\end{equation} 
which finally gives a lower bound on the sample complexity in terms of the zero-error query complexity $m^{\ast}$, as stated in~\eqref{eq: final lower bound of n^ast}.
Substituting~\eqref{eq: final lower bound of n^ast} into the right-hand side of~\eqref{eq: relaxation} results in~\eqref{eq: lower bound of n^ast}. 
\end{proof}

\subsection{Optimality proof of the sample complexity bound}\label{section: optimality}

We now provide a proof of Theorem~\ref{Thm: optimality}. 
Specifically, in the proof, we recover the sample complexity lower bound given in~\eqref{eq: lower bound of sample complexity} from our sample complexity bound in Lemma~\ref{lem: lower bound of sample complexity}. 

\medskip 

\begin{proof}[Proof of Theorem~\ref{Thm: optimality}]
We first determine an analytic form for $m^{\ast}(\rho,\sigma,t)$ in~\eqref{eq: m-ast-definition} in terms of $\rho$, $\sigma$, and $t$, again under the assumption that $\rho \neq \sigma$.
More specifically, for all $m\in\mathbb{N}$, the normalized diamond distance between the unitary channels $\mathcal{U}_{\rho,t}^{\otimes m}$ and $\mathcal{U}_{\sigma,t}^{\otimes m}$ can be represented as~\cite[Theorem~3.55]{Watrous2018} 
\begin{equation}\label{eq: diamond distance between unitary channels}
\frac{1}{2}\left\Vert \mathcal{U}_{\rho,t}^{\otimes m}-\mathcal{U}_{\sigma
,t}^{\otimes m}\right\Vert _{\diamond}=\sqrt{1-\min_{|\psi^{(m)}\rangle
}\left\vert \langle\psi^{(m)}|\left(  e^{i\sigma t}e^{-i\rho t}\right)
^{\otimes m}|\psi^{(m)}\rangle\right\vert ^{2}}.
\end{equation}
To find $m^{\ast}(\rho,\sigma,\varepsilon,t)$, we thus need to find the minimum value of $m$ that makes the right-hand side of~\eqref{eq: diamond distance between unitary channels} equal to one.
To prove the lower bound stated in Theorem~\ref{Thm: optimality}, note that we have the freedom to choose a specific pair of states $\{\rho, \sigma\}$, and we can thus choose two-dimensional quantum states \textit{embedded} in a $d$-dimensional Hilbert space.
Then, the two quantum states $\rho$, $\sigma$ can be represented as
\begin{align}
    \rho = \frac{1}{2}(I + r\hat{n}\cdot\hat{\sigma}),\quad  \sigma = \frac{1}{2}(I + r'\hat{n}'\cdot\hat{\sigma}), 
\end{align}
where $I$ denotes the $2\times 2$ identity matrix, $\hat{\sigma}$ denotes the Pauli vector, and $r,r' \in [0,1]$ and $\hat{n},\hat{n}'$ denote the length and direction of the Bloch vectors of $\rho$ and $\sigma$, respectively (i.e., $\hat{n},\hat{n}'$ are unit vectors).
Using these representations, we can write the quantum fidelity between $\rho$ and $\sigma$ as~\cite{hubner1992explicit, jozsa1994fidelity} 
\begin{align}\label{eq: fidelity between two qubit states}
    F(\rho,\sigma) = {\rm Tr}[\rho \sigma] +2  \sqrt{\det\sigma \det\rho} = \frac{1}{2}\left( 1 + rr'(\hat{n}\cdot\hat{n}') + \sqrt{(1 - r^2)(1 - r'^2)} \right).
\end{align}
Also, we can write $e^{i\sigma t} e^{-i\rho t}$ as
\begin{align}\label{eq: product of exponent}
    e^{i\sigma t} e^{-i\rho t} = e^{i\theta \hat{n}''\cdot\hat{\sigma}} = \cos(\theta) I + i\sin(\theta) (\hat{n}''\cdot\hat{\sigma})
\end{align}
for some unit vector $\hat{n}''$, and $\theta$ given by
\begin{align}\label{eq: definition for theta}
    \cos\theta = \cos\!\left(\frac{rt}{2}\right)\cos\!\left(\frac{r't}{2}\right) + \sin\!\left(\frac{rt}{2}\right)\sin\!\left(\frac{r't}{2}\right)(\hat{n}\cdot\hat{n}').
\end{align}

Importantly, since the eigenvalues of $\hat{n}''\cdot\hat{\sigma}$ are $\pm 1$, it follows that the two non-zero eigenvalues of $e^{i\sigma t} e^{-i\rho t}$ in~\eqref{eq: product of exponent} are $e^{\pm i\theta}$, while the rest of the $d-2$ eigenvalues are equal to one. Then, one can determine that the eigenvalues of the tensor-power operator $\left(e^{i\sigma t} e^{-i\rho t}\right)^{\otimes m} $ are $\left\{ e^{\pm i\theta m}, e^{\pm i\theta (m-2)}, e^{\pm i\theta (m-4)}, \ldots, e^{\pm i\theta (m\!\mod 2)}, 1\right\}$.

Next, observe that the following equality holds:
\begin{equation}\label{eq: minimize}
    \min_{|\psi^{(m)}\rangle
}\left\vert \langle\psi^{(m)}|\left(  e^{i\sigma t}e^{-i\rho t}\right)
^{\otimes m}|\psi^{(m)}\rangle\right\vert ^{2} = \min_{|\psi^{(m)}\rangle} \left| \sum_{k} \mu_{k} \left| \langle \psi^{(m)}|\phi_{k}\rangle \right|^2  \right|^2,
\end{equation}
where $\{\mu_k\}_k$ and $\{|\phi_{k}\rangle\}_k$ are the eigenvalues and eigenvectors of $\left(e^{i\sigma t} e^{-i\rho t}\right)^{\otimes m}$, respectively. The term $\sum_{k} \mu_{k} \left| \langle \psi^{(m)}|\phi_{k}\rangle \right|^2$ in~\eqref{eq: minimize} is a convex combination of complex numbers that are distributed over the unit circle (i.e., the complex numbers are given by $\mu_k$ and the probabilities by $\left| \langle \psi^{(m)}|\phi_{k}\rangle \right|^2$). Hence, the problem reduces to minimizing the convex sum of the complex numbers  $\left\{ e^{\pm i\theta m}, e^{\pm i\theta (m-2)}, e^{\pm i\theta (m-4)}, \ldots, e^{\pm i\theta (m\!\mod 2)}, 1\right\}$ over the unit circle, and one can check that this is equal to zero whenever $\theta m \geq \frac{\pi}{2}$, as argued in Ref.~\cite{acin2001statistical}. This leads to the following condition on $m$ that makes the right-hand side of~\eqref{eq: diamond distance between unitary channels} equal to one:  $m \geq \frac{\pi}{2 \theta}$.

To further simplify the analysis, we consider $\hat{n} = (0,0,1)^{T}$ and $\hat{n}' = (0,0,-1)^{T}$, and $r = r'$. 
Then, from~\eqref{eq: fidelity between two qubit states} and~\eqref{eq: definition for theta}, we have
\begin{align}
    \theta = rt, \quad F(\rho,\sigma) = 1 - r^2 ,
\end{align}
which gives the query complexity $m^{\ast} = \lceil \frac{\pi}{2rt} \rceil$.
Accordingly, given that $m^{\ast}\varepsilon = \lceil \frac{\pi}{2rt} \rceil\varepsilon \leq \frac{1}{2}$, by~\eqref{eq: lower bound of n^ast}, the following holds:
\begin{align}
   n_d^{\ast}(\rho,\sigma,t,\varepsilon)
   \geq   \frac{-\ln\!\left[
    4m^{\ast}\varepsilon (1 - m^{\ast}\varepsilon) \right]  }{m^{\ast} \left[ - \ln F(\rho,\sigma)\right]} 
    = \frac{-\ln\!\left[
    4 \lceil \frac{\pi}{2rt} \rceil\varepsilon (1 -  \lceil \frac{\pi}{2rt} \rceil\varepsilon) \right]  }{- \lceil \frac{\pi}{2rt} \rceil\ln (1-r^2)}
\label{eq:not_final_exp}.
\end{align} 
Here, we set $r = \frac{\pi}{2zt}$, for a yet-to-be-determined positive integer $z \in \mathbb{Z}^{+}$, which gives $\lceil \frac{\pi}{2rt}\rceil = \lceil z \rceil = z$. 
Then, from~\eqref{eq:not_final_exp}, we have
\begin{align}
     n_d^{\ast}(\rho,\sigma,t,\varepsilon)
     \geq\frac{-\ln\!\left[
    4z \varepsilon (1 - z\varepsilon) \right]  }{-z\ln (1-\left(\frac{\pi}{2zt}\right)^2)}. \label{eq: not not final}
\end{align}
To proceed, note that $-\frac{1}{x^2}\ln(1 - x^2) \leq 1.151$ for $0 \leq x \leq \frac{1}{2}$.
Hence, let us assume $ \frac{\pi}{2zt} \leq \frac{1}{2}$, which implies the condition $z \geq \frac{\pi}{t}$.
Then, from the right-hand side of~\eqref{eq: not not final} we have
\begin{align}
     n_d^{\ast}(\rho,\sigma,t,\varepsilon)
     \geq  \frac{1}{1.151} \frac{-\ln\!\left[
    4z \varepsilon (1 - z\varepsilon) \right]  }{z \left( \frac{\pi}{2zt} \right)^2} 
    = \frac{-4 z\varepsilon\ln(4z\varepsilon(1-z\varepsilon))}{1.151 \pi^2}  \left( \frac{t^2}{\varepsilon}\right) . \label{eq: not final lower bound}
\end{align}
Here, let $\varepsilon \leq \frac{1}{10}$, which automatically satisfies our assumption $m^{\ast}\varepsilon = z\varepsilon \leq y_0 + \varepsilon\leq \frac{1}{2}$ for~\eqref{eq:not_final_exp}.
We now determine $z = \lceil \frac{y_0}{\varepsilon} \rceil$ for $y_0$ satisfying $y_0 + \varepsilon = 0.19$, such that $z\varepsilon = \lceil \frac{y_0}{\varepsilon} \rceil\varepsilon \leq y_0 + \varepsilon = 0.19$.
Also, given that $\varepsilon \leq \frac{1}{10}$, we have $y_0 \geq 0.09$.
Hence, we always have $z \varepsilon \in [0.09,0.19]$.
Notably, the last term $-y\ln(4y(1-y))$ in the right-hand side of~\eqref{eq: not final lower bound} for $y \in [0.09, 0.19]$ is minimized when $y = 0.19$.
Therefore, from the right-hand side of~\eqref{eq: not final lower bound}, we have
\begin{align}
     n_d^{\ast}(\rho,\sigma,t,\varepsilon)
     \geq \frac{-4(0.19) \ln(4(0.19)(1-(0.19))) }{1.151 \pi^2} \left( \frac{t^2}{\varepsilon}\right) \geq 0.032 \left( \frac{t^2}{\varepsilon}\right) ,
\end{align}
thus obtaining the desired bound. 

To summarize the conditions we have assumed during the proof, we have $\varepsilon \leq \frac{1}{10}$ and $z = \lceil \frac{y_0}{\varepsilon} \rceil\geq \frac{\pi}{t}$, where the latter leads to the condition $\varepsilon \leq \frac{9 t}{100\pi}$.
This concludes the proof. 
\end{proof}

\begin{remark}
We now have a lower bound consistent with the bound derived in Ref.~\cite{Kimmel2017}.
It is worth emphasizing that in the proof of Theorem~\ref{Thm: optimality}, a similar argument can be made for other pairs of states $\{\rho, \sigma\}$, but the states we used are sufficient for establishing our fundamental lower bound on the sample complexity of sample-based Hamiltonian simulation.
\end{remark}

\section{Upper bound on the sample complexity of wave matrix Lindbladization}

\label{sec:upper-bound-WML}

In this section, we derive an upper bound on the non-asymptotic sample complexity of sample-based Lindbladian simulation along with an explicit coefficient. The main result can be summarized as follows:
\begin{theorem} \label{Thm: error bound WML}
Suppose that the Lindbladian $\mathcal{L}$ acts nontrivially on a $d$-dimensional quantum system, where $d\in \mathbb{N}$ and $d\geq 2$. Let $t\geq0$, let $n\in\mathbb{N}$ be such that $n>2dt$, and let $\Delta\coloneqq\frac{t}{n}$. 
For every Lindbladian $\mathcal{L}$ satisfying $\left\|\mathcal{L}\right\|_\diamond\leq 2$, the error of wave matrix Lindbladization satisfies the following bound:
\begin{equation}
\frac{1}{2} \left\Vert e^{\mathcal{L}t}-\left(  \widetilde{e^{\mathcal{L}\Delta}}\right)
^{n}\right\Vert _{\diamond}\leq\frac{3t^{2}d^{2}}{n}.\label{eq:WML-err-bound}%
\end{equation}
where $e^{\mathcal{L}t}$ is defined in~\eqref{eq: ideal lindblad evolution by t} and $\widetilde{e^{\mathcal{L}t}}$ in~\eqref{eq:WML-actual-one-step}. 
\end{theorem}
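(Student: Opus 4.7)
The plan is to mirror the argument used for density matrix exponentiation in Theorem~\ref{Thm: error bound}. First, I would invoke the subadditivity of the diamond distance (Lemma~\ref{lem:subadd-DD}) to reduce the total simulation error to $n$ copies of the single-step error,
\begin{equation}
\frac{1}{2}\left\| e^{\mathcal{L}t} - \left(\widetilde{e^{\mathcal{L}\Delta}}\right)^{n}\right\|_{\diamond} \leq n\cdot\frac{1}{2}\left\| e^{\mathcal{L}\Delta} - \widetilde{e^{\mathcal{L}\Delta}}\right\|_{\diamond},
\end{equation}
thereby reducing the theorem to a single-step lemma analogous to Lemma~\ref{lem:single-step-DME-bound}: namely,
\begin{equation}
\frac{1}{2}\left\| e^{\mathcal{L}\Delta} - \widetilde{e^{\mathcal{L}\Delta}}\right\|_{\diamond} \leq 3 d^{2}\Delta^{2}
\end{equation}
whenever $2d\Delta < 1$, which is exactly the hypothesis $n > 2dt$. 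Substituting $\Delta = t/n$ then yields $n\cdot 3d^{2}\Delta^{2} = 3t^{2}d^{2}/n$, as claimed.

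To establish the single-step bound I would Taylor-expand both channels (with the reference system $R$ appended) around $\Delta=0$. The ideal Lindbladian evolution expands as $e^{\mathcal{L}\Delta}(\rho) = \sum_{k=0}^{\infty}\frac{\Delta^{k}}{k!}\mathcal{L}^{k}(\rho)$, and the WML approximation expands as $\widetilde{e^{\mathcal{L}\Delta}}(\rho) = \sum_{k=0}^{\infty}\frac{\Delta^{k}}{k!}\operatorname{Tr}_{23}\!\left[\mathcal{M}^{k}(\rho\otimes\psi_{L})\right]$. The defining property of the WML construction established in~\cite{patel2023wave1,patel2023wave2} is that the zeroth- and first-order terms coincide, i.e., $\operatorname{Tr}_{23}[\rho\otimes\psi_{L}]=\rho$ and $\operatorname{Tr}_{23}[\mathcal{M}(\rho\otimes\psi_{L})]=\mathcal{L}(\rho)$. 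Consequently the residual difference starts at order~$\Delta^{2}$.

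The key quantitative input is a norm bound on $\mathcal{M}$ in terms of the dimension~$d$. From the explicit form of $M$ in~\eqref{eq:WML-M-op} together with $\||\Gamma\rangle\!\langle\Gamma|\|_{\infty}=d$, one obtains $\|M\|_{\infty}\leq\sqrt{d}$, which via the standard single-operator Lindbladian inequality gives $\|\mathcal{M}\|_{\diamond}\leq 2d$. Combined with the hypothesis $\|\mathcal{L}\|_{\diamond}\leq 2$, submultiplicativity of the diamond norm, and the fact that both $\operatorname{Tr}_{23}$ and the appending map are trace-preserving, the triangle inequality applied term-by-term to the residual series yields
\begin{equation}
\left\| e^{\mathcal{L}\Delta}-\widetilde{e^{\mathcal{L}\Delta}}\right\|_{\diamond} \leq \sum_{k=2}^{\infty}\frac{\Delta^{k}}{k!}\left(\|\mathcal{L}\|_{\diamond}^{k} + \|\mathcal{M}\|_{\diamond}^{k}\right) \leq \left(e^{2\Delta}-1-2\Delta\right) + \left(e^{2d\Delta}-1-2d\Delta\right).
\end{equation}
Applying the elementary inequality $e^{x}-1-x \leq (e-2)\,x^{2}$ for $0\leq x\leq 1$, valid under the hypothesis $2d\Delta\leq 1$, produces a bound of the form $Cd^{2}\Delta^{2}$ with $C \leq 8(e-2)\approx 5.74$, which after halving is below $3d^{2}\Delta^{2}$ and gives the claimed prefactor.

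I expect the main obstacle to be securing the sharp constant $3$ rather than the qualitative $O(d^{2}\Delta^{2})$ scaling; the telescoping above loses some slack because it bounds each term by the worst case of either $\|\mathcal{L}\|_{\diamond}$ or $\|\mathcal{M}\|_{\diamond}$. If needed, one can isolate the $k=2$ contribution, which can be computed in closed form from the structure of $\mathcal{M}$ and $\psi_{L}$, and apply trace-preservation of $\mathcal{L}$ and $\mathcal{M}$ to cancel cross terms. The reference system $R$ poses no essential difficulty, since $\mathcal{L}$ and $\mathcal{M}$ act trivially on it and all norm estimates lift to the diamond norm through the definition $\|\mathcal{N}\|_{\diamond} = \sup_{\rho}\|(\mathcal{I}_{R}\otimes\mathcal{N})(\rho)\|_{1}$.
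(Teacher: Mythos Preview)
Your proposal is correct and follows essentially the same approach as the paper: reduce to a single-step bound via subadditivity of the diamond distance (Lemma~\ref{lem:subadd-DD}), Taylor-expand both $e^{\mathcal{L}\Delta}$ and $\widetilde{e^{\mathcal{L}\Delta}}$, use that the zeroth and first orders match, and control the residual series via $\|\mathcal{L}\|_{\diamond}\leq 2$ and $\|\mathcal{M}\|_{\diamond}\leq 2d$ together with an $e^{x}-1-x\leq Cx^{2}$ estimate on $[0,1]$. The only cosmetic differences are that the paper merges the two tails via $2^{k}\leq (2d)^{k}$ and uses the constant $C=3/4$, whereas you keep them separate and use $C=e-2$; both routes land on the prefactor~$3$ without needing the closed-form $k=2$ refinement you flagged as a potential fallback.
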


\medskip

From the error bound of the WML algorithm, we immediately conclude the following sample complexity lower bound:
\begin{corollary}[Upper bound on the sample complexity of WML]
\label{cor:samp-comp-upper-bnd-WML}
Theorem~\ref{Thm: error bound WML} implies  the following upper bound on the sample complexity of sample-based Lindbladian simulation:
\begin{equation}
    n_{d}^{\ast}(t,\varepsilon) \leq  3 d^2  \left( \frac{t^2}{\varepsilon} \right).
\end{equation}
\end{corollary}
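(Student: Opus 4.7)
The plan is to witness the sample complexity upper bound by the concrete CPTP map defined by $n$ iterations of the single-step WML channel $\widetilde{e^{\mathcal{L}\Delta}}$ given in~\eqref{eq:WML-actual-one-step}, each call consuming a single copy of the program state $\psi_L$. With $\Delta \coloneqq t/n$, this protocol satisfies $\mathcal{P}^{(n)}\circ\mathcal{A}_{\psi_L^{\otimes n}} = \bigl(\widetilde{e^{\mathcal{L}\Delta}}\bigr)^{n}$, which is a valid candidate for the infimum appearing in Definition~\ref{def: sample complexity}. Consequently, any integer $n$ for which this candidate meets the $\varepsilon$-closeness constraint yields $n_d^\ast(t,\varepsilon) \leq n$, and the task reduces to picking $n$ as small as the error bound of Theorem~\ref{Thm: error bound WML} permits.

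The central step is to choose $n \coloneqq \lceil 3 d^2 t^2/\varepsilon \rceil$ and check that the hypotheses of Theorem~\ref{Thm: error bound WML} are satisfied uniformly over admissible program states. The normalization convention $\|L\|_2 = 1$ forces $\|L\|_\infty \leq 1$, and a direct estimate $\|\mathcal{L}(\rho)\|_1 \leq \|L\rho L^\dagger\|_1 + \tfrac{1}{2}\|\{L^\dagger L,\rho\}\|_1 \leq 2\|L\|_\infty^2$ delivers $\|\mathcal{L}\|_\diamond \leq 2$. The hypothesis $n > 2 d t$ in the main regime of interest follows from $\varepsilon \in (0,1]$, which yields $n \geq 3 d^2 t^2/\varepsilon \geq 3 d^2 t^2 > 2 d t$ whenever $3 d t > 2$. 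With both hypotheses in force, Theorem~\ref{Thm: error bound WML} immediately gives
\begin{equation}
\frac{1}{2}\bigl\lVert e^{\mathcal{L} t} - \bigl(\widetilde{e^{\mathcal{L}\Delta}}\bigr)^{n}\bigr\rVert_\diamond \;\leq\; \frac{3 t^{2} d^{2}}{n} \;\leq\; \varepsilon,
\end{equation}
uniformly over all Lindbladians encoded by $d$-dimensional program states, so $n$ is admissible in the defining infimum and the claimed bound $n_d^\ast(t,\varepsilon) \leq 3 d^2 t^2/\varepsilon$ follows.

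The only subtlety, rather than a genuine obstacle, is the small-parameter corner in which $3 d t \leq 2$, where the chosen $n$ may fail to strictly exceed $2 d t$. This is handled by the mildly inflated choice $n \coloneqq \max\bigl\{\lceil 3 d^2 t^2/\varepsilon\rceil,\,\lceil 2 d t\rceil + 1\bigr\}$, which restores the hypothesis of Theorem~\ref{Thm: error bound WML} while remaining consistent with the corollary's bound in any parameter range where $n_d^\ast$ is informative (recalling that $n_d^\ast$ is an integer so the inequality is read up to standard ceiling conventions). No further estimates are needed: the corollary is a direct inversion of the per-step error bound supplied by the preceding theorem.
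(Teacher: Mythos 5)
Your proposal is correct and matches the paper's (implicit) argument exactly: the paper states the corollary as an immediate consequence of Theorem~\ref{Thm: error bound WML}, obtained by running $n$ iterations of $\widetilde{e^{\mathcal{L}\Delta}}$ with $n \approx 3d^2t^2/\varepsilon$ so that the error $3t^2d^2/n$ falls below $\varepsilon$. Your additional checks — that $\|L\|_2=1$ guarantees $\|\mathcal{L}\|_\diamond \leq 2$ and that the hypothesis $n > 2dt$ holds (with the small-$dt$ corner handled separately) — are more care than the paper itself exhibits, and are sound.
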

\begin{proof}[Proof of Theorem~\ref{Thm: error bound WML}]
By taking $\Delta = \frac{t}{n}$, consider the following chain of inequalities:
\begin{align}
\frac{1}{2} \left\Vert e^{\mathcal{L}t}-\left(  \widetilde{e^{\mathcal{L}\Delta}}\right)
^{n}\right\Vert _{\diamond} &  = \frac{1}{2} \left\Vert \left(  e^{\mathcal{L}\Delta
}\right)  ^{n}-\left(  \widetilde{e^{\mathcal{L}\Delta}}\right)
^{n}\right\Vert _{\diamond}\\
&  \leq n \cdot \frac{1}{2}  \left\Vert e^{\mathcal{L}\Delta}-\widetilde{e^{\mathcal{L}\Delta}%
}\right\Vert _{\diamond}\\
&  \leq n\left( 3\Delta^{2}d^{2} \right)  \\
&  =\frac{3t^{2}d^{2}}{n},
\end{align}
where we have inductively applied Lemma~\ref{lem:subadd-DD}\ to obtain the first inequality. The second inequality follows from
Lemma~\ref{lem:single-step-WML-bound}, which then implies
\eqref{eq:WML-err-bound} after substituting $\Delta=\frac{t}{n}$.
\end{proof}

\begin{lemma}
\label{lem:single-step-WML-bound}
Suppose that $\mathcal{L}$ acts on $d$-dimensional Hilbert space, where $d\in\mathbb{N}$ and $d\geq 2$. Let $t\geq0$, and let $n\in\mathbb{N}$ be such that
$n>2dt$, and set $\Delta\coloneqq\frac{t}{n}$, so that $\Delta<\frac{1}{2d}$.
For the channels $e^{\mathcal{L}\Delta}$ and $\widetilde{e^{\mathcal{L}\Delta
}}$ defined in~\eqref{eq: ideal lindblad evolution by t} with time interval $\Delta$ and
\eqref{eq:WML-actual-one-step}, respectively, the following inequality holds:%
\begin{equation}
\frac{1}{2}\left\Vert e^{\mathcal{L}\Delta}-\widetilde{e^{\mathcal{L}\Delta}}\right\Vert
_{\diamond}\leq 3\Delta^{2}d^{2}.
\end{equation}
\end{lemma}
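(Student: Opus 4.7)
The plan is to mirror the structure of the proof of Lemma~\ref{lem:single-step-DME-bound}: expand both $e^{\mathcal{L}\Delta}$ and $\widetilde{e^{\mathcal{L}\Delta}}$ as power series in $\Delta$, verify that the zeroth- and first-order contributions match, and then bound the diamond distance contributed by the remaining tail by $O(\Delta^2 d^2)$.

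The zeroth-order cancellation is trivial because both series start with $\rho$. The crucial algebraic identity underlying the design of WML is that the first-order terms also match, namely
\begin{equation*}
\operatorname{Tr}_{23}\!\left[\mathcal{M}(\rho \otimes \psi_L)\right] = \mathcal{L}(\rho).
\end{equation*}
This identity is precisely the design condition motivating the specific form of $M$ chosen in Refs.~\cite{patel2023wave1,patel2023wave2}; I will invoke it rather than reprove it. After cancelling the $k=0$ and $k=1$ terms, the residual difference takes the form
\begin{equation*}
e^{\mathcal{L}\Delta}(\rho) - \widetilde{e^{\mathcal{L}\Delta}}(\rho) = \sum_{k=2}^{\infty}\frac{\Delta^k}{k!}\Bigl(\mathcal{L}^k(\rho) - \operatorname{Tr}_{23}\!\left[\mathcal{M}^k(\rho \otimes \psi_L)\right]\Bigr).
\end{equation*}

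Tensoring with an arbitrary reference system $R$, applying the triangle inequality in the trace norm, and using the definition of the diamond norm together with the trace-norm contractivity of the partial trace yields
\begin{equation*}
\frac{1}{2}\left\|e^{\mathcal{L}\Delta}-\widetilde{e^{\mathcal{L}\Delta}}\right\|_\diamond \leq \sum_{k=2}^{\infty}\frac{\Delta^k}{k!}\cdot\frac{\|\mathcal{L}\|_\diamond^k + \|\mathcal{M}\|_\diamond^k}{2}.
\end{equation*}
The hypothesis gives $\|\mathcal{L}\|_\diamond \leq 2$. For $\mathcal{M}$, a direct computation using $\|\operatorname{SWAP}\|_\infty = 1$ and the fact that $|\Gamma\rangle\!\langle\Gamma|$ has operator norm $d$, combined with the $1/\sqrt{d}$ prefactor in the definition of $M$, gives $\|M\|_\infty \leq \sqrt{d}$ and therefore $\|\mathcal{M}\|_\diamond \leq 2\|M\|_\infty^2 \leq 2d$. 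Substituting these bounds and using the elementary inequality $e^x - 1 - x \leq (e-2)x^2$, valid for $x\in[0,1]$ (which applies to $x=2\Delta d$ and \emph{a fortiori} to $x=2\Delta$ under the hypothesis $\Delta < \frac{1}{2d}$), produces a bound of the form $2(e-2)\Delta^2(1+d^2) \leq 4(e-2)\Delta^2 d^2 < 3\Delta^2 d^2$, where the penultimate step uses $d\geq 2$ to absorb the Hamiltonian-type contribution into the dissipative one.

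The main obstacle I anticipate is keeping the constants sharp: the triangle-inequality route discards any cancellation between $\mathcal{L}^k(\rho)$ and $\operatorname{Tr}_{23}[\mathcal{M}^k(\rho \otimes \psi_L)]$, and a more refined term-by-term analysis could potentially lower the prefactor below $3$. Nonetheless, this straightforward route is sufficient for the stated bound, and the stipulated condition $\Delta < \frac{1}{2d}$ is exactly what is needed to place the argument $2\Delta d$ in the range where $e^x - 1 - x$ admits a clean quadratic upper bound.
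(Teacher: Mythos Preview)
Your proposal is correct and follows essentially the same route as the paper's proof: expand both channels in $\Delta$, invoke the WML design identity $\operatorname{Tr}_{23}[\mathcal{M}(\rho\otimes\psi_L)]=\mathcal{L}(\rho)$ for first-order cancellation, bound the $k\geq 2$ tail via the triangle inequality and the diamond-norm estimates $\|\mathcal{L}\|_\diamond\leq 2$, $\|\mathcal{M}\|_\diamond\leq 2d$, and finish with a quadratic bound on $e^x-1-x$ for $x\in[0,1]$. The only cosmetic differences are that the paper uses $e^x-1-x\leq\tfrac{3}{4}x^2$ and first collapses $2^k+(2d)^k\leq 2(2d)^k$, whereas you use the slightly sharper $(e-2)x^2$ and combine the two series at the end; both arrive at the stated constant~$3$.
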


\begin{proof}
Let us label the $d$-dimensional input system, on which the Lindbladian acts nontrivially as 1, and let us label the reference system as 0. Let us label the $d\times d$-dimensional program system by $2$ and $ 3$, which is prepared in the program state $\psi_L$. We take a series expansion with respect to $\Delta$, which leads to
\begin{align}
\widetilde{e^{\mathcal{L}\Delta}}(\rho)  & =\operatorname{Tr}_{23}[\left(
\operatorname{id}\otimes e^{\mathcal{M}\Delta}\right)  (\rho\otimes\psi
_{L})]\\
& =\rho+\left(  \operatorname{id}\otimes\mathcal{L}\right)  (\rho)\Delta
+\sum_{k=2}^{\infty}\operatorname{Tr}_{23}[\left(  \operatorname{id}%
\otimes\mathcal{M}^{k}\right)  (\rho\otimes\psi_{L})]\frac{\Delta^{k}}{k!},\label{eq:tmp-92}\\
e^{\mathcal{L}\Delta}(\rho)  & =\rho+\left(  \operatorname{id}\otimes
\mathcal{L}\right)  (\rho)\Delta+\sum_{k=2}^{\infty}\left(  \operatorname{id}%
\otimes\mathcal{L}\right)  ^{k}(\rho)\frac{\Delta^{k}}{k!}.
\end{align}
That the first-order term of~\eqref{eq:tmp-92} is equal to $\left(  \operatorname{id}\otimes\mathcal{L}\right)  (\rho)$ was proved in Ref.~\cite{patel2023wave1}. The key idea for this proof is to verify that $\operatorname{Tr}[M(\rho\otimes\psi_L)M^\dagger]$ and $\operatorname{Tr}[M^\dagger M(\rho\otimes\psi_L)]$ are equal to $L\rho L^\dagger$ and $L^\dagger L \rho$, respectively.
Exploiting the triangle inequality for the trace norm, we find that
\begin{align}
\left\Vert e^{\mathcal{L}\Delta}(\sigma_{01})-\widetilde{e^{\mathcal{L}%
\Delta}}(\sigma_{01})\right\Vert _{1}
&=\left\Vert \sum_{k=2}^{\infty}\left(  \mathcal{L}^{k}(\sigma_{01}%
)-\operatorname{Tr}_{23}[\mathcal{M}^{k}(\sigma_{01}\otimes\psi_{L})]\right)
\frac{\Delta^{k}}{k!}\right\Vert _{1}\\
& \leq\sum_{k=2}^{\infty}\left\Vert \mathcal{L}^{k}(\sigma_{01}%
)-\operatorname{Tr}_{23}[\mathcal{M}^{k}(\sigma_{01}\otimes\psi_{L}%
)]\right\Vert _{1}\frac{\Delta^{k}}{k!}\\
& \leq\sum_{k=2}^{\infty}\left(  \left\Vert \mathcal{L}^{k}(\sigma
_{01})\right\Vert _{1}+\left\Vert \operatorname{Tr}_{23}[\mathcal{M}%
^{k}(\sigma_{01}\otimes\psi_{L})]\right\Vert _{1}\right)  \frac{\Delta^{k}%
}{k!}\\
& \leq\sum_{k=2}^{\infty}\left(  \left\Vert \mathcal{L}^{k}(\sigma
_{01})\right\Vert _{1}+\left\Vert \mathcal{M}^{k}(\sigma_{01}\otimes\psi
_{L})\right\Vert _{1}\right)  \frac{\Delta^{k}}{k!}\\
& \leq\sum_{k=2}^{\infty}\left(  \left\Vert \mathcal{L}^{k}\right\Vert
_{\diamond}+\left\Vert \mathcal{M}^{k}\right\Vert _{\diamond}\right)
\frac{\Delta^{k}}{k!}\\
& \leq\sum_{k=2}^{\infty}\left(  \left\Vert \mathcal{L}\right\Vert _{\diamond
}^{k}+\left\Vert \mathcal{M}\right\Vert _{\diamond}^{k}\right)  \frac
{\Delta^{k}}{k!}.\label{eq:tmp_100}
\end{align}
Here, $\left\|\mathcal{L}\right\|_\diamond$ and $\left\|\mathcal{M}\right\|_\diamond$ are bounded from above by $2$ and $2d$, respectively (see Appendix~\ref{appendix:lind-dnorm} for a proof). Therefore, continuing from \eqref{eq:tmp_100},
\begin{align}
\left\Vert e^{\mathcal{L}\Delta}(\sigma_{01})-\widetilde{e^{\mathcal{L}\Delta
}}(\sigma_{01})\right\Vert _{1}  & \leq\sum_{k=2}^{\infty}\left(  \left\Vert
\mathcal{L}\right\Vert _{\diamond}^{k}+\left\Vert \mathcal{M}\right\Vert
_{\diamond}^{k}\right)  \frac{\Delta^{k}}{k!}.\\
& \leq\sum_{k=2}^{\infty}\left(  2^{k}+\left(  2d\right)  ^{k}\right)
\frac{\Delta^{k}}{k!}\\
& \leq2\sum_{k=2}^{\infty}\left(  2d\right)  ^{k}\frac{\Delta^{k}}{k!}
\end{align}
By the assumption $\Delta<\frac{1}{2d}$, it follows that $2d\Delta<1$,
which in turn implies%
\begin{equation}
2\sum_{k=2}^{\infty}\frac{\left(  2d\Delta\right)  ^{k}}{k!}\leq2\left(
\frac{3}{4}\right)  \left(2d\Delta \right)^2=6\Delta^{2}d^{2},
\end{equation}
The first inequality follows because $\sum_{k=2}^\infty \frac{x^k}{k!} = e^x - 1 -x \leq \frac{3}{4}x^2 $ for all $x\in[0,1]$.
This concludes the proof.
\end{proof}

\section{Lower bound on the sample complexity of Lindbladian simulation}

\label{sec:lower-bound-SBLS}

In this section, we derive a lower bound on the sample complexity of sample-based Lindbladian simulation, thus solving a question left open in~\cite{patel2023wave1,patel2023wave2}.
The upper bound on the sample complexity was argued to be $O(t^2/\varepsilon)$ in~\cite{patel2023wave1,patel2023wave2} (however, see the discussion in Appendix~\ref{appendix: incomplete}). 
In fact, we explicitly establish a matching lower bound on the sample complexity for a fixed dimension $d=2$.\\

\subsection{Single-Operator Lindbladian}

Let us first consider open quantum dynamics described by the Lindblad equation with a single Lindblad operator $L$, as given in~\eqref{eq:lindblad_dynamics}. We then have the following lower bound on the sample complexity:

\begin{theorem}[Lower bound for single-operator Lindbladian]
\label{thm:lower_bound_Lindbladian}
Consider an arbitrary sample-based protocol for simulating the Lindbladian evolution $e^{ \mathcal{L} t}$ with a single Lindblad operator $L$ and using the program state $\ket{\psi_L}=(L\otimes I)\ket{\Gamma}$. For every time $t \geq 0$ and every normalized diamond-distance error  $\varepsilon \in (0 ,  \min\{0.039, 0.013 t\}]$, such a protocol requires at least
\begin{equation}
    n_d^*(t,\varepsilon) \geq 10^{-4}\left( \frac{t^2}{\varepsilon} \right)
\end{equation}
copies of the program state in order to achieve an error $\varepsilon$.
\end{theorem}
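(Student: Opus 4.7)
The plan is to adapt the zero-error-query-complexity strategy behind Theorem~\ref{Thm: optimality} and Lemma~\ref{lem: lower bound of sample complexity}. The conceptual twist is that, unlike unitary channels, two generic Lindbladian channels cannot be made perfectly distinguishable with finitely many parallel uses, so I will replace Acin's zero-error query complexity by an explicit approximate-distinguishability bound computed on a well-chosen input. To this end, I restrict to $d=2$ and the qubit family $L_{\pm}\coloneqq A\,\mathbb{I}\mp iB\,\sigma_z$, with $A,B\in\mathbb{R}_{>0}$ and $A^2+B^2=\tfrac12$ so that $\|L_{\pm}\|_2=1$. A short direct expansion yields
\begin{equation}
\mathcal{L}_{\pm}(\rho)\;=\;\mp\, iAB\,[\sigma_z,\rho]\;+\;B^2\bigl(\sigma_z\rho\sigma_z-\rho\bigr),
\end{equation}
i.e., a $\sigma_z$-rotation at rate $AB$ plus a $\sigma_z$-dephasing of strength $B^2$. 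Because these terms commute, the induced channel $e^{\mathcal{L}_\pm t}$ multiplies off-diagonal entries by $e^{\mp 2iABt-2B^2 t}$, and the program states have overlap $\langle\psi_{L_+}|\psi_{L_-}\rangle=\operatorname{Tr}[L_+^\dagger L_-]=1-4B^2$.

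With this concrete family in hand, I would proceed in three steps. First, for any CPTP protocol $\mathcal{P}^{(n)}$ achieving imprecision $\varepsilon$ on both program states $\psi_{L_\pm}$, applying the protocol $m$ times in parallel and combining the subadditivity of the diamond distance (Lemma~\ref{lem:subadd-DD}), data processing, and the triangle inequality yields
\begin{equation}
\tfrac{1}{2}\,\bigl\|(e^{\mathcal{L}_+ t})^{\otimes m}-(e^{\mathcal{L}_- t})^{\otimes m}\bigr\|_\diamond
\;\leq\;2m\varepsilon\;+\;\tfrac{1}{2}\,\bigl\|\psi_{L_+}^{\otimes nm}-\psi_{L_-}^{\otimes nm}\bigr\|_1 .
\end{equation}
Second, I would lower-bound the left-hand side by evaluating the output trace distance on the $m$-qubit GHZ-type input $(|0\rangle^{\otimes m}+|1\rangle^{\otimes m})/\sqrt{2}$: the two channels produce the same coherence modulus $e^{-2mB^2 t}$ but opposite phases $\pm 2mABt$, so that a short calculation gives
\begin{equation}
\tfrac{1}{2}\,\bigl\|(e^{\mathcal{L}_+ t})^{\otimes m}-(e^{\mathcal{L}_- t})^{\otimes m}\bigr\|_\diamond
\;\geq\;e^{-2mB^2 t}\,\bigl|\sin(2mABt)\bigr| .
\end{equation}
Third, I would upper-bound the right-hand side via Fuchs--van de Graaf applied to the pure-state fidelity $F(\psi_{L_+},\psi_{L_-})=(1-4B^2)^2$.

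Combining the three bounds, squaring, and solving for $n$ in the spirit of the proof of Lemma~\ref{lem: lower bound of sample complexity} (using $-\ln(1-4B^2)\leq 4B^2/(1-4B^2)$) gives a lower bound roughly of the form
\begin{equation}
n\;\geq\;\frac{(1-4B^2)\,\bigl(e^{-2mB^2 t}|\sin(2mABt)|-2m\varepsilon\bigr)^2}{8\,m\,B^2}.
\end{equation}
I would then tune $m=\lceil\pi/(4ABt)\rceil$ so that $|\sin(2mABt)|$ is close to $1$ and set $B\propto\varepsilon/t$, which simultaneously keeps $2m\varepsilon$ bounded by a constant strictly less than the sine factor, keeps $e^{-2mB^2 t}$ close to $1$, and makes the denominator $mB^2$ scale as $1/t$; substituting yields $n\geq c\,t^2/\varepsilon$ for an absolute constant $c$. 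The main technical obstacle is the careful bookkeeping of constants: the ceiling in $m$, the exponential damping, and the $2m\varepsilon$ subtraction must all be controlled simultaneously so that the surviving coefficient meets the stated value $10^{-4}$ and the admissible range $\varepsilon\leq\min\{0.039,\,0.013\,t\}$, the latter arising from the constraints $m\geq 1$ and $B<1/\sqrt{2}$ respectively. Dimension-independence of the bound is immediate since the whole construction lives in $d=2$.
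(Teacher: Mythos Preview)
Your proposal is correct and follows essentially the same route as the paper. Both arguments replace the zero-error query complexity of Lemma~\ref{lem: lower bound of sample complexity} by its approximate version (the paper states this as Lemma~\ref{lem: lower_bound_m_copy}), restrict to $d=2$ with a one-parameter family of diagonal qubit Lindblad operators, evaluate the $m$-fold diamond distance on a GHZ input, and then tune the phase parameter and $m$ to produce the $t^2/\varepsilon$ scaling. Your family $L_\pm=A\mathbb{I}\mp iB\sigma_z$ is, up to a global phase, exactly the paper's $L_\varphi=\tfrac{1}{\sqrt2}\operatorname{diag}(1,e^{i\varphi})$ with $A=\cos(\varphi/2)/\sqrt2$, $B=\sin(\varphi/2)/\sqrt2$; the only genuine difference is that you compare a symmetric pair $L_+,L_-$ whereas the paper compares $L_\varphi$ against the trivial $L_0$, which simplifies their channel-distance computation slightly (one side is the identity) and leads to the clean bound $\nu_m\geq\tfrac12$ rather than your damped $e^{-2mB^2t}|\sin(2mABt)|$. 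Either choice suffices; the remaining work is, as you say, bookkeeping the constants to land on $10^{-4}$ and the stated $\varepsilon$-range.
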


\begin{proof}
Similar to the Hamiltonian simulation case in Theorem~\ref{Thm: optimality}, we consider two different Lindblad dynamics for a two-dimensional system, i.e., $d=2$, and with the Lindblad operators having the following parameterization:
\begin{equation}
    L_\varphi = \frac{1}{\sqrt{2}} \left( \begin{matrix}
        1 & 0 \\
        0 & e^{i\varphi}
    \end{matrix}\right).
    \label{eq:lower-bnd-WML-choice}
\end{equation}
We take $\varphi=0$ for one of the dynamics, which is a trivial dynamics, ${\cal L}_{L_0}(\rho) = 0$. To specify the Lindbladian dynamics with a given Lindblad operator $L$, we define
\begin{equation}
    \mathcal{L}_L(\rho) \coloneqq L \rho L^\dagger - \frac{1}{2} \left\{L^\dagger L, \rho \right\},
\end{equation}
and the corresponding Lindbladian dynamics as $e^{{\cal L}_L t }$ for time $t$.

We then apply Lemma~\ref{lem: lower_bound_m_copy} below, which generalizes Lemma~\ref{lem: lower bound of sample complexity}.
 We note that when taking $m = m^*(\rho, \sigma, t)$ corresponding to the zero-error query complexity defined in~\eqref{eq: m-ast-definition}, the lower bound reduces to that of Lemma~\ref{lem: lower bound of sample complexity}. However, Lemma~\ref{lem: lower_bound_m_copy} is more general as it can be applied for general $\nu_m$, which is not necessarily equal to one.

Now we evaluate the lower bound in~\eqref{eq:lower_bound_m_copy} by taking $\rho = \psi_{L_\varphi} = \ket{\psi_{L_\varphi}}\! \bra{\psi_{L_\varphi}}$ and $\sigma = \psi_{L_0} = \ket{\psi_{L_0}} \!\bra{\psi_{L_0}}= \frac{1}{2} \ket{\Gamma}\!\bra{\Gamma}$, with the target dynamics ${\cal E}_{\rho,t} = e^{{\cal L}_{L_\varphi} t}$ and ${\cal E}_{\sigma,t} = e^{{\cal L}_{L_0} t} = \mathrm{id}$. By noting that the normalized diamond distance involves the supremum over all possible input states, it follows that
\begin{equation}
    \nu_m \geq \frac{1}{2} \left\| \left[ e^{{\cal L}_{L_\varphi} t}\right]^{\otimes m} (\tau) - \tau \right\|_1,
\end{equation}
for every $m$-qubit density matrix $\tau$. Now let us set $\tau$ to be the Greenberger–Horne–Zeilinger state 
\begin{equation}
    \tau = \frac{1}{2} \left( \ket{0}\!\bra{0}^{\otimes m} + \ket{0}\!\bra{1}^{\otimes m} + \ket{1}\!\bra{0}^{\otimes m} + \ket{1}\!\bra{1}^{\otimes m} \right).
\end{equation}
Observing that $\mathcal{L}_{L_\varphi}(X) = L_\varphi X L_\varphi^\dag - \frac{1}{2} X$ for a general $2\times 2 $ matrix $X$, we note that the superoperator $\mathcal{L}_{L_\varphi}$ has the  following action:
\begin{align}
    {\cal L}_{L_\varphi} (\ket{0}\!\bra{0}) &= 0 ,\\
    {\cal L}_{L_\varphi} (\ket{0}\!\bra{1}) &= \frac{1}{2}\left( e^{-i\varphi} -1 \right) \ket{0}\!\bra{1} ,\\
    {\cal L}_{L_\varphi} (\ket{1}\!\bra{0}) &= \frac{1}{2}\left( e^{i\varphi} -1 \right) \ket{1}\!\bra{0} ,\\
    {\cal L}_{L_\varphi} (\ket{1}\!\bra{1}) &= 0.
\end{align}
This leads to the evolution of the density matrix elements corresponding to $\ket{0}\!\bra{1}$ and $\ket{1}\!\bra{0}$ as $\rho_{01}(t) = e^{\frac{1}{2}\left( e^{-i\varphi} -1 \right) t } \rho_{01}(0)$ and $\rho_{10}(t) = e^{\frac{1}{2}\left( e^{i\varphi} -1 \right) t } \rho_{10}(0)$, respectively. Hence, the $m$-repetition of these dynamics is as follows:
\begin{equation}
    \left[ e^{ {\cal L}_{L_\varphi} t} \right]^{\otimes m} (\tau) = \frac{1}{2} \left( \ket{0}\!\bra{0}^{\otimes m} + e^{\frac{1}{2}\left( e^{-i\varphi} -1 \right) mt }\ket{0}\!\bra{1}^{\otimes m} + e^{\frac{1}{2}\left( e^{i\varphi} -1 \right) mt }\ket{1}\!\bra{0}^{\otimes m} + \ket{1}\!\bra{1}^{\otimes m} \right).
\end{equation}
As the diagonal elements do not change, we obtain
\begin{align}
    &\frac{1}{2} \left\| \left[ e^{{\cal L}_{L_\varphi} t}\right]^{\otimes m} (\tau) - \tau \right\|_1 \notag \\
    &= \frac{1}{4} \left\| (e^{\frac{1}{2}\left( e^{-i\varphi} -1 \right) mt }-1)\ket{0}\!\bra{1}^{\otimes m} + (e^{\frac{1}{2}\left( e^{i\varphi} -1 \right) mt }-1)\ket{1}\!\bra{0}^{\otimes m} \right\|_1\\
    &= \frac{1}{2} \left[ 1 + e^{-mt (1-\cos\varphi)} - 2 e^{-\frac{mt}{2}(1-\cos\varphi)} \cos\!\left(\frac{mt}{2}\sin\varphi\right)\right]^{\frac{1}{2}}.
\end{align}
Now, we take the parameter $\varphi$ to satisfy $\sin\varphi = \frac{2\pi}{mt}$, which leads to 
\begin{equation}
    \nu_m \geq \frac{1}{2} \left| 1 + e^{-\frac{mt}{2} (1-\cos\varphi)}\right|
    \geq \frac{1}{2}.
\end{equation}
Meanwhile, the fidelity between the program states can be directly calculated as
\begin{equation}
    F(\psi_{L_\varphi}, \psi_{L_0}) = |\bra{\Gamma} (L_0^\dagger \otimes \mathbb{I}) (L_\varphi \otimes \mathbb{I}) \ket{\Gamma}|^2 = \frac{1+\cos\varphi}{2} = \frac{1 + \sqrt{1 - \left( \frac{2\pi}{mt} \right)^2}}{2}.
\end{equation}
The lower bound on the sample complexity in \eqref{eq:lower_bound_m_copy} then becomes
\begin{align}
    n_d^*(t,\varepsilon) &\geq \frac{-\ln\!\left[ 1- \left(\frac{1}{2}- 2m \varepsilon \right)^2 \right]}{m \left[-\ln \!\left( \frac{1 + \sqrt{1 - \left( \frac{2\pi}{mt} \right)^2}}{2} \right) \right]} \\
    &\geq \frac{-\ln\!\left[ 1- \left(\frac{1}{2} - 2m \varepsilon \right)^2 \right]}{m \left(\ln 2\right) \left( \frac{2\pi}{mt} \right)^2} \\
    &= \left( \frac{1}{4\pi^2 \ln 2} \right) \left(\frac{t^2}{\varepsilon}\right) (m\varepsilon) \left[  -\ln\!\left[ 1- \left(\frac{1}{2}  - 2m \varepsilon \right)^2 \right]\right],
\end{align}
from the fact that $-\frac{1}{x^2}\ln \!\left(\frac{ 1 + \sqrt{1 - x^2}}{2} \right) \leq \ln 2$ for $0 \leq x \leq 1$. Finally, by taking $m = \lfloor \alpha/\varepsilon \rfloor$ and $\varepsilon \leq \frac{\alpha}{2}$, which implies $ \frac{1}{2} \alpha \leq \left( 1- \frac{\varepsilon}{\alpha}\right) \alpha\leq m\varepsilon \leq \alpha$, we can bound the last term as
\begin{equation}
\label{eq:bound_lindbladian_ft}
        (m\varepsilon) \left[  -\ln\!\left[ 1- \left(\frac{1}{2}  - 2m \varepsilon \right)^2 \right]\right] \geq \frac{\alpha}{2} \left[  -\ln\!\left[ 1- \left(\frac{1}{2}  - 2 \alpha \right)^2 \right]\right],
\end{equation}
where the right-hand side of the inequality achieves the maximum value $\approx 0.0049$ by taking $\alpha = \alpha^* \approx 0.08$. By combining all of these, we obtain the lower bound of sample complexity as
\begin{equation}
    n_d^*(t,\varepsilon) \geq \left( \frac{0.0049}{4\pi^2 \ln 2} \right) \left(\frac{t^2}{\varepsilon}\right) \geq 0.00018 \left(\frac{t^2}{\varepsilon}\right),
\end{equation}
under the conditions $ \frac{2\pi \epsilon}{\alpha^* t} \leq \frac{2 \pi}{mt} \leq 1 \Rightarrow  \frac{\varepsilon}{t} \leq \frac{\alpha^*}{2\pi} \leq 0.013$ and $\varepsilon \leq \frac{\alpha^*}{2} \leq 0.039$.
\end{proof}

\begin{lemma} \label{lem: lower_bound_m_copy}Let $\rho, \sigma \in \mathcal{D}(\mathcal{H}^{d})$ be arbitrary quantum states such that $\rho \neq \sigma$. For all $m \in \mathbb{N}$, consider the normalized diamond distance between $m$ queries to the ideal channels ${\cal E}_{\rho,t}$ and ${\cal E}_{\sigma,t}$, where these are of the form stated just below~\eqref{eq:channel_error_definition}:
\begin{equation}
    \nu_m(\rho,\sigma,t) \coloneqq  \frac{1}{2} \left\| {\cal E}_{\rho,t}^{\otimes m} - {\cal E}_{\sigma,t}^{\otimes m} \right\|_\diamond.
\end{equation}    
Then, provided that $ \nu_m(\rho,\sigma,t) \geq 2m \varepsilon$, we have the following lower bound on the sample complexity $n_{d}^{\ast}(t,\varepsilon)$:
    \begin{equation} \label{eq:lower_bound_m_copy}
        n^*_d(t, \varepsilon) \geq \frac{-\ln\!\left[ 1- (\nu_m(\rho,\sigma,t) - 2m \varepsilon)^2 \right]}{m \left[-\ln F(\rho,\sigma) \right]}\, .
    \end{equation}
\end{lemma}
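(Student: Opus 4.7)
The plan is to mimic the argument used for Lemma~\ref{lem: lower bound of sample complexity}, but replace the perfect-distinguishability hypothesis (which used the zero-error query complexity $m^{\ast}$) with the more general quantitative statement $\nu_m(\rho,\sigma,t) \geq 2m\varepsilon$. The punchline of Lemma~\ref{lem: lower bound of sample complexity} was that perfect distinguishability of $\mathcal{U}_{\rho,t}^{\otimes m}$ and $\mathcal{U}_{\sigma,t}^{\otimes m}$ forced the trace distance between the $nm^{\ast}$-copy program states to be large; here the same manipulation works verbatim with ``$1$'' replaced by ``$\nu_m$'' on the right-hand side of the corresponding diamond-distance estimate.

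Concretely, I would first restrict to the pair $\{\rho,\sigma\}$ in Definition~\ref{def: sample complexity}, giving a relaxed quantity $n_d^{\ast}(\rho,\sigma,t,\varepsilon) \leq n_d^{\ast}(t,\varepsilon)$. Writing $n = n_d^{\ast}(\rho,\sigma,t,\varepsilon)$ and letting $\mathcal{P}^{(n)}$ attain the infimum, subadditivity of the diamond distance (Lemma~\ref{lem:subadd-DD}) applied to $m$ parallel copies gives
\begin{equation}
\frac{1}{2}\left\Vert \mathcal{P}^{(n)\otimes m}\circ\mathcal{A}_{\kappa^{\otimes nm}} - \mathcal{E}_{\kappa,t}^{\otimes m}\right\Vert_{\diamond} \leq m\varepsilon
\end{equation}
for each $\kappa\in\{\rho,\sigma\}$. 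Then the triangle inequality yields
\begin{equation}
\frac{1}{2}\left\Vert \mathcal{P}^{(n)\otimes m}\circ\mathcal{A}_{\rho^{\otimes nm}} - \mathcal{P}^{(n)\otimes m}\circ\mathcal{A}_{\sigma^{\otimes nm}}\right\Vert_{\diamond} \geq \nu_m(\rho,\sigma,t) - 2m\varepsilon,
\end{equation}
which is non-negative by hypothesis.

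Next I would apply the data-processing inequality for the trace distance (as in the step producing \eqref{eq: 48}) to the left-hand side, picking up a supremum over auxiliary states $\omega$ and using the definition of diamond distance, to conclude
\begin{equation}
\frac{1}{2}\left\Vert \rho^{\otimes nm} - \sigma^{\otimes nm}\right\Vert_{1} \geq \nu_m(\rho,\sigma,t) - 2m\varepsilon.
\end{equation}
The Fuchs--van de Graaf inequality \eqref{eq: relation between trace distance and fidelity} then gives $F(\rho^{\otimes nm},\sigma^{\otimes nm}) \leq 1 - (\nu_m - 2m\varepsilon)^2$. Using multiplicativity of fidelity under tensor products, $F(\rho,\sigma)^{nm} \leq 1 - (\nu_m - 2m\varepsilon)^2$, and taking logarithms (both sides of the logarithm being negative, which flips no inequalities after dividing by $-\ln F(\rho,\sigma) > 0$) produces
\begin{equation}
n \geq \frac{-\ln\!\left[1 - (\nu_m(\rho,\sigma,t) - 2m\varepsilon)^2\right]}{m\left[-\ln F(\rho,\sigma)\right]},
\end{equation}
after which the relaxation $n_d^{\ast}(t,\varepsilon) \geq n_d^{\ast}(\rho,\sigma,t,\varepsilon)$ finishes the proof.

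There is essentially no technical obstacle here: every ingredient (subadditivity of the diamond distance, data processing, Fuchs--van de Graaf, and multiplicativity of fidelity) has already been used in the proof of Lemma~\ref{lem: lower bound of sample complexity}, and setting $m = m^{\ast}(\rho,\sigma,t)$ and $\nu_m = 1$ recovers that lemma as a special case. The only mild subtlety is checking that the hypothesis $\nu_m \geq 2m\varepsilon$ is exactly the condition ensuring that the numerator $-\ln[1-(\nu_m - 2m\varepsilon)^2]$ is well-defined and non-negative, which is where the assumption earns its keep.
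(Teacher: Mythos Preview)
Your proposal is correct and follows essentially the same approach as the paper's proof in Appendix~\ref{appendix:another_lower_bound}: relax to the pair $\{\rho,\sigma\}$, use subadditivity of the diamond distance for $m$ parallel calls, apply the triangle inequality and data processing to lower-bound $\frac{1}{2}\|\rho^{\otimes nm}-\sigma^{\otimes nm}\|_1$ by $\nu_m-2m\varepsilon$, then invoke Fuchs--van~de~Graaf and multiplicativity of fidelity before taking logarithms. The only cosmetic difference is the order in which the paper presents the data-processing and triangle-inequality steps.
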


\begin{proof}
See Appendix~\ref{appendix:another_lower_bound}.
\end{proof}

\subsection{General Lindbladian}

Building on the lower bound established in the previous section for sample-based simulation of single-operator Lindbladians (Theorem~\ref{thm:lower_bound_Lindbladian}), along with the lower bound for sample-based Hamiltonian simulation (Theorem~\ref{Thm: optimality}), we now derive lower bounds for sample-based simulation of general Lindbladians, that is, Lindbladians composed of both multiple Lindblad operators and a Hamiltonian term. Specifically, we consider Lindbladians of the form
\begin{equation}\label{eq:lindbladmaster-gen}
    \mathcal{L}(\rho) \coloneqq -i[H, \rho] + \sum_{k=1}^{K} \left( L_k \rho L_k^{\dagger} - \frac{1}{2} \left\{ L_k^{\dagger} L_k, \rho \right\} \right).
\end{equation}

We assume that the Hamiltonian $H$ is given as a linear combination of program state chosen from the set $\{ \sigma_j \}_{j=1}^{J}$:
\begin{equation}
    H \coloneqq \sum_{j=1}^{J} c_j \sigma_j,
\end{equation}
where $c_j \in \mathbb{R}$. Each Lindblad operator $L_k$ is assumed to be a local operator acting on a constant number of qubits, and is encoded into a pure state $|\psi_k\rangle$ as
\begin{equation}
    |\psi_k\rangle \coloneqq \frac{(L_k \otimes I) |\Gamma\rangle}{\left\| L_k \right\|_2}.
    \label{eq:psi_k-L_k-def}
\end{equation}

In Ref.~\cite{patel2023wave2}, the authors proposed an algorithm that uses
\begin{equation}
    n_j = O\!\left( \frac{|c_j| c t^2}{\varepsilon} \right)
\end{equation}
copies of the program state $\sigma_j$, and
\begin{equation}
    m_k = O\!\left( \frac{\| L_k \|_2^2 c t^2}{\varepsilon} \right)
\end{equation}
copies of the program state $|\psi_k\rangle$, in order to approximately simulate the quantum channel $e^{\mathcal{L}t}$ to within normalized diamond-distance error $\varepsilon \in [0, 1]$. Here $c$ is defined as follows:
\begin{equation}
    c\coloneqq \sum_{j=1}^{J}\left\vert c_{j}\right\vert +\sum_{k=1}^{K}\left\Vert
L_{k}\right\Vert_{2}^{2}\label{eq:def-c}.
\end{equation}
This means that the total number of program states we need is
\begin{equation}
    \sum_j n_j + \sum_k m_k = O\!\left( \frac{c^2 t^2}{\varepsilon} \right).
\end{equation}
Note that this holds only for Lindbladians with local Lindblad operators that act nontrivially on a constant number of qubits. For general Lindbladians, the sample complexity may scale with the system dimension $d$, as seen in the single-operator case. For a more detailed derivation of the above equation, refer to~\cite[Appendix~D.1]{Sims2025}.

In what follows, we establish a lower bound for simulating general Lindbladians using sample-based protocols with local Lindblad operators, and we show that the WML algorithm of Ref.~\cite{patel2023wave2} is asymptotically optimal.

\begin{theorem}[Lower bound for general Lindbladian]
\label{thm:gen-lind-lower-bnd}
Any sample-based protocol that approximately simulates the evolution $e^{\mathcal{L}t}$ generated by a general Lindbladian of the form in~\eqref{eq:lindbladmaster-gen}, using program states chosen from the set $\{\sigma_j\}_{j=1}^{J}$ for the Hamiltonian $H$ and $\{|\psi_k\rangle\}_{k=1}^{K}$ for the Lindblad operators, must use at least
\begin{equation}
    \Omega\!\left( \frac{c^2 t^2}{\varepsilon} \right)
\end{equation}
copies of these program states to achieve normalized diamond-distance error at most $\varepsilon \in (0, 1]$, where $c$ is defined as in~\eqref{eq:def-c}. This lower bound holds under the assumption that each Lindblad operator $L_k$ acts nontrivially on a constant number of qubits.
\end{theorem}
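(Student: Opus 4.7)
The plan is a reduction argument: since the family of Lindbladians in~\eqref{eq:lindbladmaster-gen} contains both the pure-Hamiltonian case ($K=0$) and the single-dissipative-operator case ($J=0$, $K=1$) as special instances, any universal sample-based protocol must in particular succeed on each subclass, so I can extract the desired $\Omega(c^2 t^2/\varepsilon)$ lower bound by exhibiting a single hard instance within either subclass.

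First I would take the pure-Hamiltonian instance with $J=1$, $K=0$, $H = c_1 \sigma_1$, and $|c_1| = c$, so that $e^{\mathcal{L}t}$ coincides with the unitary channel $\mathcal{U}_{\sigma_1, c_1 t}$ from~\eqref{eq: ideal evolution by t}. Applying Theorem~\ref{Thm: optimality} with effective evolution time $t' \coloneqq c t$ then yields a lower bound of $\tfrac{32}{1000}(ct)^2/\varepsilon = \Omega(c^2 t^2/\varepsilon)$ copies of $\sigma_1$, which alone establishes the claim in this regime.

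Second, I would take the pure-dissipative instance with $J=0$, $K=1$, and a single Lindblad operator $L_1 = \sqrt{c}\,\tilde{L}_1$ where $\|\tilde{L}_1\|_2 = 1$, so that $\|L_1\|_2^2 = c$. Linearity of the dissipator in~\eqref{eq:lindblad_dynamics} gives $\mathcal{L} = c\, \mathcal{L}_{\tilde{L}_1}$ and hence $e^{\mathcal{L}t} = e^{ct\,\mathcal{L}_{\tilde{L}_1}}$. Crucially, by the normalization convention in~\eqref{eq:psi_k-L_k-def}, the program state $|\psi_1\rangle = (\tilde{L}_1 \otimes I)|\Gamma\rangle$ is independent of $c$, so I am genuinely reducing to the simulation of a normalized single-operator Lindbladian but for a longer effective time $ct$. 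Invoking Theorem~\ref{thm:lower_bound_Lindbladian} with this effective time yields $\Omega((ct)^2/\varepsilon) = \Omega(c^2 t^2/\varepsilon)$ copies of $|\psi_1\rangle$. In both special instances the total program-state count is $\Omega(c^2 t^2/\varepsilon)$, so any universal protocol achieving error $\varepsilon$ on the whole class of Lindbladians in~\eqref{eq:lindbladmaster-gen} must consume at least that many samples on the worst instance.

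The main technical obstacle is a bookkeeping issue on the parameter ranges: the admissible intervals for $\varepsilon$ in Theorems~\ref{Thm: optimality} and~\ref{thm:lower_bound_Lindbladian} depend on the effective evolution time, so under the rescaling $t \mapsto ct$ the constraints take the form $\varepsilon \lesssim ct$, and I would need to carry these conditions carefully through the $\Omega$-notation in order to state the bound cleanly for all $\varepsilon \in (0,1]$ up to constants. The assumption in the theorem statement that each $L_k$ acts nontrivially on a constant number of qubits ensures that the dimension parameter $d$ appearing in Theorem~\ref{thm:lower_bound_Lindbladian} is $O(1)$, so the hidden constants in the $\Omega$-notation are uniform and the two hard instances above combine into a single uniform lower bound.
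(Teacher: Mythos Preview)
Your reduction picks specific \emph{structures} $(J,K)=(1,0)$ or $(0,1)$, but the theorem is meant to hold for every fixed decomposition $(J,K,\{c_j\},\{\|L_k\|_2\})$. The quantity $c$ in~\eqref{eq:def-c} is determined by that decomposition, and the protocol is allowed to know it: the matching upper bound uses $n_j=O(|c_j|ct^2/\varepsilon)$ copies of $\sigma_j$ and $m_k=O(\|L_k\|_2^2 ct^2/\varepsilon)$ copies of $|\psi_k\rangle$, so it is tailored to the given coefficients. A protocol designed for, say, $J=10$, $K=5$ with prescribed $c_j$'s and $\|L_k\|_2$'s receives copies of fifteen distinct program states; there is no meaningful sense in which it ``must also succeed'' on the $(J,K)=(1,0)$ instance you construct, because that instance has a different input type altogether. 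So your argument only lower-bounds the sample complexity for those two particular structures and says nothing about a general one.

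The paper's proof keeps the given structure intact and instead chooses the \emph{program states} adversarially so that the resulting Lindbladian collapses to a single effective term carrying weight at least $c/3$. Concretely, at least one of $\sum_{c_j>0}c_j$, $\sum_{c_j<0}|c_j|$, $\sum_k\|L_k\|_2^2$ is $\geq c/3$; in the first case, set every $\sigma_j$ with $c_j>0$ equal to a common state $\sigma$, set the remaining $\sigma_j$ and all $L_k$ proportional to identity, and the full Lindbladian becomes $-i(\sum_{c_j>0}c_j)[\sigma,\cdot]$. The protocol now receives all its copies (of all fifteen states) but the only nontrivial ones are identical copies of $\sigma$, so Theorem~\ref{Thm: optimality} with effective time $(\sum_{c_j>0}c_j)t\geq (c/3)t$ applies. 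The dissipative case is handled analogously via Theorem~\ref{thm:lower_bound_Lindbladian}. This is the missing idea: you must embed the single-term hard instance \emph{inside} the given multi-term structure by trivializing the other program states, not by changing $J$ and $K$.
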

\begin{proof}
 From the definition of $c$ (see~\eqref{eq:def-c}), one of the following three cases must hold:
 \begin{enumerate}
     \item $\sum_{j: c_j > 0} c_j \geq c/3$,
     \item $\sum_{j: c_j < 0} |c_j| \geq c/3$, or
     \item $\sum_{k=1}^{K} \left\|L_k\right\|_2^2 \geq c/3$.
 \end{enumerate}

If Case~1 holds, we take $\sigma_j = \sigma$ for all $j$ such that $c_j > 0$, take $\sigma_j = I/\sqrt{d}$ for all~$j$ with $c_j < 0$, and take $L_k = \left\|L_k\right\|_2 \cdot I/\sqrt{2}$. In this case, the Lindbladian simplifies to $\mathcal{L}(\rho) = \sum_{j: c_j > 0} -i[\sigma, \rho]$. Simulating this evolution requires $\Omega((\sum_{j: c_j > 0} c_j)^2 t^2 / \varepsilon)$ samples of $\sigma$. Since Case~1 assumes that $\sum_{j: c_j > 0} c_j \geq c/3$, this implies that we need $\Omega(c^2 t^2/\varepsilon)$ samples of $\sigma$.

Case~2 can be treated analogously.

For Case~3, we take $\sigma_j = I/\sqrt{d}$ for all $j$, and take $L_k = \left\|L_k\right\|_2 \cdot L/\sqrt{2}$, where $L$ is a fixed local Lindblad operator satisfying $\left\|L\right\|_2 = 1$. In this setup, each $|\psi_k\rangle$ becomes the same state $|\psi_L\rangle \coloneqq (L \otimes I)|\Gamma\rangle$. In this case, the Lindbladian simplifies to $\mathcal{L}(\rho) = \sum_k \left\|L_k\right\|_2^2 ( L \rho L^{\dagger} - 1/2 \{ L^{\dagger} L, \rho\})$. Simulating this Lindbladian requires $\Omega((\sum_k \left\|L_k\right\|_2^2)^2 t^2 / \varepsilon)$ samples of $|\psi_L\rangle$. Since Case~3 assumes that $\sum_{k=1}^{K} \left\|L_k\right\|_2^2 \geq c/3$, this implies that we need $\Omega(c^2 t^2/\varepsilon)$ samples of $|\psi_L\rangle$.

Thus, in all three cases, the total number of program states required is at least $\Omega(c^2 t^2 / \varepsilon)$. 
\end{proof}

\medskip
Let us finally note that a matching upper bound, under the same assumptions as stated in Theorem~\ref{thm:gen-lind-lower-bnd}, was established in \cite[Appendix~D.1]{Sims2025}.

\section{Concluding remarks}

\label{sec:conclusion}

In this work, we established a detailed sample complexity analysis of sample-based Hamiltonian and Lindbladian simulation.
We first derived an error bound for DME and WML in terms of the number of samples and evolution time. 
More precisely, we showed that the sample complexity $n$ to achieve a desired imprecision level $\varepsilon$ in the normalized diamond distance for evolution time $t$ is no larger than $4t^2/\varepsilon$ for DME, and $3d^2t^2/\varepsilon$ for WML.
We also examined a fundamental lower bound on the sample complexity of sample-based Hamiltonian and Lindbladian simulations, by exploiting the zero-error query complexity, which is the minimum number of queries to two unknown unitary channels such that they are near-perfectly distinguishable from each other.
We found that given program states as two-dimensional states embedded in a $d$-dimensional Hilbert space, the fundamental lower bound we have derived shows the optimality of 
DME up to a multiplicative factor. We also found that from the particular example presented in \eqref{eq:lower-bnd-WML-choice}, the fundamental lower bound proves optimality of WML up to a dimensional factor.

We now mention some open problems. 
The first is to extend our optimality results for sample-based Hamiltonian and Lindbladian simulation, given in Section~\ref{sec:lower-bnd-samp-complex-SBHS} and Section~\ref{sec:lower-bound-SBLS}, to more general program states.
The virtue of DME and WML is that the program state can be arbitrarily chosen, and thus, it holds for any finite-dimensional program state as long as multiple copies of the program state are available. 
Hence, by using different program states, one would obtain tighter sample complexity lower bounds for sample-based Hamiltonian and Lindbladian simulation.

Another interesting question is to improve our sample complexity bounds for sample-based Hamiltonian and Lindbladian simulation. 
More specifically, because the asymptotic scaling behavior of the sample complexity is $\Theta(t^2/\varepsilon)$, a goal is to reduce the gap between the constant prefactors of our sample complexity bounds.
By reflecting back on the proofs we have provided, the constant prefactors of our sample complexity lower bounds might be further improved.

Further extending our sample complexity bounds for different computational tasks beyond Hamiltonian and Lindbladian simulation is also an interesting problem. 
For example, DME was first introduced not for Hamiltonian simulation, but for a subroutine of the quantum principal component analysis~\cite{lloyd2014quantum}. Here, controlled-DME is used for phase encoding applied to eigenstates of the target density matrix. We have verified in Appendix~\ref{appendix:ctrlDMB} that qPCA also shares same the sample complexity upper bound up to a logarithmic factor.

Furthermore, the ideas behind our sample complexity bounds might be further extended to analyze the complexity of quantum recursion algorithms that exponentiate different quantum states, including fixed-point search algorithms~\cite{grover2005fixed, yoder2014fixed}, double-bracket quantum algorithms~\cite{gluza2024double, robbiati2024double, gluza2024doublebraket}, or quantum dynamical programming~\cite{son2025quantum}.

\begin{acknowledgments}
This work was supported by the National Research Foundation of Korea (NRF) Grants No.~RS-2024-00438415. HK is supported by the KIAS Individual Grant No.~CG085302 at Korea Institute for Advanced Study.
DP and MMW acknowledge support from
the Air Force Office of Scientific Research  under agreement no.~FA2386-24-1-4069.    
    
The U.S.~Government is authorized to reproduce and
distribute reprints for Governmental purposes notwithstanding any copyright
notation thereon. The views and conclusions contained herein are those of the
authors and should not be interpreted as necessarily representing the official
policies or endorsements, either expressed or implied, of the United States Air Force.

\end{acknowledgments}

\section*{Author contributions}

\noindent \textbf{Author Contributions}:
The following describes the different contributions of the authors of this work, using roles defined by the CRediT
(Contributor Roles Taxonomy) project~\cite{NISO}:

\medskip 
\noindent \textbf{BG}: Formal analysis, Validation, Writing - Original draft, Writing - Review \& Editing.

\medskip 
\noindent \textbf{HK}: Formal analysis, Methodology, Funding acquisition, Writing - Review \& Editing, Supervision.

\medskip 
\noindent \textbf{SP}: Formal analysis, Validation, Resources, Writing - Review \& Editing.

\medskip 
\noindent \textbf{DP}: Conceptualization, Formal analysis, Investigation, Methodology, Validation, Writing - Review \& Editing.

\medskip 
\noindent \textbf{MMW}: Conceptualization, Formal analysis, Funding acquisition, Methodology, Supervision, Validation, Writing - Original draft,  Writing - Review \& Editing.

\bibliography{ref}

\appendix

\section{Incompleteness of the previous sample complexity bound for DME}

\label{appendix: incomplete}

In this appendix, we argue the incompleteness of the previous sample complexity bound of DME~\cite{lloyd2014quantum, Kimmel2017}.
To do so, we first recall the previous error analysis of DME in the diamond distance norm, which was presented in Ref.~\cite[Supplementary Information Section~A]{Kimmel2017}. We note here that similar arguments presented below, imply incompleteness of the error analysis of WML put forward in \cite{patel2023wave1}. The issues with WML have been resolved in Section~\ref{sec:upper-bound-WML} and \cite[Appendix~D.1]{Sims2025}.

Let $\rho_{RS} \in \mathcal{D}(\mathcal{H}_{R}\otimes\mathcal{H}_{S})$ be an unknown bipartite quantum state over the joint system $RS$, where $S$ is the input system we are interested in, and $R$ is an arbitrary reference system.
Then the unitary evolution of system $S$ according to the ideal unitary channel $\mathcal{U}_{\sigma, \Delta}$ in~\eqref{eq:DME-ideal-one-step} for time $\Delta$ can be represented as
\begin{align}
&\left(\mathcal{I}_{R}\otimes\mathcal{U}_{\sigma, \Delta}\right)(\rho_{RS}) \nonumber \\
&= (\mathbb{I}_{R} \otimes e^{-i\sigma_{S} \Delta}) \rho_{RS} (\mathbb{I}_{R} \otimes e^{i\sigma_{S} \Delta}) \\
&= \rho_{RS} - i[(\mathbb{I}_{R} \otimes \sigma_{S}), \rho_{RS}]\Delta - \frac{1}{2!}[ (\mathbb{I}_{R}\otimes\sigma_{S}) , [(\mathbb{I}_{R}\otimes \sigma_{S}) , \rho_{RS}]]\Delta^2 + \cdots. \label{apx: eq: ideal evolution}
\end{align}

To implement DME, we prepare $n$ copies of the program state $\sigma$ in the ancillary system $A_k$, where $k$ ranges from $1$ to $n$.
For each copy of $\sigma$, the DME algorithm simply applies the Hamiltonian $\operatorname{SWAP}$ between systems $S$ and $A_k$ for time step $\Delta = \frac{t}{n}$, and then discards $A_k$.
Here, $\operatorname{SWAP}_{SA_k} \coloneqq\sum_{i,j}|i\rangle\!\langle j|_{S}\otimes|j\rangle\!\langle i|_{A_k}$ is the swap operator between system $S$ and $A_k$.
Given that the $\operatorname{SWAP}$ operator is self-inverse (i.e., $\operatorname{SWAP}^2 = \mathbb{I}$), the evolution by $\operatorname{SWAP}$ Hamiltonian for time $\Delta$ can be represented as
\begin{align}
e^{-i\Delta \operatorname{SWAP}_{SA_k}} = \cos\Delta\cdot\mathbb{I}_{SA_k} -i\sin\Delta\cdot\operatorname{SWAP}_{SA_k}.
\end{align}
Using this convention, the state after the first iteration of the above procedure can be explicitly written as
\begin{align}
&\text{Tr}_{A_1}\left[( \mathbb{I}_{R} \otimes  e^{-i\operatorname{SWAP}_{SA_1} \Delta} )(\rho_{RS} \otimes \sigma_{A_1})( \mathbb{I}_{R} \otimes e^{i\operatorname{SWAP}_{SA_1} \Delta}  )\right] \nonumber \\
&= \rho_{RS} \cos^2 \Delta - i[(\mathbb{I}_{R} \otimes \sigma_{S} ), \rho_{RS}] \sin \Delta \cos \Delta +  \text{Tr}_{S}(\rho_{RS}) \otimes  \sigma_{S}\sin^2 \Delta, \\
&= \rho_{RS} - i[( \mathbb{I}_{R} \otimes \sigma_{S} ), \rho_{RS}] \Delta - (\rho_{RS} -  \text{Tr}_{S}(\rho_{RS}) \otimes \sigma_{S}) \Delta^2 + O(\Delta^3). \label{apx: eq: first approximation}
\end{align}
Here, the difference in trace distance between the ideal state in~\eqref{apx: eq: ideal evolution} and the first approximation in~\eqref{apx: eq: first approximation} is
\begin{align}
\frac{1}{2} \left\| (\mathbb{I}_{R} \otimes e^{-i\sigma_{S} \Delta}) \rho_{RS} (\mathbb{I}_{R} \otimes e^{i\sigma_{S} \Delta}) - \text{Tr}_{A_1}\left[( \mathbb{I}_{R} \otimes  e^{-iS_{SA_1} \Delta} )(\rho_{RS} \otimes \sigma_{A_1})( \mathbb{I}_{R} \otimes e^{iS_{AA_1} \Delta}  )\right] \right\|_1 \leq O(\Delta^2). 
\end{align}
If we denote by $\tilde{\rho}_{RS}^{[k]}$ the state after $k$ iterations of this procedure (so $\tilde{\rho}_{RS}^{[0]} = \rho_{RS}$ is the original state and $\tilde{\rho}_{RS}^{[1]}$ is the state in~\eqref{apx: eq: first approximation}), we get the following recursion:
\begin{align}
\tilde{\rho}_{RS}^{[k]} = \tilde{\rho}_{RS}^{[k-1]} - i[(\mathbb{I}_{R} \otimes \sigma_{S}), \tilde{\rho}_{RS}^{[k-1]}] \Delta - (\tilde{\rho}_{RS}^{[k-1]} - \text{Tr}_{S}(\tilde{\rho}_{RS}^{[k-1]}) \otimes \sigma_{S}) \Delta^2 + O(\Delta^3). 
\end{align}
By evaluating this recursively, the final state after $n$ iterations of the procedure can be expressed as
\begin{align}
\tilde{\rho}_{RS}^{[n]} &= \tilde{\rho}_{RS}^{[n-m]} - i[(\mathbb{I}_{R} \otimes \sigma_{S}), \tilde{\rho}_{RS}^{[n-m]}] m\Delta - (\tilde{\rho}_{RS}^{[n-m]} -\text{Tr}_{S}(\tilde{\rho}_{RS}^{[n-m]}) \otimes  \sigma_{S}) m\Delta^2 \nonumber \\
&\quad + i[ (\mathbb{I}_{R}\otimes\sigma_{S}) , i[(\mathbb{I}_{R}\otimes \sigma_{S}) , \tilde{\rho}_{RS}^{[n-m]}]] (1 + 2 + \cdots + (m-1)) \Delta^2 + O(\Delta^3),\label{apx: eq: DME-final-any m}
\end{align}
for all $m \in \{0, \dots, n\}$. (Note that~\cite[Eq.~(A7)]{Kimmel2017}, in the paper's supplemental information, features a typo, where $(1 + 2 + \cdots + m)$ should instead be $(1 + 2 + \cdots + (m-1))$, as written above.) In particular, for $m = n$ we get
\begin{align}
\tilde{\rho}_{RS}^{[n]} &= \rho_{RS} - i[(\mathbb{I}_{R} \otimes \sigma_{S}), \rho_{RS}] n\Delta - (\rho_{RS} -  \text{Tr}_{S}(\rho_{RS}) \otimes \sigma_{S}) n\Delta^2 \nonumber \\
&\quad + i[ (\mathbb{I}_{R}\otimes \sigma_{S}), i[(\mathbb{I}_{R}\otimes \sigma_{S} ), \rho_{RS}]] \frac{n(n-1)}{2} \Delta^2 + O(\Delta^3). \label{apx: eq: DME-final}
\end{align}
On the other hand, the desired final state at time $t = n\Delta$ is given by~\eqref{apx: eq: ideal evolution} with $n\Delta$ instead of $\Delta$, so that
\begin{multline}
\left(\mathcal{I}_{R}\otimes\mathcal{U}_{\sigma, t}\right)(\rho_{RS})  \\
= \rho_{RS} - i[(\mathbb{I}_{R} \otimes \sigma_{S}), \rho_{RS}]n\Delta - \frac{1}{2!}[ (\mathbb{I}_{R}\otimes\sigma_{S}) , [(\mathbb{I}_{R}\otimes \sigma_{S}) , \rho_{RS}]]n^2\Delta^2 + \cdots. \label{apx: eq: ideal final evolution}
\end{multline}
Hence, comparing~\eqref{apx: eq: DME-final} and~\eqref{apx: eq: ideal final evolution}, at first glance it seems that the error induced by the whole process of DME is bounded by $O(n \Delta^2) = O(t^2/n)$.

To sum up, the proof shows that each of the zeroth, first, and second-order terms in~\eqref{apx: eq: ideal final evolution} are canceled by each of the corresponding terms in~\eqref{apx: eq: DME-final}, such that only $O(n\Delta^2)$ term in~\eqref{apx: eq: DME-final} remains up to the second order.
However, this proof excludes the case that $t$ is not asymptotically small.
That is, the higher-order terms on the right-hand side of~\eqref{apx: eq: ideal final evolution} cannot be neglected unless the evolution time $t$ asymptotically converges to $0$. 
For example, if the $n^3\Delta^3$ term in~\eqref{apx: eq: ideal final evolution} is not canceled by the corresponding term (i.e., $O(\Delta^3)$ term) in~\eqref{apx: eq: DME-final}, the final imprecision would scale as $n^3\Delta^3 = t^3$, which cannot be arbitrarily reduced by increasing the sample number $n$.

Therefore, the proof itself guarantees the imprecision bound $O(t^2/n)$ only when $t$ asymptotically converges to $0$. 
To conclude that the imprecision is bounded by $O(t^2/n)$ for an arbitrary $t$, it should be promised that each of the higher order terms $\Delta^3$, $\Delta^4$, $\cdots$ in~\eqref{apx: eq: DME-final} cancels each of the higher order terms $n^3\Delta^3$, $n^4\Delta^4$, $\cdots$ in~\eqref{apx: eq: ideal final evolution}, respectively. However, this has not been argued in~\cite[Supplementary Information Section~A]{Kimmel2017}.

\section{Subadditivity of diamond distance}

\begin{lemma}
[Subadditivity of diamond distance]\label{lem:subadd-DD}Let $\mathcal{N}_{1}$,
$\mathcal{N}_{2}$, $\mathcal{M}_{1}$, and $\mathcal{M}_{2}$ be channels. Then
\begin{equation}
\left\Vert \mathcal{N}_{2}\circ\mathcal{N}_{1}-\mathcal{M}_{2}\circ
\mathcal{M}_{1}\right\Vert _{\diamond}\leq\left\Vert \mathcal{N}
_{1}-\mathcal{M}_{1}\right\Vert _{\diamond}+\left\Vert \mathcal{N}
_{2}-\mathcal{M}_{2}\right\Vert _{\diamond}.
\end{equation}

\end{lemma}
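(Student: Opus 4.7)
The plan is to prove this by a standard triangle-inequality trick combined with two monotonicity properties of the diamond norm under channel composition. First, I would insert a telescoping intermediate term by writing
\begin{equation}
\mathcal{N}_{2}\circ\mathcal{N}_{1} - \mathcal{M}_{2}\circ\mathcal{M}_{1} \;=\; \mathcal{N}_{2}\circ(\mathcal{N}_{1} - \mathcal{M}_{1}) \;+\; (\mathcal{N}_{2} - \mathcal{M}_{2})\circ\mathcal{M}_{1},
\end{equation}
and then apply the triangle inequality for the diamond norm to conclude that the left-hand side of the claimed inequality is upper bounded by $\|\mathcal{N}_{2}\circ(\mathcal{N}_{1}-\mathcal{M}_{1})\|_{\diamond} + \|(\mathcal{N}_{2}-\mathcal{M}_{2})\circ\mathcal{M}_{1}\|_{\diamond}$.

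Next, I would bound each of these two composite diamond norms by one of the single diamond norms on the right-hand side of the claimed inequality. For the first term, the key fact is that composing a channel on the output side cannot increase the diamond norm: given any Hermiticity-preserving map $\mathcal{A}$ and any channel $\mathcal{N}_{2}$, the data-processing inequality for the trace norm gives $\|(\mathrm{id}_{R}\otimes\mathcal{N}_{2})(X_{RS})\|_{1} \leq \|X_{RS}\|_{1}$ for every bipartite Hermitian operator $X_{RS}$. Applying this with $X_{RS} = (\mathrm{id}_{R}\otimes(\mathcal{N}_{1}-\mathcal{M}_{1}))(\rho_{RS})$ and taking the supremum over all input states $\rho_{RS}$ yields $\|\mathcal{N}_{2}\circ(\mathcal{N}_{1}-\mathcal{M}_{1})\|_{\diamond} \leq \|\mathcal{N}_{1}-\mathcal{M}_{1}\|_{\diamond}$.

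For the second term, I would observe that composing a channel on the input side also does not increase the diamond norm: as $\rho_{RS}$ ranges over all bipartite states on $\mathcal{H}_{R}\otimes\mathcal{H}_{S}$, the state $(\mathrm{id}_{R}\otimes\mathcal{M}_{1})(\rho_{RS})$ ranges over a (generally strict) subset of bipartite states, so the supremum defining the diamond norm of $(\mathcal{N}_{2}-\mathcal{M}_{2})\circ\mathcal{M}_{1}$ is taken over no more states than that defining $\|\mathcal{N}_{2}-\mathcal{M}_{2}\|_{\diamond}$. Combining the two monotonicity bounds with the triangle inequality above then gives the desired subadditivity.

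I do not anticipate any real obstacle: the argument is a textbook manipulation, and both monotonicity facts are standard consequences of the data-processing inequality (see, e.g., \cite{Watrous2018,wilde2017QuantumInformationTheory}). The only small care needed is to fix a common reference system $R$ across both applications of monotonicity so that the suprema defining the three diamond norms can be compared directly; since the diamond norm is unchanged by enlarging $R$ beyond the system's dimension, this can always be arranged.
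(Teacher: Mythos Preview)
Your proposal is correct and follows essentially the same route as the paper: the same telescoping split $\mathcal{N}_{2}\circ(\mathcal{N}_{1}-\mathcal{M}_{1})+(\mathcal{N}_{2}-\mathcal{M}_{2})\circ\mathcal{M}_{1}$, then data-processing under $\mathcal{N}_{2}$ for the first piece and the ``$(\mathrm{id}\otimes\mathcal{M}_{1})(\rho)$ is a particular feasible state'' argument for the second. The only cosmetic difference is that the paper carries out the bounds at the level of the trace norm for a fixed input state and takes the supremum at the end, whereas you phrase the same two monotonicity facts directly as diamond-norm inequalities.
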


\begin{proof}
Let $\rho_{RA}  \in \mathcal{D}(\mathcal{H}_{R}\otimes\mathcal{H}_{A})$ be an arbitrary bipartite state.
\ Then, we have
\begin{align}
&  \left\Vert \left(  \operatorname{id}_{R}\otimes\left(  \mathcal{N}_{2}
\circ\mathcal{N}_{1}\right)  \right)  \left(  \rho_{RA}\right)  -\left(
\operatorname{id}_{R}\otimes\left(  \mathcal{M}_{2}\circ\mathcal{M}
_{1}\right)  \right)  \left(  \rho_{RA}\right)  \right\Vert _{1}\nonumber\\
&  =\left\Vert
\begin{array}
[c]{c}
\left(  \operatorname{id}_{R}\otimes\left(  \mathcal{N}_{2}\circ
\mathcal{N}_{1}\right)  \right)  \left(  \rho_{RA}\right)  -\left(
\operatorname{id}_{R}\otimes\left(  \mathcal{N}_{2}\circ\mathcal{M}
_{1}\right)  \right)  \left(  \rho_{RA}\right) \\
+\left(  \operatorname{id}_{R}\otimes\left(  \mathcal{N}_{2}\circ
\mathcal{M}_{1}\right)  \right)  \left(  \rho_{RA}\right)  -\left(
\operatorname{id}_{R}\otimes\left(  \mathcal{M}_{2}\circ\mathcal{M}
_{1}\right)  \right)  \left(  \rho_{RA}\right)
\end{array}
\right\Vert _{1}\\
&  \leq\left\Vert \left(  \operatorname{id}_{R}\otimes\left(  \mathcal{N}
_{2}\circ\mathcal{N}_{1}\right)  \right)  \left(  \rho_{RA}\right)  -\left(
\operatorname{id}_{R}\otimes\left(  \mathcal{N}_{2}\circ\mathcal{M}
_{1}\right)  \right)  \left(  \rho_{RA}\right)  \right\Vert _{1}\nonumber\\
&  \qquad+\left\Vert \left(  \operatorname{id}_{R}\otimes\left(
\mathcal{N}_{2}\circ\mathcal{M}_{1}\right)  \right)  \left(  \rho_{RA}\right)
-\left(  \operatorname{id}_{R}\otimes\left(  \mathcal{M}_{2}\circ
\mathcal{M}_{1}\right)  \right)  \left(  \rho_{RA}\right)  \right\Vert _{1}\\
&  \leq\left\Vert \left(  \operatorname{id}_{R}\otimes\mathcal{N}_{1}\right)
\left(  \rho_{RA}\right)  -\left(  \operatorname{id}_{R}\otimes\mathcal{N}
_{2}\right)  \left(  \rho_{RA}\right)  \right\Vert _{1}+\left\Vert
\mathcal{N}_{2}-\mathcal{M}_{2}\right\Vert _{\diamond}\\
&  \leq\left\Vert \mathcal{N}_{1}-\mathcal{M}_{1}\right\Vert _{\diamond
}+\left\Vert \mathcal{N}_{2}-\mathcal{M}_{2}\right\Vert _{\diamond}.
\end{align}
Here, the first inequality follows from the data-processing inequality for the trace
distance under the channel $\mathcal{N}_{2}$.
Also, the second inequality holds because the state $\left(  \operatorname{id}_{R}\otimes\mathcal{M}
_{1}\right)  \left(  \rho_{RA}\right)  $ is a particular state in $\mathcal{D}(\mathcal{H}_{R}\otimes\mathcal{H}_{A})$ 
to consider for
the optimization of the diamond distance norm $\left\Vert \mathcal{N}_{2}-\mathcal{M}_{2}\right\Vert
_{\diamond}$, whose optimization is over all input states in $\mathcal{D}(\mathcal{H}_{R}\otimes\mathcal{H}_{A})$. 
Therefore, since the inequality we have derived
\begin{equation}
\left\Vert \left(  \operatorname{id}_{R}\otimes\left(  \mathcal{N}_{2}
\circ\mathcal{N}_{1}\right)  \right)  \left(  \rho_{RA}\right)  -\left(
\operatorname{id}_{R}\otimes\left(  \mathcal{M}_{2}\circ\mathcal{M}
_{1}\right)  \right)  \left(  \rho_{RA}\right)  \right\Vert _{1}
\leq\left\Vert \mathcal{N}_{1}-\mathcal{M}_{1}\right\Vert _{\diamond
}+\left\Vert \mathcal{N}_{2}-\mathcal{M}_{2}\right\Vert _{\diamond}
\end{equation}
holds for every input state $\rho_{RA}$, we conclude the desired statement.
\end{proof}

\section{Proof of Lemma~\ref{lem: lower_bound_m_copy}}

\label{appendix:another_lower_bound}

This section presents a sample complexity bound that generalizes the bound from Lemma~\ref{lem: lower bound of sample complexity}, which uses the $m$-copy discrimination task stated therein,  but it can also be applied to a general target channel ${\cal E}_{\sigma, t}$ which is realized in terms of the program state $\sigma$. Similar to the proof of Lemma~\ref{lem: lower bound of sample complexity}, we start by defining 
\begin{equation} \label{eq:appd_n_d^*}
n^{\ast}_d(\rho,\sigma,t,\varepsilon)\coloneqq\inf_{\mathcal{P}^{(n)}
\in\text{CPTP}}\left\{
\begin{array}
[c]{c}
n\in\mathbb{N}:\frac{1}{2}\left\Vert \mathcal{P}^{(n)}\circ\mathcal{A}
_{\rho^{\otimes n}}-\mathcal{E}_{\rho,t}\right\Vert _{\diamond}\leq
\varepsilon,\\
\qquad\quad\,\frac{1}{2}\left\Vert \mathcal{P}^{(n)}\circ\mathcal{A}_{\sigma^{\otimes n}
}-\mathcal{E}_{\sigma,t}\right\Vert _{\diamond}\leq\varepsilon
\end{array}
\right\},
\end{equation}
where we have that
\begin{equation}
n_{d}^{\ast}(t,\varepsilon)\geq n_d^{\ast}(\rho,\sigma,t,\varepsilon).
\end{equation}
We now consider $m$ parallel calls of the quantum channels with $n \equiv n_d^{\ast}(\rho,\sigma,t,\varepsilon)$ and $\mathcal{P}^{(n)}$ being a channel satisfying the constraints given in~\eqref{eq:appd_n_d^*}. Then by using the subadditivity of the diamond distance from Lemma~\ref{lem:subadd-DD}, we obtain
\begin{equation}
\frac{1}{2}\left\Vert \mathcal{P}^{(n)\otimes m}
\circ\mathcal{A}_{\kappa^{\otimes nm}}-\mathcal{E}_{\kappa,t}^{\otimes
m}\right\Vert _{\diamond}\leq 
\frac{m}{2}\left\Vert \mathcal{P}^{(n)}\circ\mathcal{A}_{\kappa^{\otimes n}
}-\mathcal{E}_{\kappa,t}\right\Vert _{\diamond}
\leq m\varepsilon,
\end{equation}
for $\kappa\in\left\{  \rho,\sigma\right\}$. The trace distance between $\rho^{\otimes nm}$ and $\sigma^{\otimes nm}$ is then bounded from below as follows:
\begin{align}
  \frac{1}{2}\left\Vert \rho^{\otimes nm}-\sigma^{\otimes nm}\right\Vert _{1}
&  \geq\frac{1}{2}\left\Vert \mathcal{P}^{(n)\otimes m}\circ
\mathcal{A}_{\rho^{\otimes nm}}-\mathcal{P}^{(n)\otimes m}
\circ\mathcal{A}_{\sigma^{\otimes nm}}\right\Vert _{\diamond} \\
&  \geq\frac{1}{2}\left\Vert \mathcal{E}_{\rho,t}^{\otimes m
}-\mathcal{E}_{\sigma,t}^{\otimes m}\right\Vert _{\diamond}-\frac{1}
{2}\left\Vert \mathcal{P}^{(n)\otimes m}\circ\mathcal{A}_{\rho^{\otimes
nm}}-\mathcal{E}_{\rho,t}^{\otimes m}\right\Vert _{\diamond
}\nonumber\\
&  \qquad -\frac{1}{2}\left\Vert \mathcal{P}^{(n)\otimes m}
\circ\mathcal{A}_{\sigma^{\otimes nm}}-\mathcal{E}_{\sigma,t}^{\otimes
m}\right\Vert _{\diamond} \\
&  \geq \nu_m(\rho,\sigma,t) -2m\varepsilon,
\end{align}
following the same logic as in the proof of Lemma~\ref{lem: lower bound of sample complexity}. The only difference is that now we allow a non-zero error for the channel discrimination task; i.e.,
\begin{equation}
    \nu_m(\rho, \sigma,t)= \frac{1}{2}\left\Vert \mathcal{E}_{\rho,t}^{\otimes m}-\mathcal{E}_{\sigma,t}^{\otimes m}\right\Vert _{\diamond}
\end{equation}
is not necessarily equal to one. By using the relation in~\eqref{eq: relation between trace distance and fidelity} we obtain
\begin{align}
    F(\rho,\sigma)^{nm} \leq 1 - (\nu_m(\rho,\sigma,t) - 2m\varepsilon)^2,
\end{align}
provided that $\nu_m(\rho,\sigma,t)\geq 2m\varepsilon$. This finally leads to
\begin{equation}
\label{eq:upper_bound_m_new}
    n_d^*(t,\varepsilon)\geq n_d^*(\rho,\sigma,t,\varepsilon) \geq \frac{-\ln\!\left[ 1- (\nu_m(\rho,\sigma,t) - 2m \varepsilon)^2 \right]}{m \left[-\ln F(\rho,\sigma) \right]}\, .
\end{equation} 

\section{Diamond norms of Lindbladian operators}

\label{appendix:lind-dnorm}

In this section, we provide a proof that $\|\mathcal{L}\|_\diamond\leq2$ if $\left \| L\right\|_2 =1$ and $\|\mathcal{M}\|_\diamond\leq2d$ for $\mathcal{M}$ defined in~\eqref{eq:WML-M-op}. Let $\tau$ be an arbitrary bipartite quantum state of a reference system $R$ and an input system $S$. Then,
\begin{align}
 \left\Vert (\operatorname{id} \otimes  \mathcal{L})(\tau)\right\Vert _{1}
& =\left\Vert (I\otimes L)\tau (I\otimes L^{\dag})-\frac{1}{2}\left\{  I\otimes L^{\dag}L,\tau\right\}  \right\Vert _{1}\\
& \leq \left\Vert (I\otimes L)\tau (I\otimes L^{\dag})\right\Vert _{1}+\frac{1}{2}\left\Vert
(I\otimes L^{\dag}L)\tau\right\Vert _{1}+\frac{1}{2}\left\Vert \tau (I\otimes L^{\dag
}L)\right\Vert _{1}\\
& =\left\Vert \tau^{1/2}(I\otimes L^{\dag}L)\tau^{1/2}\right\Vert
_{1}+\left\Vert (I\otimes L^{\dag}L)\tau\right\Vert _{1}\\
& =\operatorname{Tr}\!\left[ \tau^{1/2}(I\otimes L^{\dag}L)\tau^{1/2}\right]+\left\Vert (I\otimes L^{\dag}L)\tau\right\Vert _{1}\\
& \leq\left\Vert (I\otimes L^{\dag}L)\right\Vert _{\infty}\left\Vert \tau
^{1/2}\tau^{1/2}\right\Vert _{1}+\left\Vert (I\otimes L^{\dag
}L)\right\Vert _{\infty}\left\Vert \tau\right\Vert
_{1}\\
& =2\left\Vert L^{\dag}L\right\Vert _{\infty} \\
&\leq 2\left\Vert L^{\dag}L\right\Vert _{1} \\
&= 2 ,
\end{align}
where the last equality follows from the assumption that the Lindblad operator has a unit Hilbert--Schmidt norm. Therefore we conclude that
\begin{equation}
\left\|\mathcal{L}\right\|_{\diamond}=\sup_{\tau}\left\|(\operatorname{id}\otimes \mathcal{L})(\tau)\right \|_1\leq2.
\end{equation}

Similarly,
\begin{align}
\left\Vert (\operatorname{id} \otimes \mathcal{M}) (\tau )\right\Vert _{1}
& =\left\Vert (I\otimes M)\tau (I\otimes M^{\dag})-\frac{1}{2}\left\{  I\otimes M^{\dag}%
M,\tau \right\}  \right\Vert _{1}\\
& \leq\left\Vert (I\otimes M)\tau (I\otimes M^{\dag})\right\Vert _{1}+\frac{1}%
{2}\left\Vert (I\otimes  M^{\dag}M)\tau \right\Vert _{1} +\frac{1}{2}\left\Vert \tau (I\otimes M^{\dag}M)\right\Vert _{1}\\
& =\left\Vert \tau ^{1/2}(I\otimes M^{\dag}M)\tau %
^{1/2}\right\Vert _{1}+\left\Vert (I\otimes M^{\dag}M) \tau\right\Vert _{1} \\
& =\operatorname{Tr}\!\left[ \tau ^{1/2}(I\otimes M^{\dag}M)\tau 
^{1/2}\right]+\left\Vert (I\otimes M^{\dag}M) \tau\right\Vert _{1} \\
& \leq\left\Vert I\otimes M^{\dag}M\right\Vert _{\infty}\left\Vert \tau^{1/2}\tau ^{1/2}\right\Vert _{1}+\left\Vert
I\otimes M^{\dag}M\right\Vert _{\infty}\left\Vert \tau \right\Vert
_{1}\\
& = 2\left\Vert M^{\dag}M\right\Vert _{\infty} \\
& = 2d.
\end{align}
The last equality follows from the fact that
\begin{align}
M M^{\dag}  & = I_1 \otimes |\Gamma\rangle\!\langle\Gamma|_{23}  =I_{1} \otimes d|\Phi\rangle\!\langle\Phi|_{23} ,
\end{align}
so that
\begin{equation}
\left\Vert M^{\dag}M\right\Vert _{\infty}= \left\Vert M M^{\dag}\right\Vert _{\infty} = d.
\end{equation}
Therefore we conclude that
\begin{equation}
\left\Vert \mathcal{M}\right\Vert _{\diamond} = \sup_{\tau}\|\operatorname{id} \otimes \mathcal{M}\|_1\leq2d,
\end{equation}
thus completing the proof. 

\section{Density matrix exponentiation as subroutine of an algorithm}

\label{appendix:ctrlDMB}

Density matrix exponentiation can be utilized beyond Hamiltonian simulation. The ability to exponentiate an arbitrary density matrix in polynomial time can be a powerful subroutine combined with the quantum phase estimation algorithm. As an example, we verify the non-asymptotic sample complexity of a practical quantum algorithm, quantum principal component analysis (qPCA)~\cite{lloyd2014quantum}. The key idea of qPCA is to encode eigenvalues of a program state to phase with a sequence of controlled-DME. That is, given multiple copies of program state $\rho=\sum_{i}r_i \ket{\chi_i}\!\bra{\chi_i}$, the algorithm returns $\sum_i r_i \ket{\Tilde{r_i}} \!\bra{\Tilde{r_i}} \otimes \ket{\chi_i} \!\bra{\chi_i}$ where $\Tilde{r_i}$ is $T$-bit estimation of eigenvalue $r_i$. Principal components can be statistically obtained by measuring the ancillary register, given that the rank of $\rho$ is effectively low. DME is utilized to asymptotically realize a controlled unitary for quantum phase estimation in qPCA.

Here, we demonstrate a non-asymptotic upper bound of controlled-DME on an eigenstate $\chi$ is lower than \eqref{eq:DME-err-bound} as expected, leading to the non-asymptotic upper bound of sample complexity of qPCA, $\Tilde{O}(t^2/\varepsilon)$. The controlled-DME can be viewed as DME with extended program state: $\rho\to\ket{0}\!\bra{0}\otimes\rho$. Therefore, controlled-DME in this instance can be written as the following:
\begin{align}
    &\Tilde{\mathcal{U}}_{\ket{1}\!\bra{1}\otimes\rho, \Delta}(P\otimes\chi) \\ &= \cos^2\Delta P\otimes\chi - i\frac{\sin 2\Delta}{2} [\ket{1}\!\bra{1}\otimes\rho, P\otimes\chi]+\sin^2\Delta \ket{1}\!\bra{1}\otimes\rho \notag \\
    &= \cos^2\Delta P\otimes\chi - i\frac{\sin 2\Delta}{2}r [\ket{1}\!\bra{1}, P]\otimes \chi+\sin^2\Delta \ket{1}\!\bra{1}\otimes\rho
\end{align} for any $P\in\{\sigma_x, \sigma_y, \sigma_z, \mathbb{I}\}$. Here, we have abused notation to drop subscript $i$: $r=r_i, \chi=|\chi_i\rangle\langle\chi_i|$.
Noting Pauli anti-commutations $[\frac{\mathbb{I}-\sigma_z}{2}, \sigma_x] = -i\sigma_y$, $[\frac{\mathbb{I}-\sigma_z}{2}, \sigma_y] = i\sigma_x$, we obtain transition rule of the linear map $\mathcal{U}_{\ket{1}\!\bra{1}\otimes\rho, \Delta}$ between finite numbers of \emph{states}:
\begin{align}
    \Tilde{\mathcal{U}}_{\ket{1}\!\bra{1}\otimes\rho, \Delta}(\sigma_x\otimes\chi) & = \cos^2\Delta \sigma_x\otimes\chi -r\frac{\sin 2\Delta}{2} \sigma_y\otimes\chi+\sin^2\Delta \ket{1}\!\bra{1}\otimes\rho, \\
   \Tilde{\mathcal{U}}_{\ket{1}\!\bra{1}\otimes\rho, \Delta}(\sigma_y\otimes\chi) & = \cos^2\Delta \sigma_y\otimes\chi +r\frac{\sin 2\Delta}{2} \sigma_x\otimes\chi+\sin^2\Delta \ket{1}\!\bra{1}\otimes\rho, \\
   \Tilde{\mathcal{U}}_{\ket{1}\!\bra{1}\otimes\rho, \Delta}(\sigma_z\otimes\chi) & = \cos^2\Delta \sigma_z\otimes\chi +\sin^2\Delta \ket{1}\!\bra{1}\otimes\rho, \\
   \Tilde{\mathcal{U}}_{\ket{1}\!\bra{1}\otimes\rho, \Delta}(\mathbb{I}\otimes\chi)&  = \cos^2\Delta \mathbb{I}\otimes\chi +\sin^2\Delta \ket{1}\!\bra{1}\otimes\rho, \\
   \Tilde{\mathcal{U}}_{\ket{1}\!\bra{1}\otimes\rho, \Delta}(\ket{1}\!\bra{1}\otimes\rho) & = \ket{1}\!\bra{1}\otimes\rho.
\end{align}
Therefore, the exact output of controlled-DME can be computed as
\begin{equation}\label{eq:qpca1}
\begin{aligned}
    &\Tilde{\mathcal{U}}_{\ket{1}\!\bra{1}\otimes\rho, t}\left(\gamma\otimes\chi\right) \\&= \frac{1}{2}\cos^{2n}(t/n)
    \begin{bmatrix}
        1+z & (x-iy)(1+ir\tan(t/n))^n \\ (x+iy)(1-ir\tan(t/n))^n & 1-z \\
    \end{bmatrix} 
    \otimes \chi \\
    &\qquad +\left(1-\cos^{2n}(t/n)\right) \ket{1}\!\bra{1}\otimes\rho,
\end{aligned}
\end{equation}
for control qubit $\gamma = \frac{\mathbb{I}+x \sigma_x+y\sigma_y+z\sigma_z}{2}$. In the quantum phase estimation example, $(x, y, z)=(1, 0, 0)$.
The trace distance between Eq.~\eqref{eq:qpca1} and
\begin{equation}
    {\mathcal{U}}_{\ket{1}\!\bra{1}\otimes\rho, t}\left(\frac{\mathbb{I}+\sigma_x}{2}\otimes\chi\right)=\frac{1}{2}
    \begin{bmatrix}
        1 & e^{irt} \\ e^{-irt} & 1 \\
    \end{bmatrix} 
    \otimes \chi,
\end{equation}
is simplified to a qubit representation as
\begin{equation}\label{eq:qpca2}
\begin{aligned}
    &\varepsilon_{\rho,t}(\chi)\\&\coloneq\frac{1}{2}\left\|{[\Tilde{\mathcal{U}}_{\ket{1}\!\bra{1}\otimes\rho, t}-{\mathcal{U}}_{\ket{1}\!\bra{1}\otimes\rho, t}]\left(\frac{\mathbb{I}+\sigma_x}{2}\otimes\chi\right)}\right\|_1 \\
    &= \frac{1}{2} \left\|{\frac{1}{2}
    \begin{bmatrix}
        \cos^{2n}(t/n)-1 & \cos^{n}(t/n)(\cos(t/n)+ir\sin(t/n))^n - e^{irt}\\ \cos^{n}(t/n)(\cos(t/n)-ir\sin(t/n))^n - e^{-irt} & (1-2r)(\cos^{2n}(t/n)-1) \\
    \end{bmatrix} }\right\|_1\\
    &\qquad +\frac{1}{2}(1-r)\left(1-\cos^{2n}(t/n)\right).
\end{aligned}
\end{equation}
It is worth noting that $\cos^{n}(t/n)\geq1-\frac{t^2}{2n}$ and $\cos^{2n}(t/n)\geq1-\frac{t^2}{n}$ for further derivation. To find the upper bound of the trace distance, we first bound the difference in \emph{phase} element.
Let us define
\begin{equation}
    A_{r,t/n}e^{i\phi_{r,t/n}}\coloneq\cos(t/n)+ir\sin(t/n),
\end{equation}
or equivalently,
\begin{equation}
    A_{r,t/n} \coloneq \sqrt{1-(1-r)^2\sin^2(t/n)},~~\phi_{r,t/n}\coloneq\tan^{-1}(r\tan(t/n)),
\end{equation}
such that
\begin{equation}
    \left\lvert{\phi_{r, t/n}-\frac{rt}{n}}\right\rvert\leq\frac{2}{\pi}(1-r)\frac{t^2}{n^2},~~\cos(t/n)\leq A_{r, t/n}\leq 1 ,
\end{equation}
for $n\geq \frac{2t}{\pi}$. We then have
\begin{align}
    &\left\lvert{(e^{-irt/n}\cos(t/n)(\cos(t/n)+ir\sin(t/n)))^n - 1}\right\rvert^2 \notag \\
    & \leq 1+A_{r, t/n}^{2n}\cos^{2n}(t/n)-2A_{r, t/n}^{n}\cos^{n}(t/n)\cos\!\left(\frac{2}{\pi}(1-r)\frac{t^2}{n}\right) \\
    & \leq 1+A_{r, t/n}^{2n}\cos^{2n}(t/n)-2A_{r, t/n}^{n}\cos^{n}(t/n)\left(1-\frac{2}{\pi^2}(1-r)^2\frac{t^4}{n^2}\right) \\
    &=(1-A_{r, t/n}^{n}\cos^{n}(t/n))^2+A_{r, t/n}^{n}\cos^{n}(t/n)\frac{{4}(1-r)^2t^4}{\pi^2n^2} \\
    & \leq (1-\cos^{2n}(t/n))^2+\cos^{n}(t/n)\frac{{4}(1-r)^2t^4}{\pi^2n^2} \\
    & \leq (1-\cos^{2n}(t/n))^2+\left(\frac{{2}(1-r)t^2}{\pi n}\right)^2 \\
    & \leq \left(1+\left[\frac{{2}(1-r)}{\pi}\right]^2\right)^2\left(\frac{t^2}{n}\right)^2 ,
\end{align}
and
\begin{multline}
    \left\lvert{(e^{-irt/n}\cos(t/n)(\cos(t/n)+ir\sin(t/n)))^n - 1}\right\rvert^2
    \\
    \geq(1-A_{r, t/n}^{n}\cos^{n}(t/n))^2
    \geq(1-\cos^n(t/n))^2.
\end{multline}
Since the determinant is negative, the first term of the right-hand side of~\eqref{eq:qpca2} can be bounded by the following: 
\begin{align}
    \|{\cdots}\|_1^2&\leq r^2(1-\cos^{2n}(t/n))^2+4\left(1+\left[\frac{{2}(1-r)}{\pi}\right]^2\right)^2\left(\frac{t^2}{n}\right)^2 \\
    &\leq\left(4\left(1+\left[\frac{{2}(1-r)}{\pi}\right]^2\right)^2+r^2\right)\left(\frac{t^2}{n}\right)^2.
\end{align}
Finally, we obtain upper bound of \eqref{eq:qpca2},
\begin{equation}
    \varepsilon_{\rho,t}(\chi)\leq\left[\frac{1-r}{2}+\sqrt{\left(\frac{{2}(1-r)}{\pi}+1\right)^2+\frac{r^2}{4}}\right]\frac{t^2}{n}\leq\left(\frac{3}{2} + \frac{4}{\pi^2}\right)\frac{t^2}{n},
\end{equation}
which is about $1.91{t^2}/{n}$.

In qPCA, $j$th controlled-DME is applied to eigenstates for $t_j=(2\pi)2^{j-1}$ time at the cost of $m$ program state $\rho$. Therefore, the total trace distance $\varepsilon(\chi)$ of the qPCA output is
\begin{equation}
    \varepsilon(\chi)\leq\sum_{j=1}^{T}\varepsilon_{\rho, t_j}(\chi)\leq\sum_{j=1}^{T}2(2\pi)^2\frac{4^{j-1}}{m}\leq\frac{8\pi^2}{m}\frac{4^{T}-1}{3}.
\end{equation}
The total simulation time $t$ and number of program states $n$ are
\begin{align}
    t &= \sum_{j=1}^T (2\pi)2^{j-1} = 2\pi(2^T-1), \\
    n &= Tm, 
\end{align}
respectively. Therefore, we obtain the final upper bound:
\begin{equation}
    \varepsilon(\rho) \leq \sum_{i} r_i \varepsilon(\chi_i) = \frac{2}{3} \frac{t(t+4\pi)}{n}\log_2t \in \Tilde{O}(t^2/n).
\end{equation}

\end{document}